\documentclass[11pt]{amsart}

\usepackage{tikz}
\usepackage{csquotes}
\usepackage{amssymb}
\usepackage{stmaryrd}
\usepackage{subcaption}
\usepackage{algorithm}
\usepackage{algpseudocode}

\usepackage{fullpage}

\usepackage[T5,T1]{fontenc}
\usepackage[utf8]{inputenc}

\usepackage{hyperref}
\usepackage{cleveref}

\theoremstyle{plain}
\newtheorem{theorem}{Theorem}[section]
\newtheorem{lemma}[theorem]{Lemma}
\newtheorem{corollary}[theorem]{Corollary}
\newtheorem{proposition}[theorem]{Proposition}

\theoremstyle{definition}
\newtheorem{definition}[theorem]{Definition}

\theoremstyle{remark}
\newtheorem{remark}[theorem]{Remark}

\tikzstyle{vertex}=[circle,fill=black,minimum size=7pt,inner sep=0pt]
\tikzstyle{bigvertex}=[circle,draw,thick,,fill=black!20,minimum size=20pt,inner sep=0pt]
\tikzstyle{matching edge}=[blue, ultra thick]
\tikzstyle{non matching edge}=[red]
\definecolor{lavenderindigo}{rgb}{0.58, 0.34, 0.92}
\definecolor{amber}{rgb}{1.0, 0.75, 0.0}

\begin{document}

\title[Constrained path-finding and structure from acyclicity]{Constrained
  path-finding and structure from acyclicity}
\author{{\fontencoding{T5}\selectfont Lê Thành Dũng Nguyễn}}
\thanks{Part of this work was carried out when the author was a student at École
  normale supérieure de Paris. After the author joined the Laboratoire
  d'informatique de Paris Nord and continued working on this paper,
  he was at first partially supported by the now finished ANR
  project Elica (ANR-14-CE25-0005).}
\address{Laboratoire d'informatique de Paris Nord, Villetaneuse, France}
\email{nltd@nguyentito.eu}
\urladdr{https://nguyentito.eu/}

\maketitle

\begin{abstract}
  This note presents several results in graph theory inspired by the author's
  work in the proof theory of linear logic; these results are purely
  combinatorial and do not involve logic.
  
  We show that trails avoiding forbidden transitions, properly arc-colored
  directed trails and rainbow paths for complete multipartite color classes can
  be found in linear time, whereas finding rainbow paths is NP-complete for any
  other restriction on color classes. For the tractable cases, we also state new
  structural properties equivalent to Kotzig's theorem on the existence of
  bridges in unique perfect matchings.

  Another result on graphs equipped with unique perfect matchings that we prove
  here is the combinatorial counterpart of a theorem due to Bellin in linear
  logic: a connection between blossoms and bridge deletion orders.
\end{abstract}

\section{Introduction}

Many problems which consist of finding a \emph{path} or \emph{trail} (see next
page for the terminology) under some constraints between two given vertices are
equivalent to the \emph{augmenting path} problem for matchings, and thus
tractable. The many-one reductions involved often preserve structural
properties; Kotzig's theorem on the existence of bridges in unique perfect
matchings~\cite{kotzig_z_1959} thus yields equivalent \enquote{structure from
  acyclicity} theorems associated to these problems
(see~\cite{szeider_theorems_2004}). That is, the absence of constrained cycles
or closed trails entails the positive existence of some structure in the graph.
(Recall that a perfect matching is unique if and only if it admits no
\emph{alternating cycle}.)

A little-known fact is that this family of equivalent problems has a
representative in proof theory, a fact which has led researchers in logic to
independently rediscover equivalent versions of Kotzig's theorem. This
connection between the theory of \emph{proof net correctness} in \emph{linear
  logic}~\cite{girard_linear_1987} and perfect matchings was first noticed by
Retoré~\cite{retore_handsome_2003}, but remained unexploited for a long time. We
applied it in a previous paper~\cite{nguyen_unique_2018} in order to make
progress on open problems concerning proof nets. Conversely, here we demonstrate
some benefits of this bridge between logic and graphs for pure graph theory.
Some of the results we present here were directly motivated by logical
applications and are used either in~\cite{nguyen_unique_2018}, in its upcoming
journal version, or in our more recent work~\cite{nguyen_proof_2020}.

The paper can be divided into two distinct subthemes:
\begin{itemize}
\item The first concerns the structure of unique perfect matchings, in
  particular their \enquote{inductive} or \enquote{hereditary} characterization
  suggested by Kotzig's theorem. We relate this with the
  \emph{blossoms}~\cite{edmonds_paths_1965} which appear in combinatorial
  matching algorithms. This was discovered by directly transposing a theorem on
  proof nets, due to Bellin~\cite{bellin_subnets_1997}, using the reductions
  defined in our previous work~\cite{nguyen_unique_2018}; it turned out that a
  notion of \enquote{dependency} investigated by logicians corresponded exactly
  to blossoms. The result already appeared in the conference version of the
  latter paper, but we present it here in a stand-alone way for a broader
  audience of graph theorists, using the direct proof
  from~\cite[Appendix~D]{nguyen_unique_2018}.
\item The second is a study of some constrained path-finding problems, which
  covers the results that we have previously announced at the
  CTW'18\footnote{Cologne-Twente Workshop on Graphs and Combinatorial
    Optimization 2018.} workshop. We show, for most of these studied problems,
  that they belong to the aforementioned family. This leads to algorithms to
  solve them and \enquote{structure from acyclicity} properties. In one case, we
  establish a \enquote{dichotomy theorem} between a tractable case -- which is
  indeed equivalent to augmenting paths in matchings -- and a NP-complete case.
  A key ingredient in all of those results is a simple \emph{edge-colored line
    graph} construction; here the input of proof nets is that this construction
  was discovered by dissecting Retoré's aforementioned
  work~\cite{retore_handsome_2003}, in which it implicitly occurs. We also
  prove, at the end of the paper, a simple NP-completeness result that does not
  rely on this construction.
\end{itemize}

The first item will be treated in \cref{sec:blossoms}, while the second one --
whose results -- will be
the subject of the rest of the paper. For the remainder of the introduction, we
will explain in more detail the contents of this second item, with the following
table providing a summary of the discussion. (Our contributions, marked in bold,
fill some gaps and thus answer several natural questions.)

  \begin{center}
      \begin{tabular}{ll}
        \hline
        & Time complexity / additional results \\
        \hline
        Path avoiding forbidden transitions & NP-complete with dichotomy result~\cite{szeider_finding_2003} \\
        Trail avoiding forbidden transitions & \textbf{Linear with structural theorem} \\
        Properly colored path & Linear with structural theorem (cf.\ \cite{bang-jensen_digraphs._2009}) \\
        Properly colored trail & Linear with structural theorem~\cite{abouelaoualim_paths_2008} \\
        Rainbow path/trail\footnotemark{} & NP-complete~\cite{chakraborty_hardness_2011}, \textbf{with dichotomy result}\\
        & \textbf{and structural theorem for the tractable case}\\
        Properly colored directed path & NP-complete~\cite{gourves_complexity_2013} \\
        Properly colored directed trail & \textbf{Linear} (already polynomial in~\cite{gourves_complexity_2013}) \\
        Alt.\ directed path for matching\footnotemark{} & \textbf{NP-complete}\\
        \hline
      \end{tabular}
  \end{center}

\addtocounter{footnote}{-1}
\footnotetext{The existence of a rainbow path is equivalent to the existence of a rainbow trail between two vertices.}
\addtocounter{footnote}{1}
\footnotetext{Alternating directed trails for a perfect matching -- with the
  right definition of the latter in a directed graph -- are a special case of
  properly colored directed trail, so the linear algorithm mentioned in the
  penultimate row applies. However, NP-hardness works the other way around
  (while membership in NP is always trivial here): our result on alternating
  directed paths is slightly stronger than the one on properly colored paths
  in~\cite{gourves_complexity_2013} (though we derive the former quite easily
  from the latter).}

\textbf{An important terminological note:} following a common usage (see e.g.\
\cite[Section~1.4]{bang-jensen_digraphs._2009}), a \emph{path} is a walk without
repeating \emph{vertices} and a \emph{trail} is a walk without repeating
\emph{edges}; a \emph{cycle} (resp.\ \emph{closed trail}) is a closed walk
without repeating vertices (resp.\ edges). Paths (resp.\ cycles) are trails
(resp.\ closed trails), but the converse does not always hold.

\subsection{Edge-colored graphs}

From an assignment of colors to the edges of a graph, one can define either
\emph{local} or \emph{global} constraints:

\begin{itemize}
\item In a \emph{properly colored} (PC) path (see \cite[Chapter
  16]{bang-jensen_digraphs._2009}) or trail (see
  \cite{abouelaoualim_paths_2008}), \emph{consecutive} edges must have different
  colors. Both can be found in linear time by reduction to augmenting paths, and
  conversely augmenting paths are a special case of both these problems. The
  structural result for PC cycles is Yeo's theorem on cut vertices separating
  colors~\cite[\S 16.3]{bang-jensen_digraphs._2009}.
\item In a \emph{rainbow} (also called \emph{heterochromatic} or
  \emph{multicolored}) path, \emph{all} edges have different colors. Finding a
  rainbow path is NP-complete~\cite{chakraborty_hardness_2011} in the general
  case; there is also an algorithm running in $2^k n^{O(1)}$ time and polynomial
  space for $k$ colors and $n$ vertices~\cite{kowalik_finding_2016}. Let us also
  mention the related subject of \emph{rainbow connectivity}\footnote{An
    edge-colored graph is rainbow connected if all pairs of vertices can be
    joined by a rainbow path; a common question is coloring the edges of a graph
    with a minimum number of colors to make it rainbow connected.}, that has
  been an active area of research (see e.g.\ the PhD
  thesis~\cite{lauri_chasing_2016}) since its introduction in
  \cite{chartrand_rainbow_2008}.
\end{itemize}

For rainbow paths, we investigate whether restrictions on the shape of the
\emph{color classes} -- that is, the subgraphs induced by all edges of a given
color -- make the problem tractable, and we establish a dichotomy: there is a
single case which is not NP-hard (\cref{dichotomy-thm}), and it can be solved in linear
time. This tractable case also exhibits structure from acyclicity
(\cref{rainbow-structure}).

A special case of this structural theorem had been previously proved in Retoré's
PhD thesis \cite[Chapter~2]{retore_reseaux_1993}, for a class of graphs he
called \enquote{aggregates}, in an early attempt to extract the graph-theoretic
content of the theory of proof net correctness. Aggregates are edge-colored
graphs whose color classes are complete bipartite; we will call them
\emph{bipartite decompositions} instead,
following~\cite{alon_multicolored_1991}. Our result generalizes this to
\emph{multipartite decompositions} (\cref{def:multidec}).

\subsection{Forbidden transitions}

A very general notion of \emph{local} constraints is to simply forbid some pairs
of edges from occuring consecutively in a path -- what has been called a
\emph{path avoiding forbidden transitions} (we will also speak of
\emph{compatible paths}). Finding a compatible \emph{path} has been proven to be
NP-complete in general~\cite{szeider_finding_2003}, with a dichotomy theorem
(the tractable case covers in particular properly colored cycles). However, the
question for compatible \emph{trails} does not seem to have been asked before in
its full generality (despite previous work on \emph{Eulerian trails} avoiding
forbidden transitions~\cite[Chapter~VI]{fleischner_eulerian_1990}).

We show that it is, again, part of our family of equivalent problems; more
precisely compatible trails can be found with time complexity \emph{linear} in
the number of \emph{allowed} transitions (\cref{compatible-trail}). The
associated structural result (\cref{forbidden-transition-bridge}) entails, as a
corollary, a new proof of the one for properly colored
trails~\cite[Theorem~2.4]{abouelaoualim_paths_2008}.



\subsection{Arc-colored directed graphs}

The notion of properly colored path/trail also makes sense in \emph{directed
  graphs} equipped with a coloring of their arcs (directed edges). We first look
at PC directed \emph{trails} and show that they can be found by a simple
breadth-first search algorithm, in linear time (\cref{thm:lolbfs}). There was
already a polynomial time algorithm for this problem in the
literature~\cite[Theorem~1]{gourves_complexity_2013}; we examine more precisely
its asymptotic complexity and argue that it is worse than ours
(\cref{rem:reload}).

In contrast, deciding the existence of a PC directed path is NP-complete, even
when we assume that the input has no PC circuit, as shown
in~\cite[Theorem~5]{gourves_complexity_2013}. From this, we deduce that finding
an alternating \emph{circuit} (i.e.\ directed cycle) for a (certain notion of)
perfect matching in a directed graph is NP-complete (\cref{thm:alt-circuit-np})
(and as a corollary, this is also the case for directed paths). Though our
reduction is pretty straightforward, and the statement itself might seem
unsurprising, we have proved it as a step in a possibly much more significant
result that we obtain in~\cite{nguyen_proof_2020}: the refutation of an old
conjecture in proof theory, namely the equivalence between pomset logic and
system BV, see~\cite{guglielmi_system_2007}\footnote{Though this
  paper~\cite{guglielmi_system_2007} was published in 2007, it has its origins
  in a technical report from the late 1990's which already contained this
  conjecture.} (though at the time of writing, we are not entirely sure that our
refutation is correct). This is why we propose in~\cite{nguyen_proof_2020} an
alternative and more direct NP-hardness proof for alternating circuits by
reduction from CNF-SAT, in order to make the proofreading of this latter paper
more self-contained and avoid relying on~\cite{gourves_complexity_2013}.

\subsection{The edge-colored line graph construction}

The construction underlying these results (except for those on arc-colored
digraphs) is the following. Given a set of forbidden transitions on a graph $G$,
one can consider the subgraph of its line graph $L(G)$ containing only the edges
corresponding to allowed transitions. To keep the information of the vertices in
$G$, one adds a coloring of the edges of this subgraph of $L(G)$: this is what
we call the \emph{edge-colored line graph} $L_{EC}(G)$ (\cref{def:lec}).

The results on trails avoiding forbidden transitions immediately follow from the
properties of $L_{EC}(G)$ together with the known results on PC paths. For this
specific purpose one can also use a variant $L_{PM}(G)$ defined using perfect
matchings (\cref{def:lpm}). As will be explained in the journal version
of~\cite{nguyen_unique_2018}, this $L_{PM}$ construction is at work implicitly
Retoré's paper~\cite{retore_handsome_2003}, and indeed, it is by attempting to
understand and generalize Retoré's reduction from proof nets to perfect
matchings that we were led to define the edge-colored line graph -- although it
could undoubtedly have been discovered without this inspiration, since it seems
to be a rather natural construction.

As for the dichotomy theorem for rainbow paths, it also relies mainly on the
$L_{EC}$ construction, combined with proof techniques
from~\cite{szeider_finding_2003} and~\cite{chakraborty_hardness_2011}, in
particular a characterization of complete multipartite graphs by excluded
vertex-induced subgraphs~\cite[Lemma~7]{szeider_finding_2003}.

\subsection{Other contributions}

For some of the aforementioned constraints, we investigate the problem of
finding a constrained path/trail visiting some prescribed intermediate
vertex/edge. The general pattern is NP-completeness in the general case and
tractability under an acyclicity assumption. The initial motivation for this was
as a subroutine to compute the \enquote{kingdom ordering} on unique perfect
matchings studied in \cref{sec:blossoms}, which has a rather meaningful logical
counterpart (see~\cite{nguyen_unique_2018}) as already mentioned.

We also make minor contributions to the theory of properly colored paths and
cycles, using slight variations on pre-existing proofs, in \cref{sec:remarks}.
Specifically, we show how to find properly colored cycles in linear time, and
generalize a reduction from 2-edge-colored graphs to matchings.

\subsection{Acknowledgments}

Thanks a lot to Christoph Dürr for drawing my attention to the topic of
constrained path-finding (accidentally, through his enthusiasm for programming
contest problems), and for his feedback on early iterations (2016--2017) of this
work. Thanks also to {\fontencoding{T5}\selectfont Nguyễn Kim Thắng} for letting
me work on this side project during my internship with Christoph and him.

\tableofcontents

\section{Bridges and blossoms in unique perfect matchings}
\label{sec:blossoms}

As mentioned in the introduction, we present here a new graph-theoretic result
directly coming from the theory of proof nets. We will also take the
opportunity, in \cref{subsec:decomposing} and \cref{subsec:algo-aspects}, to
recall well-known properties of perfect matchings which the rest of the paper
will rely on.

Let $G = (V,E)$ be a graph. Recall that a \emph{matching} $M$ of $G$ is a subset
of edges $M \subseteq E$ such that each vertex is incident to at least one edge.
If each vertex is incident to \emph{exactly one} edge, then $M$ is a
\emph{perfect matching}. We now recall a few properties of \emph{unique} perfect
matchings.

\begin{lemma}[Berge's lemma for cycles\footnotemark{}] 
  \label{berge-upm}
  Let $G$ be a graph and $M$ be a perfect matching of $G$. If $C$ is an
  \emph{alternating cycle} for $M$ -- i.e.\ $C$ alternates between edges in $M$
  and outside of $M$ -- then the symmetric difference $M \triangle C$ is another
  perfect matching.

  Conversely, if $M' \neq M$ is a different perfect matching, then $M \triangle
  M'$ is a vertex-disjoint union of cycles, which are alternating for both $M$
  and $M'$.

  Thus, a perfect matching is unique if and only if it admits no alternating
  cycle.
\end{lemma}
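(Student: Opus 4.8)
The plan is to prove the three assertions of \cref{berge-upm} in sequence, since the third is an immediate consequence of the first two.

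\medskip

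For the first assertion, let $C$ be an alternating cycle for $M$, and set $M' = M \triangle C$. I would verify directly that $M'$ is a perfect matching by checking the degree of each vertex $v$. If $v$ does not lie on $C$, then $v$ is incident to exactly the same matching edge in $M'$ as in $M$, so it has degree $1$. If $v$ lies on $C$, then because $C$ is a cycle, exactly two edges of $C$ are incident to $v$; since $C$ alternates, one of these is in $M$ and one is not. Taking the symmetric difference with $C$ removes the former from the matching and adds the latter, so $v$ remains incident to exactly one edge of $M'$. Hence $M'$ is perfect, and $M' \neq M$ since $C$ is nonempty.

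\medskip

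For the second assertion, suppose $M' \neq M$ is another perfect matching, and consider the edge set $M \triangle M'$. Every vertex $v$ is incident to exactly one edge of $M$ and exactly one edge of $M'$; these may coincide (in which case $v$ is incident to no edge of $M \triangle M'$) or differ (in which case $v$ is incident to exactly two edges of $M \triangle M'$, one from each matching). So in the subgraph $(V, M \triangle M')$ every vertex has degree $0$ or $2$, which means its connected components of nonzero size are cycles, vertex-disjoint by construction. Along each such cycle the edges alternate between $M$ and $M'$: two consecutive edges sharing a vertex $v$ cannot both be in $M$ (as $M$ is a matching) nor both in $M'$, so they belong to different matchings. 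Since $M' \neq M$, at least one such cycle exists.

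\medskip

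The third assertion then follows: if $M$ admits an alternating cycle $C$, the first part produces a distinct perfect matching $M \triangle C$, so $M$ is not unique; conversely, if $M$ is not unique, pick any other perfect matching $M'$, and the second part exhibits an alternating cycle for $M$ inside $M \triangle M'$. I do not expect any genuine obstacle here — the argument is a routine degree-counting bookkeeping — the only point requiring a little care is making the parity/alternation argument precise at vertices lying on the cycle(s), and being careful about the (degenerate) possibility that $M$ and $M'$ share edges, which simply contribute nothing to the symmetric difference.
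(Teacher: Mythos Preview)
Your proof is correct and is the standard degree-counting argument for this classical fact. The paper does not actually prove this lemma: it is stated without proof, with only a footnote remarking that it is a simple variant of Berge's lemma for paths (\cref{berge-augmenting}). So there is nothing to compare against; your argument fills in exactly the routine verification that the paper omits.
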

\footnotetext{Though this statement does not appear in~\cite{berge_two_1957}, it
  is a simple variant of Berge's lemma for paths (\cref{berge-augmenting}).}

\begin{theorem}[Kotzig~\cite{kotzig_z_1959}]
  \label{kotzig}
  Let $G$ be a graph with a \emph{unique} perfect matching $M$. Then $M$
  contains a \emph{bridge} of $G$, i.e.\ an edge whose removal incrases the
  number of connected components.
\end{theorem}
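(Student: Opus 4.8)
My plan is to argue by induction on $|V(G)|$, exploiting the characterization from \cref{berge-upm}: since $M$ is unique, $G$ contains no alternating cycle. The base case is trivial (a single edge is a bridge). For the inductive step, I would pick a leaf-like structure in the ``alternating reachability'' sense. Concretely, consider the longest alternating path $P = v_0 v_1 \dots v_k$ in $G$ starting with a matching edge $v_0 v_1 \in M$; such a path exists and, because $G$ has no alternating cycle, it genuinely cannot be extended at the $v_k$ end. The key claim I want to establish is that the last matching edge on this path, or rather the pendant matching edge near $v_0$, is a bridge — more carefully, I expect the right move is to look at the endpoint $v_0$ of a maximal alternating path and show that the matching edge $e = v_0 v_1$ incident to it is a bridge.

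The heart of the argument is the following local analysis. Suppose toward a contradiction that $e = v_0v_1 \in M$ is not a bridge; then $e$ lies on some cycle $Z$ in $G$. I would then try to ``rectify'' $Z$ into an alternating cycle by surgery along $M$, which would contradict uniqueness via \cref{berge-upm}. The standard way to do this: every non-matching edge $xy$ on $Z$ can be ``matched up'' using the fact that $x$ and $y$ each have a unique matching partner; chasing these partners either closes up an alternating cycle directly, or produces a new matching edge incident to $v_0$ (impossible, since $v_0$ is matched only by $e$) or extends the maximal alternating path $P$ (contradicting maximality). Managing this case analysis cleanly — ensuring the surgery terminates and actually yields a \emph{cycle} rather than a closed walk with repeated vertices — is where the real work lies, and it is the step I expect to be the main obstacle.

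An alternative, possibly cleaner route is to induct by \emph{contracting} rather than deleting: take a maximal alternating path $P$ as above, let $e = v_0 v_1$ be its initial matching edge, and argue that $v_0$ has degree $1$ in $G$ — if $v_0$ had another neighbour $w$, the edge $v_0 w \notin M$ could be prepended to $P$ after re-routing through $w$'s matching partner, again contradicting maximality or producing an alternating cycle. If $\deg(v_0) = 1$ then $e$ is trivially a bridge and we are done immediately; and if some vertex always has degree $1$ in every graph with a unique perfect matching, the theorem follows. So the crux reduces to: \textbf{in a graph with a unique perfect matching, the endpoint of any maximal alternating path (starting with a matching edge) is a pendant vertex.} I would prove this by the path-extension/cycle-creation dichotomy sketched above, and then Kotzig's theorem is immediate. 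The delicate point, again, is verifying that when the re-routing argument loops back, it produces a genuine alternating cycle (all vertices distinct) so that \cref{berge-upm} applies — handling the first repeated vertex along the constructed walk is the technical lemma to nail down.
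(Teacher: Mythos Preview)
Your central claim is false, and a small example kills both variants of the plan. Take two triangles on $\{1,2,3\}$ and $\{4,5,6\}$ joined by the single edge $34$, with $M = \{12,34,56\}$. The only cycles are the two (odd) triangles, so there is no $M$-alternating cycle and $M$ is the unique perfect matching. Every vertex has degree at least $2$, so no endpoint of any alternating path is pendant. A longest alternating path is $1,2,3,4,5,6$; its terminal matching edges $12$ and $56$ each lie on a triangle and are \emph{not} bridges. The sole bridge in $M$ is $34$, sitting in the middle of the path. So neither ``the endpoint $v_0$ is pendant'' nor ``the terminal matching edge is a bridge'' holds, and your proposed ``rectification'' of the triangle $1,2,3$ into an alternating cycle is impossible because $G$ has no alternating cycle at all --- that is precisely the obstruction, not a technicality to be managed.

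The underlying issue is that an $M$-alternating closed \emph{walk} need not contain an $M$-alternating \emph{cycle} (this is exactly the blossom phenomenon), so the ``path-extension/cycle-creation dichotomy'' you rely on does not close up. Direct extremal proofs of Kotzig's theorem via a longest alternating path do exist, but they require a further extremal choice beyond the endpoints and do not reduce to finding a pendant vertex. The paper takes a completely different route: it states the theorem with a reference to Kotzig, and later (\S4.3) \emph{derives} it from \cref{forbidden-transition-bridge} (hence ultimately from Yeo's theorem) by a minimal-counterexample argument that passes through $2$-edge-connected components. If you want a self-contained argument, you will need either that detour through Yeo, or a substantially more careful extremal argument than the one sketched here.
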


\subsection{Decomposing unique perfect matchings}
\label{subsec:decomposing}

Kotzig's theorem gives us a simple way to check that $M$ is the only perfect
matching in $G$: find some $(u,v) \in M$ which is a bridge of $G$, delete its
two endpoints from the graph, and repeat on the vertex-induced subgraph $G[V
\setminus \{u,v\}]$, in which $M \setminus \{(u,v)\}$ is a perfect matching. The
matching is unique if and only if we end up with the empty graph. Equivalently,
we may state:

\begin{proposition}
  Among the set of pairs $(G,M)$ of a graph $G$ and a perfect matching $M$ of
  $G$, the subset of those for which $M$ is the unique perfect matching of $G$
  is the smallest subset:
  \begin{itemize}
  \item containing the empty graph;
  \item such that, if $(u,v) \in M$ is a bridge in $G = (V,E)$ and $(G[V
    \setminus \{u,v\}], M \setminus \{(u,v)\})$ is in the subset, then $(G,M)$
    also is.
  \end{itemize}
\end{proposition}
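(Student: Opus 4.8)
The plan is to prove the two inclusions that together express ``the subset $\mathcal U$ of pairs $(G,M)$ for which $M$ is the unique perfect matching of $G$ is the smallest subset with the two stated properties'': first that $\mathcal U$ itself contains the empty graph and is closed under the bridge rule, and second that $\mathcal U$ is contained in every subset enjoying these two properties.

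\emph{$\mathcal U$ satisfies the closure conditions.} The empty graph lies in $\mathcal U$ since its only perfect matching is $\emptyset$. For the inductive rule, assume $(u,v)\in M$ is a bridge of $G=(V,E)$ and that $M\setminus\{(u,v)\}$ is the unique perfect matching of $G[V\setminus\{u,v\}]$; we must show that $M$ is the unique perfect matching of $G$. By \cref{berge-upm} it is enough to check that $G$ has no alternating cycle for $M$. Suppose $C$ were one. A cycle stays connected after deletion of any single one of its edges, so it cannot contain a bridge; hence $(u,v)\notin C$. But an alternating cycle meets each of its vertices in exactly one edge of $M$, and $(u,v)$ is the only $M$-edge incident to $u$ (and the only one incident to $v$); so $(u,v)\notin C$ forces $u\notin V(C)$ and $v\notin V(C)$. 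Therefore $C$ is an alternating cycle for $M\setminus\{(u,v)\}$ inside $G[V\setminus\{u,v\}]$, contradicting uniqueness there. So no such $C$ exists, and $(G,M)\in\mathcal U$.

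\emph{$\mathcal U$ is minimal.} Let $\mathcal S$ be any subset containing the empty graph and closed under the bridge rule, and let $(G,M)\in\mathcal U$; we show $(G,M)\in\mathcal S$ by induction on $|V(G)|$. If $G$ is empty this holds by hypothesis. Otherwise \cref{kotzig} gives a bridge $(u,v)\in M$ of $G$. Put $G'=G[V\setminus\{u,v\}]$ and $M'=M\setminus\{(u,v)\}$; clearly $M'$ is a perfect matching of $G'$. It is in fact the unique one: if $M''$ were another perfect matching of $G'$, then (since $u,v\notin V(G')$, so $(u,v)\notin M''$, while $(u,v)\in E(G)$) the set $M''\cup\{(u,v)\}$ would be a perfect matching of $G$ distinct from $M=M'\cup\{(u,v)\}$, contradicting the uniqueness of $M$. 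Hence $(G',M')\in\mathcal U$, and since $|V(G')|=|V(G)|-2<|V(G)|$, the induction hypothesis yields $(G',M')\in\mathcal S$; applying the closure rule to the bridge $(u,v)$ of $G$ then gives $(G,M)\in\mathcal S$.

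The only mildly delicate point is in the closure direction: ruling out an alternating cycle through the bridge $(u,v)$ requires combining the elementary fact that a cycle never contains a bridge with the observation that, because $(u,v)$ is the unique $M$-edge at each of its endpoints, a would-be alternating cycle avoiding the \emph{edge} $(u,v)$ must also avoid the \emph{vertices} $u$ and $v$ altogether; this is what lets us push the contradiction down to the smaller graph. Everything else is routine bookkeeping, and no tool beyond \cref{berge-upm} and \cref{kotzig} is needed.
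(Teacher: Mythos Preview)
Your proof is correct and follows exactly the approach the paper indicates informally in the paragraph preceding the proposition: one direction uses Kotzig's theorem (\cref{kotzig}) to supply the bridge needed for the inductive step, and the other uses Berge's lemma (\cref{berge-upm}) together with the elementary fact that a bridge lies in no cycle. The paper does not give a formal proof of this proposition, treating it as a restatement of the bridge-deletion procedure just described; your argument is precisely the natural formalization of that description.
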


\begin{remark}
  \enquote{The smallest subset such that\ldots} suggests an inductive,
  \enquote{bottom-up} characterization: the graphs equipped with unique perfect
  matchings are obtained from the empty graph by successively adding matching
  edges and potentially joining their endpoints to different connected
  components. Precise statements of this kind are given
  in~\cite[Theorem~1]{retore_handsome_2003},
  \cite[Theorem~2.4]{nguyen_unique_2018}.
\end{remark}

\begin{remark}
  This iterative bridge deletion procedure admits a variant which \emph{does not
    look at $M$}: for $G = (V,E)$, choose a bridge $(u,v) \in E$ such that the
  connected components of $u$ and $v$ in $(V, E \setminus \{(u,v)\})$ have an
  odd number of vertices (equivalently, the two new connected components in $G[V
  \setminus \{u,v\}]$ disconnected by the removal of $u$ and $v$ have an even
  number of vertices). If, by doing so iteratively, one reaches the empty graph,
  then the set of deleted bridges is the unique perfect matching of $M$. This
  gives a quasi-linear time algorithm for \emph{finding a unique perfect
    matching}, see \cite[\S2]{gabow_unique_2001} and
  \cite[\S1.2]{holm_dynamic_2018}.
\end{remark}

We will be interested in the order of deletion of the matching edges for some
execution of this procedure reaching the empty graph.

\begin{definition}
  Let $G = (V,E)$ be a graph with a unique perfect matching $M$. A \emph{bridge
    deletion ordering} is an ordering of the matching edges $M = \{(u_1, v_1),
  \ldots, (u_n, v_n)\}$ ($|V| = 2n$) such that for all $i \in \{1, \ldots, n\}$,
  $(u_i, v_i)$ is a bridge in $G[V \setminus \{u_1, v_1, \ldots, u_{i-1},
  v_{i-1}\}]$.

  We define the \emph{kingdom ordering} $\prec$, which is a partial order, as
  follows: $e \prec f$ if $e$ occurs before $f$ in all bridge deletion
  orderings of $M$ ($e, f \in M$).
\end{definition}

Bridge deletion orderings in unique perfect matchings are somewhat analogous to
perfect elimination orders in chordal graphs, with bridges instead of simplicial
vertices. The terminology \enquote{kingdom ordering} is lifted straight from
linear logic (see~\cite{bellin_subnets_1997}); it comes from the notion of
\emph{kingdom} whose graph-theoretic version we now define.

\begin{definition}
  Let $e \in M$. The result of iteratively deleting the endpoints of bridges in $M$
  \emph{except for $e$} is a non-empty vertex-induced subgraph of $G$ called the
  \emph{kingdom} of $e$.
\end{definition}

The idea is that the endpoints of $e$ must be deleted before any other matching
edge in the kingdom can be deleted in the decomposition by bridge deletion.
Thus, $e \prec f$ if and only if $f$ is in the kingdom of $e$, or equivalently,
iff the kingdom of $f$ is included in the kingdom of $e$.

\subsection{Blossoms and Bellin's theorem}

A \emph{blossom}~\cite{edmonds_paths_1965} is a cycle such that all its vertices
are matched within the cycle except for one, its \emph{root}. The \emph{stem} of
a blossom is the matching edge incident to its root. We will characterize the
kingdom ordering using the following notion:

\begin{definition}
  We say that $e$ \emph{blossom-binds} $f$, and write $e \rightarrow f$, when
  $e$ is the stem of some blossom $C$ such that $f \in M \cap C$.

  We write $\rightarrow^+$ (resp.\ $\rightarrow^*$) for the transitive (resp.\
  reflexive transitive) closure of $\rightarrow$.
\end{definition}

The main theorem of this section can now be stated:

\begin{theorem}
  \label{bellin-thm-graph}
  Let $M$ be a unique perfect matching. $\forall e,f \in M,\; e \prec f
  \Longleftrightarrow e \rightarrow^+ f$.
\end{theorem}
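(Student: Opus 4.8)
The plan is to prove the two implications separately, using the kingdom characterization of $\prec$ recalled just before the statement: $e \prec f$ iff $f$ lies in the kingdom of $e$. Both directions will be arguments about what survives the iterative bridge-deletion procedure.

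First I would prove $e \rightarrow^+ f \implies e \prec f$, and it suffices by transitivity of $\prec$ to handle a single step $e \rightarrow f$. So suppose $e = (u,v)$ is the stem of a blossom $C$ with root $r$ (say $v = r$, so $e$ is the matching edge at the root), and $f \in M \cap C$. I want to show $f$ lies in the kingdom of $e$, i.e.\ that in the course of deleting endpoints of matching bridges while protecting $e$, the edge $f$ is never deleted — equivalently, $f$ and its endpoints survive. The key observation is that in the blossom $C$, every vertex except the root $r$ is matched inside $C$; so $C$ minus the root-edge endpoints, together with the surrounding graph, keeps all the internal matching edges of $C$ on a cycle as long as neither endpoint of $e$ is removed. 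More precisely, I would argue that no edge of $C \cap M$ other than $e$ can ever be a bridge in any induced subgraph containing all of $C$'s vertices except possibly $u$: removing such an edge from $C$ (a cycle) leaves $C$ connected, so it cannot disconnect the ambient graph. Hence the only way to "enter" the cycle $C$ and start deleting its vertices is via $e$ itself — but $e$ is protected. So all matching edges of $C$, in particular $f$, remain, witnessing $f$ in the kingdom of $e$, i.e.\ $e \prec f$. A small amount of care is needed because the protected procedure is nondeterministic and because vertices of $C$ might get deleted as endpoints of bridges \emph{outside} $C$; I would rule this out by the same cycle-connectivity argument, noting that deleting a vertex $w \neq u$ of $C$ requires some edge incident to $w$ to be a bridge, and the two $C$-edges at $w$ are never bridges while $C \setminus \{u\}$ is intact.

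The converse $e \prec f \implies e \rightarrow^+ f$ is the harder direction and is where I expect the real work. The contrapositive reformulation is: if $e \not\rightarrow^+ f$ then there is a bridge deletion ordering in which $f$ (or an endpoint of $f$) is deleted before $e$. Equivalently, working with kingdoms: if $f$ is in the kingdom of $e$, I must produce a blossom chain from $e$ to $f$. I would argue by induction on the size of the kingdom of $e$, or equivalently run the protected-deletion procedure for $e$ one step at a time. When we delete the endpoints of a bridge $g = (x,y) \in M$ with $g \neq e$, the graph may split into several components, and the kingdom of $e$ is (essentially) one of them together with $g$ and its endpoints removed — so every edge in the kingdom of $e$ other than $e$ itself lies "past" some such first bridge $g$. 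The crucial structural claim is: \emph{if $g$ becomes a bridge only after protecting $e$ from deletion, then $e$ and $g$ lie together on a blossom with stem $e$ — i.e.\ $e \rightarrow g$}. Intuitively, $g$ was not a bridge in $G$ because some cycle $C$ passed through $g$; since $g$ is a bridge once $e$'s endpoints are forbidden from deletion, every such cycle $C$ must be "anchored" at $e$, and one can massage $C$ into a blossom rooted at an endpoint of $e$ containing $g$. Then by induction applied inside the kingdom of $e$ (which, after deleting $g$'s endpoints, contains the kingdom of $g$ restricted appropriately), any $f$ in the kingdom of $e$ is reachable: either $f = g$, or $f$ is in the kingdom of $g$ so $g \rightarrow^+ f$ by induction, and $e \rightarrow g \rightarrow^+ f$.

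The main obstacle, and the step I would spend the most care on, is the structural claim that "a newly-created bridge $g$ shares a blossom (with stem $e$) with $e$." Turning a cycle through $g$ that is destroyed by deleting $e$'s endpoints into a genuine blossom — matched everywhere except at the root of $e$ — requires exploiting the uniqueness of the perfect matching: by \cref{berge-upm} there are no alternating cycles, which constrains how cycles and matching edges interact and is exactly what lets one promote an arbitrary cycle to an alternating-structured blossom. I would also need to be careful about the bookkeeping between "the kingdom of $e$" and "what remains after one protected deletion step," since the kingdom is defined by deleting \emph{all} bridges except $e$, not by a single step; phrasing the induction in terms of a minimal bridge deletion ordering that delays $e$ as long as possible, rather than in terms of kingdoms directly, may make this cleaner. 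Finally, I should double-check the degenerate cases — $e$ and $f$ adjacent, blossoms of length $2$ or sharing vertices, and the root-edge conventions — but these are routine once the main claim is in place.
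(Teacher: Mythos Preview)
Your argument for the direction $e \rightarrow^+ f \implies e \prec f$ is essentially the paper's: once the blossom cycle $C$ is intact, none of its matching edges can be bridges, and the only vertex of $C$ whose matching edge lies outside $C$ is the root, which is an endpoint of the protected edge $e$. Modulo some notational slips ($e$ is the stem, so $e \notin C$; and $u$, not being in $C$, is not the vertex you want to except), this is fine.

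The converse has a genuine gap in the inductive step. You assert that any $f$ in the kingdom of $e$ either equals some bridge $g$ of $K' = K \setminus \{u,v\}$ or lies in the kingdom of such a $g$ (inside $K'$), so that induction on the smaller kingdom of $g$ yields $g \rightarrow^+ f$. This is false. Take $K$ with vertices $u,v,a_1,b_1,a_2,b_2,x,y$, matching $\{(u,v),(a_1,b_1),(a_2,b_2),(x,y)\}$, and non-matching edges $(u,b_1),(u,b_2),(y,a_1),(a_1,a_2),(a_2,x)$, with $v$ a leaf. Then $e=(u,v)$ is the unique matching bridge of $K$, so $K$ is its own kingdom. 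In $K'$ the matching bridges are $g_1=(a_1,b_1)$ and $g_2=(a_2,b_2)$, and the kingdom of each $g_i$ in $K'$ is just $\{a_i,b_i\}$; the edge $f=(x,y)$ is in neither. So your induction cannot reach $f$ via any bridge of $K'$. (The theorem still holds: the $7$-cycle $u,b_1,a_1,y,x,a_2,b_2,u$ is a blossom with stem $e$ containing $f$, so in fact $e \rightarrow f$ directly; your scheme simply fails to find it.)

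The paper avoids this by a different inductive mechanism. It removes only the \emph{edge} $e$, obtaining on each side a near-perfect matching with no matching bridge and no alternating cycle; it then proves (\cref{near-perfect-bellin}) that from the exposed vertex one can always reach every matching edge by a chain of blossom-bindings. The induction is driven by \emph{blossom shrinking}: one shows a blossom at the exposed vertex exists (this is the substantive step, proved by applying Kotzig's theorem twice via an auxiliary construction --- exactly the part you described as ``massaging a cycle into a blossom'' and left open), checks that shrinking it preserves the hypotheses, and shows that blossom-bindings in the shrunk graph lift back. This sidesteps entirely the question of which bridges of $K'$ have $f$ in their kingdom.

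Your ``structural claim'' that each bridge $g$ of $K'$ satisfies $e \rightarrow g$ is in fact true, and is essentially what the paper's \cref{lemma-bellin}(1) establishes (its proof works for any chosen matching bridge of $G[V\setminus\{u\}]$, and the blossom it produces necessarily contains that bridge). But as the counterexample shows, this claim alone does not support the induction you propose; you would still need the blossom-shrinking machinery, or something equivalent, to handle edges like $f$ that are not dominated by any single initial bridge of $K'$.
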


In a previous paper~\cite{nguyen_unique_2018}, we explained why this was
equivalent to Bellin's theorem on proof nets~\cite{bellin_subnets_1997}. This
equivalence already establishes the truth of the above theorem, but we will give
a direct proof without reference to logic. Indeed it was noted
in~\cite{nguyen_unique_2018} that translating to the setting of perfect matching
simplifies the statement -- which now involves the transitive closure of a
single relation, instead of two -- and so, accordingly, our proof should be
simpler than Bellin's.

\begin{proof}
  If $e \rightarrow f$ then $e \prec f$: $f$ cannot become a bridge as long as
  the cycle $C$ survives, and the only way to cut $C$ from the outside is to
  delete~$e$. By induction this establishes ($\Longleftarrow$).

  Conversely, let $G$ be a graph with a unique perfect matching $M$. We may
  assume w.l.o.g.\ that $e$ is the only bridge of $G$ in $M$, by restricting to
  the kingdom of $e$. Then $e$ is minimum for $\prec$ and the goal becomes to
  show that for all $f \neq e$, $e \rightarrow^+ f$. Removing the edge $e$,
  \emph{but not its endpoints}, results in two connected components which both
  have a unique \emph{near-perfect matching} (leaving one vertex unmatched)
  containing no bridge. If both these components have a single vertex, then the
  theorem is vacuously true; else, we have reduced it to the following
  \cref{near-perfect-bellin}.
\end{proof}

\begin{definition}
  We say that a vertex $u$ blossoms-binds a matching edge $f$, which we write $u
  \twoheadrightarrow f$, when $f$ is contained in a blossom with root $u$.
\end{definition}


\begin{proposition}
  \label{near-perfect-bellin}
  Let $G$ be a graph with a near-perfect matching $M$ and let $u$ be the
  unmatched vertex. Suppose $G$ has no bridge in $M$ and no alternating cycle
  for $M$. Then for all $f \in M$, there exists $g \in M$ such that $u
  \twoheadrightarrow g \rightarrow^* f$.
\end{proposition}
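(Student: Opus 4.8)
The plan is to induct on $|V(G)|$. When $M=\emptyset$ there is nothing to prove; otherwise I would pick a blossom $B$ rooted at $u$, contract it, apply the inductive hypothesis, and transport the conclusion back. The one place where the hypotheses do genuine work is the \emph{existence} of such a $B$: since $u$ is the unique $M$-exposed vertex it starts no augmenting path, so by standard matching theory the alternating search from $u$ either uncovers a blossom or stalls in a configuration ruled out by the combination of ``no bridge in $M$'' and ``no alternating cycle'' (it need not hold without the latter, e.g.\ in $K_{3,3}$ minus a vertex). I would isolate this as a preliminary lemma.

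Contract $B$ to a single vertex $u^{*}$. Then $G/B$ carries the matching $M^{*}=M\setminus(M\cap B)$ — no matching edge meets $B$ in exactly one vertex, by the blossom property — which is again near-perfect with $u^{*}$ unmatched, and $|V(G/B)|<|V(G)|$. Both hypotheses survive the contraction: a bridge of $G/B$ lying in $M^{*}$ would be a bridge of $G$ lying in $M$ (contracting a connected subgraph cannot disconnect a graph that stayed connected after deleting the edge), and an $M^{*}$-alternating cycle of $G/B$ cannot run through the unmatched vertex $u^{*}$, hence would already be an alternating cycle of $G$. By induction, in $G/B$ every $f^{*}\in M^{*}$ admits a chain $u^{*}\twoheadrightarrow g_{1}^{*}\rightarrow\cdots\rightarrow g_{k}^{*}=f^{*}$.

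It remains to lift this. For $f\in M\cap B$ we are done at once, since $B$ itself is a blossom rooted at $u$ containing $f$. For $f=f^{*}\in M^{*}$: each step $g_{i}^{*}\rightarrow g_{i+1}^{*}$ lifts verbatim, because the blossom witnessing it is rooted at an endpoint of the \emph{matching} edge $g_{i}^{*}$ and so cannot contain the unmatched vertex $u^{*}$ (an unmatched vertex can sit on a blossom only as its root), i.e.\ it is literally a blossom of $G$; thus $g_{1}\rightarrow^{*}f$ in $G$. The real work is the step $u^{*}\twoheadrightarrow g_{1}^{*}$: deleting $u^{*}$ from the witnessing blossom $D^{*}$ leaves a path $\pi\subseteq G-B$ through $g_{1}$ whose two ends attach, by non-matching edges, to vertices $\alpha,\beta$ of $B$. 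I would reclose $\pi$ into a cycle in the two possible ways — using the arc of $B$ from $\alpha$ to $\beta$ through $u$, or the arc avoiding $u$ — noting that both cycles are $M$-alternating all along $\pi$ and along their $B$-arc, so the only possible defects of alternation sit at $\alpha$ or $\beta$ (and at $u$ for the $u$-passing cycle, as required for its root), and since $B$ alternates at $\alpha$ and at $\beta$, the $u$-avoiding cycle is defective at one of them iff its $B$-arc enters that vertex along the non-matching edge of $B$. Trichotomy on the number of defects of the $u$-avoiding cycle: zero defects means it is an $M$-alternating cycle of $G$, which is excluded; one defect, at $w\in\{\alpha,\beta\}$, makes it a blossom rooted at $w\in V(B)$ whose stem $m_{w}$ (the matching edge at $w$) lies in $B$, so $u\twoheadrightarrow m_{w}\rightarrow g_{1}$; two defects forces the $u$-passing cycle to have none, making it a blossom rooted at $u$ that contains $g_{1}$, i.e.\ $u\twoheadrightarrow g_{1}$. (The degenerate cases $\alpha=u$, $\beta=u$, $\alpha=\beta$ are handled directly and likewise give a blossom at $u$ or at a vertex of $B$.) In every case one obtains $g\in M$ with $u\twoheadrightarrow g\rightarrow^{*}g_{1}$, and concatenation with $g_{1}\rightarrow^{*}f$ completes the induction.

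The two delicate points are (i) the preliminary lemma that a blossom rooted at $u$ exists — the hypotheses are exactly what makes it true — and (ii) the bookkeeping in the final trichotomy: identifying the root of the new blossom, verifying that the stem $m_{w}$ really has its far endpoint off that blossom (again using ``no alternating cycle'' to exclude the case where $\alpha,\beta$ are matching partners), and disposing of the degenerate attachments. Everything else — the base case, the inheritance of the hypotheses under contraction, and the verbatim lifting of the $\rightarrow$-steps — is routine.
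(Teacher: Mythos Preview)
Your proposal is correct and follows essentially the same route as the paper: induct on $|V(G)|$, isolate as a lemma the existence of a blossom rooted at $u$, check that the hypotheses survive contraction, lift the $\rightarrow^{*}$-chain verbatim (since those blossoms avoid the contracted vertex), and lift the step $u^{*}\twoheadrightarrow g_1$ by exactly the parity/defect trichotomy you describe --- your ``zero / one / two defects'' on the $u$-avoiding arc are precisely the paper's four parity cases on $|P_1|,|P_2|$ (both odd / mixed / both even). The only notable difference is in the existence lemma: you gesture at an Edmonds-style alternating-search argument, whereas the paper proves it by two applications of Kotzig's theorem to auxiliary graphs (first to $G-u$ to find a matching edge that is a bridge there, then to a graph $H$ obtained by splitting $u$ along that bridge); either approach works, and you rightly flag this lemma as the one place requiring care.
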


The proof of this proposition relies on the \emph{blossom shrinking} operation:
starting from the graph $G$ with a matching $M$, this consists in taking the
quotient graph $G'$ where all the vertices of the blossom have been identified;
$M$ induces a matching $M'$ in $G'$. This operation is also central in
combinatorial matching algorithms, starting with Edmonds's blossom
algorithm~\cite{edmonds_paths_1965}.

\begin{lemma}
  \label{lemma-bellin}
  Under the hypotheses of the proposition, if $M \neq \emptyset$, then:
  \begin{enumerate}
  \item There exists a blossom in $G$ for $M$ with root $u$.
  \item Let $G'$ be the graph obtained by shrinking this blossom, with induced
    matching $M' \subset M$. There is no bridge in $M'$ and no alternating cycle
    for $M'$.
  \item Let $u'$ be the exposed vertex in $G'$, corresponding to the shrunk
    blossom. For all $f \in M'$ with $u' \twoheadrightarrow f$ in $G'$, there
    exists $g \in M$ such that $u \twoheadrightarrow g \rightarrow^* f$ in $G$.
  \end{enumerate}
\end{lemma}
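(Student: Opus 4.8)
The plan is to prove the three parts in order: part (1) rests on \cref{berge-upm} and \cref{kotzig}, part (2) is an elementary \enquote{unshrinking} argument, and part (3) is where essentially all the work is. For part (1) I would argue by contraposition: suppose $G$ has no blossom with root $u$. First, $G$ is connected, since any other connected component would carry a perfect sub-matching of $M$ and hence, by \cref{berge-upm} and \cref{kotzig}, a bridge of $G$ lying in $M$. Now contract every matching edge of $G$ to form the quotient $G/M$: since every vertex other than $u$ is matched, the vertices of $G/M$ are $u$ together with one vertex per edge of $M$, and the edges of $G/M$ are exactly the non-matching edges of $G$; one checks directly that blossoms rooted at $u$ in $G$ are the same thing as cycles through $u$ in $G/M$ (a pair of parallel edges counting as a $2$-cycle), so our assumption says every edge of $G/M$ incident to $u$ is a bridge of $G/M$. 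As $M \neq \emptyset$ and $G/M$ is connected with at least two vertices, $u$ does have such an edge; say it comes from $(u,a) \in E$ with $(a,b) \in M$, and removing it cuts $G/M$ into two pieces, the one not containing $u$ being $K$. Let $W \subseteq V$ be the set of $G$-vertices occurring in $K$. Then $(u,a)$ is the only edge of $G$ between $W$ and $V \setminus W$, and $\{\,e \in M : e \subseteq W\,\}$ is a perfect matching of $G[W]$ with no alternating cycle, hence (\cref{berge-upm}) the unique perfect matching of $G[W]$; by \cref{kotzig} some matching edge with both endpoints in $W$ is a bridge of $G[W]$, and since $G[W]$ meets the rest of $G$ only through $(u,a)$, that edge is a bridge of $G$ lying in $M$ -- the desired contradiction.

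For part (2), write $G' = G/C$ with $u'$ the contracted blossom; then $u'$ is exposed for $M'$, and $M'$ agrees with $M$ on $G[V \setminus V(C)]$. The edges of $G'$ at $u'$ are exactly the $G$-edges joining $V(C)$ to its complement, and none of these lies in $M$ (its endpoint in $C$ is either $u$, exposed in $G$, or a vertex matched inside $C$). Hence an alternating cycle for $M'$ in $G'$ cannot pass through $u'$ -- alternation there would force one of its two $u'$-edges into $M'$ -- so it would be an alternating cycle for $M$ inside $G[V\setminus V(C)] \subseteq G$, impossible; and if $f = (p,q) \in M'$ were a bridge of $G'$, then $p,q \notin V(C)$, so taking a $p$--$q$ path in $G$ avoiding $f$ and collapsing each of its maximal sub-walks through $V(C)$ to $u'$ would produce a $p$--$q$ walk in $G' - f$, contradicting that $f$ is a bridge of $G'$.

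Part (3) is the crux. Given a blossom $D'$ of $G'$ rooted at $u'$ with $f \in M' \cap D'$, deleting $u'$ from $D'$ leaves an alternating path $P$ inside $G[V\setminus V(C)]$ whose two end edges lie in $M$; together with the two edges of $D'$ at $u'$, which come from $G$-edges $(c_1,x_1)$ and $(c_2,x_2)$ with $c_1,c_2 \in V(C)$, it forms an odd \enquote{ear} $\epsilon$ attached to $C$ at $c_1$ and $c_2$ and containing $f$ among its matching edges. I would then close $\epsilon$ up inside $C \cup \epsilon$ by appending a well-chosen arc -- or two complementary arcs -- of $C$, distinguishing cases by whether $c_1 = c_2$, whether $c_1$ or $c_2$ equals $u$, and the parities of their positions along $C$ measured from $u$. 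In each case the resulting cycle is either (i) a blossom with root $u$ that contains $f$, so that $u \twoheadrightarrow f$ and we may take $g = f$; or (ii) a blossom whose root is some $c \in V(C)\setminus\{u\}$ and which still contains $f$, in which case its stem $g$ is a matching edge of $C$, so $u \twoheadrightarrow g$ (witnessed by the blossom $C$ itself) and $g \rightarrow f$, giving $u \twoheadrightarrow g \rightarrow^* f$ as required. Since $C \cup \epsilon$ is a subgraph of $G$, these are genuine blossoms of $G$.

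The step I expect to be the main obstacle is the case analysis in part (3): for exactly one parity configuration of $c_1$ and $c_2$ the obvious arc glues $\epsilon$ onto $C$ into an \emph{even} alternating cycle for $M$ in $G$ rather than a blossom, and this is precisely the point at which the hypothesis \enquote{no alternating cycle for $M$} must be invoked, to rule that configuration out and force the favourable case. The remaining cases should amount to routine but somewhat lengthy bookkeeping of edge-counts and of which vertices of each candidate cycle are matched internally.
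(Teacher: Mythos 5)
Your parts (2) and (3) are essentially the paper's own argument: (2) matches it almost verbatim, and the sketch in (3) --- lifting the blossom of $G'$ to an odd ear attached to the shrunk blossom $C$ at two vertices $c_1,c_2$, closing it up along one arc of $C$ or the other according to parities, with exactly one parity configuration producing the forbidden alternating cycle --- is precisely the case analysis the paper carries out after splitting $C$ into $P_1 \cup Q \cup P_2$. I have no objection there beyond the fact that the bookkeeping is deferred.

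The problem is part (1). The pivotal claim, that \enquote{one checks directly that blossoms rooted at $u$ in $G$ are the same thing as cycles through $u$ in $G/M$}, is false in the direction you need. A cycle of $G/M$ through $u$ lifts to a closed walk in $G$ that may enter and leave a contracted matching edge $(a,a')$ through the \emph{same} endpoint $a$; the lift is then a cycle of $G$ through $u$ on which some vertices are matched \emph{outside} the cycle --- not a blossom. Concretely, take $V=\{u,a_1,b_1,a_2,b_2\}$, $M=\{(a_1,b_1),(a_2,b_2)\}$, and non-matching edges $(u,a_1),(a_1,a_2),(a_2,u)$: the quotient $G/M$ has a triangle through $u$, yet $G$ has no blossom rooted at $u$. (This graph violates the \enquote{no bridge in $M$} hypothesis, but that only shows the equivalence could at best hold \emph{because of} the standing hypotheses --- and establishing it under those hypotheses is essentially the entire content of part~(1), not a direct check.) Consequently the step \enquote{no blossom rooted at $u$ $\Rightarrow$ every edge of $G/M$ incident to $u$ is a bridge of $G/M$} is unjustified, and the contradiction you derive downstream never gets off the ground. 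The paper instead applies Kotzig to $G[V\setminus\{u\}]$ to obtain a bridge $e\in M$ with sides $V_1,V_2$, splits $u$ into two new matched vertices $u_1,u_2$ attached respectively to the neighbours of $u$ in $V_1$ and in $V_2$, checks that the resulting perfect matching has no bridge, and then applies Kotzig in contrapositive form to get an alternating cycle, which must use the edge $(u_1,u_2)$ and therefore folds back into a blossom rooted at $u$. Some genuine argument of this kind is needed to repair your part (1); the quotient $G/M$ on its own does not retain enough information to see blossoms.
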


\begin{proof}[Proof of \cref{near-perfect-bellin} using \cref{lemma-bellin}]
  By induction on the size of $G$.

  Let us take a blossom using lemma (1). If it contains $f$, then $u
  \twoheadrightarrow f$ and we are done. Else, we shrink the blossom and get $G'$,
  $M'$ and $u'$; by lemma (2), they satisfy the assumptions of the proposition.
  By the induction hypothesis, there exists $g$ such that $u' \twoheadrightarrow g
  \rightarrow^* f$ in $G'$. Thanks to lemma (3), $u' \twoheadrightarrow g$ entails
  $u \twoheadrightarrow h \rightarrow^* g$ in $G$ for some $h \in M$. Also, $g
  \rightarrow^* f$ in $G'$ entails $g \rightarrow^* f$ in $G$ because the
  (possibly empty) sequence of blossoms which binds $f$ to $g$ in $G'$ cannot
  contain the vertex $u'$, and therefore lifts to exactly the same edges in $G$.
  Thus, $u \twoheadrightarrow h \rightarrow^* g \rightarrow^* f$ and
  therefore $u \twoheadrightarrow h \rightarrow^* f$ in $G$.
\end{proof}

\begin{proof}[Proof of \cref{lemma-bellin} (1)]
  The absence of alternating cycle amounts to saying that $M$ is the unique
  perfect matching of $G[V \setminus \{u\}]$ where $V$ is the vertex set of $G$.
  (Note that $M \neq \emptyset \Leftrightarrow V \setminus \{u\} \neq
  \emptyset$.) By Kotzig's theorem, $M$ contains a bridge $e$ of $G[V \setminus
  \{u\}]$; let $V_1$ and $V_2$ be the connected components created by the
  removal of $e$ (but keeping its endpoints) from $G[V \setminus \{u\}]$. We
  create a new graph $H$ by starting from $G[V \setminus \{u\}]$, adding two new
  vertices $u_1$ and $u_2$, and adding the edges $(u_i, v)$ for all $v \in V_i$
  with $v$ adjacent to $u$ in $G$ ($i = 1,2$), and the edge $(u_1, u_2)$.

  The perfect matching $M \cup \{(u_1, u_2)\}$ of $H$ contains no bridge of $H$:
  since $e$ is not a bridge of $G$, there is at least one edge between $u_1$ and
  $V_1$ and one edge between $u_2$ and $V_2$, so $(u_1, u_2)$ is not a bridge in
  $H$; and any edge of $M$ would be a bridge of $G$ if it were a bridge of $H$.
  Let us apply Kotzig's theorem again: this perfect matching admits an
  alternating cycle, which cannot be contained in $H[V \setminus \{u\}] = G[V
  \setminus \{u\}]$. Therefore, it contains an alternating path from $u_1$ to
  $u_2$, from which we retrieve a blossom with root $u$ in $G$.
\end{proof}

\begin{proof}[Proof of \cref{lemma-bellin} (2)]
  If there existed a bridge $e \in M'$ of $G'$, then $G' \setminus \{e\}$ would
  be disconnected while $G \setminus \{e\}$ would be connected; this is
  impossible. An alternating cycle for $M'$ would not visit $u'$ because it is
  unmatched, and therefore would be an alternating cycle for $M$ in $G$.
\end{proof}

\begin{proof}[Proof of \cref{lemma-bellin} (3)]
  Let $B$ be the blossom with root $u$ in $G$ that has been shrunk, and $B'$ be
  the blossom with root $u'$ in $G'$ containing $f$. There are two non-matching
  edges $e'_1$ and $e'_2$ in $B'$ incident to $u'$; let $e_1 = (u_1, v_1)$ be a
  preimage of $e'_1$ and $e_2 = (u_2, v_2)$ be a preimage of $e'_2$ in $G$, with
  $u_1, u_2 \in B$.

  The blossom $B$ can be decomposed into $P_1 \cup Q \cup P_2$, where $P_1$ is
  an alternating path from $u$ to $u_1$ (possibly empty, if $u = u_1$), $Q$ is
  an alternating path from $u_1$ to $u_2$ (possibly empty, if $u_1 = u_2$), and
  $P_2$ is an alternating path from $u_2$ to $u$. As for $B'$, it lifts to an
  alternating path $R$ between $u_1$ and $u_2$ starting and ending with a
  non-matching edge, so that $|R|$ is odd and $f \in R$. We proceed by case
  analysis on the parity of $|P_1|$ and $|P_2|$.
  \begin{itemize}
  \item If they are both even, then $P_1 \cup R \cup P_2$ is a blossom: $u
    \twoheadrightarrow f \rightarrow^* f$.
  \item If $|P_1|$ is even and $|P_2|$ is odd, then $Q \cup R$ is a blossom with
    root $u_1$. Either $u_1 = u$ and then $u \twoheadrightarrow f$, or there is an
    edge $g \in B \cap M$ incident to $u_1$ and then $u \twoheadrightarrow g
    \rightarrow f$.
  \item The case $|P_1|$ even and $|P_2|$ odd is symmetric to the previous one.
  \item If they were both odd, $Q \cup R$ would be an alternating cycle. \qedhere
  \end{itemize}
\end{proof}

\subsection{Algorithmic aspects}
\label{subsec:algo-aspects}

One interest of characterizing the kingdom ordering through blossom-binding is
that the later can be decided in polynomial time; thus, it shows that computing
the kingdom ordering is tractable.

To expound on this, we need to recall some bits of the classical theory around
matching algorithms. We start with a version of Berge's lemma which is relevant
to the \emph{maximum matching} problem, complementing~\cref{berge-upm}.

\begin{lemma}[Berge's lemma for paths~{\cite{berge_two_1957}}]
  \label{berge-augmenting}
  Let $G$ be a graph and $M$ be a matching of $G$. If $P$ is an \emph{augmenting
    path} for $M$ -- i.e. an alternating path whose endpoints are unmatched --
  then $M \triangle P$ is a matching and $|M \triangle P| = |M| + 1$.
  Conversely, if $M$ is a matching with $|M'| > |M|$, then $M \triangle M'$ is a
  vertex-disjoint union of:
  \begin{itemize}
  \item $|M'| - |M|$ augmenting paths for $M$;
  \item some (possibly zero) cycles which are alternating for both $M$ and $M'$.
  \end{itemize}
\end{lemma}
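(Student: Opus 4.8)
The plan is to prove the two implications separately, each by a direct inspection of the relevant symmetric difference; no clever idea is needed, only care with parities. For the forward direction, I would first record that since $P$ is an alternating path whose two endpoints are unmatched, its first and last edges must lie outside $M$ (an end-edge in $M$ would match its endpoint). Hence $P$ has odd length and contains exactly one more non-matching edge than matching edge, so $M \triangle P$ -- which deletes the matching edges of $P$ and inserts the non-matching ones -- has size $|M| + 1$ by construction. It then remains to check that $M \triangle P$ is a matching, i.e.\ that every vertex meets at most one of its edges, which is a short case analysis: a vertex outside $P$ keeps its unique (or absent) incident $M$-edge; an internal vertex of $P$ exchanges its one incident $M$-edge (which lies on $P$, by alternation) for the incident non-matching edge of $P$; and an endpoint of $P$, unmatched in $M$, gains precisely the incident end-edge of $P$. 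In each case the degree in $M \triangle P$ is at most $1$.

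For the converse, the starting point is that in the graph $H := (V, M \triangle M')$ every vertex has degree at most $2$, being incident to at most one edge of $M$ and at most one of $M'$; so $H$ is a vertex-disjoint union of paths and cycles. Since no vertex of $H$ meets two edges of $M$ (nor two of $M'$), the edges along each component alternate between $M \setminus M'$ and $M' \setminus M$; thus every component is alternating for both $M$ and $M'$, and every cycle has even length with as many edges from $M$ as from $M'$. I would then count: summing $|c \cap M'| - |c \cap M|$ over the components $c$ of $H$ yields $|M'| - |M| > 0$, while a cycle or an even-length path contributes $0$, a path whose two end-edges lie in $M' \setminus M$ contributes $+1$, and one whose end-edges lie in $M \setminus M'$ contributes $-1$. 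Hence at least $|M'| - |M|$ components are paths beginning and ending with an edge of $M' \setminus M$.

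Finally I would verify that such a path $Q$ is an augmenting path for $M$: it is alternating (shown above), and an endpoint $v$ of $Q$ has $H$-degree $1$, meeting some $f \in M' \setminus M$; were $v$ matched in $M$ by an edge $e$, then either $e \in M'$ (impossible, since $e \neq f$ would be two $M'$-edges at $v$) or $e \in M \setminus M'$ (impossible, since $e$ and $f$ would give $v$ two $H$-edges), so both endpoints of $Q$ are unmatched in $M$. This produces the asserted $|M'| - |M|$ vertex-disjoint $M$-augmenting paths inside $M \triangle M'$, the other components being cycles alternating for both matchings (if one wishes to enumerate all of $H$ one should also allow, among the remaining components, paths that are augmenting for $M'$ or alternating of even length). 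I do not expect a genuine obstacle here; the only point that demands care is the parity bookkeeping combined with the alternation argument, both of which must invoke at each step that $M$ and $M'$ are each matchings.
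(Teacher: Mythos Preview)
Your argument is correct and complete. The paper does not supply its own proof of this lemma; it is stated as a classical result with a citation to Berge's original paper, so there is no ``paper's approach'' to compare against. Your proof is the standard one, carried out carefully.

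One remark worth keeping: you rightly observe at the end that the statement, read literally, is slightly imprecise --- the symmetric difference $M \triangle M'$ can also contain even-length alternating paths and paths augmenting for $M'$, not only the $|M'|-|M|$ paths augmenting for $M$ plus alternating cycles. Your counting argument shows that there are \emph{at least} $|M'|-|M|$ augmenting paths for $M$ (and in fact exactly that many more than there are augmenting paths for $M'$), which is what is actually used later in the paper (e.g.\ in the proof of Lemma~\ref{lemma-prescribed}, where the absence of alternating cycles and the known number of unmatched vertices pin down the decomposition exactly). So your caveat is well placed and does not undermine anything downstream.
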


\begin{theorem}
  \label{augmenting-path-linear}
  There are algorithms with running time linear in the edges of the input for:
  \begin{itemize}
  \item finding an augmenting path for a matching, or detecting that the
    matching is maximum;
  \item deciding whether a perfect matching is unique, and finding an
    alternating cycle if it is not.
  \end{itemize}
\end{theorem}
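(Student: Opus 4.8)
The plan is to prove both items of \cref{augmenting-path-linear} by reducing them to standard search procedures on auxiliary graphs, so that the linear running time comes essentially for free from breadth-first or depth-first search.

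\medskip

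\textbf{Augmenting paths.} First I would recall the classical reduction to a bipartite-like search. Given $G = (V,E)$ and a matching $M$, one looks for an augmenting path by growing an alternating forest rooted at the unmatched vertices: starting from each exposed vertex, alternately traverse non-matching and matching edges. When $G$ is bipartite this is exactly a BFS in the ``alternating graph'' whose arcs are $u \to v$ for $(u,v) \notin M$ with $u$ on the left and $v \to w$ for $(v,w) \in M$; an augmenting path is then a directed path from an exposed left vertex to an exposed right vertex, found in $O(|E|)$ time. In the general (non-bipartite) case one must contract odd cycles as they are discovered — this is Edmonds's blossom shrinking, already introduced above in the context of \cref{near-perfect-bellin} — and a careful implementation (Gabow, or the Micali–Vazirani style bookkeeping, though here we only need one phase) runs in $O(|E| \alpha(|E|))$ or, with the right data structures, $O(|E|)$ per augmentation. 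I would cite the standard reference for the linear-time single-augmentation bound rather than reprove it. If the forest growth terminates without finding an exposed-to-exposed alternating path, Berge's lemma (\cref{berge-augmenting}) certifies that $M$ is maximum.

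\medskip

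\textbf{Uniqueness of a perfect matching.} For the second item, assume $M$ is a perfect matching of $G$; by \cref{berge-upm}, $M$ is non-unique if and only if $G$ has an $M$-alternating cycle. The key observation is that, because every vertex is matched, an alternating cycle is extremely easy to find: orient each non-matching edge $(a,b) \in E \setminus M$ as the pair of arcs $a \to b$ and $b \to a$, and contract each matching edge to a single vertex (equivalently, define a directed graph $D$ on the $n$ matching edges, with an arc from $e$ to $f$ whenever some non-matching edge joins an endpoint of $e$ to an endpoint of $f$). Then $M$-alternating cycles in $G$ correspond exactly to directed cycles in $D$, and a directed cycle in $D$ can be detected — and exhibited — by a single DFS in time linear in the size of $D$, hence linear in $|E|$. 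Unwinding the contraction turns such a directed cycle back into an alternating cycle of $G$. (One has to be slightly careful that the cycle produced in $G$ does not repeat vertices; since alternating closed walks in $G$ decompose into alternating cycles, one can always extract a genuine cycle, or simply take a shortest directed cycle in $D$.) If $D$ is acyclic, $M$ is unique.

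\medskip

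\textbf{Main obstacle.} The only real subtlety is the linear-time claim for a single augmentation in general (non-bipartite) graphs: naive blossom shrinking is not linear because of the cost of maintaining the contracted structure and recovering the path. I expect the cleanest route is to invoke a known linear-time implementation of one Edmonds search phase (the union–find-based approach gives $O(|E|\alpha(|E|))$, which is already ``linear'' for all practical purposes, and truly linear implementations exist) rather than to spell out the data structures. For the uniqueness half there is essentially no obstacle — it is a pure reachability/cycle-detection argument — so the proof should be short, mostly pointing to \cref{berge-augmenting}, \cref{berge-upm}, and the standard algorithmic literature, with the contraction-to-$D$ argument written out since it is elementary and self-contained.
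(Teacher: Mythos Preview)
Your treatment of the first item is fine and matches the paper: this is a known, non-trivial result that is simply cited (Gabow; Tarjan), and you correctly identify blossom shrinking as the essential ingredient and the linear-time bookkeeping as the only delicate point.

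The second item, however, has a genuine gap. Your claim that ``$M$-alternating cycles in $G$ correspond exactly to directed cycles in $D$'' is false, and the parenthetical fix (``alternating closed walks decompose into alternating cycles'') does not rescue it. The contraction of matching edges loses the information of \emph{which endpoint} of a matching edge a non-matching edge is attached to, so a directed cycle in $D$ need not lift to any alternating walk in $G$ at all, let alone an alternating cycle. Concretely, take $V=\{1,\dots,6\}$, $M=\{(1,2),(3,4),(5,6)\}$, and non-matching edges $(1,3),(3,5),(5,1)$. The contracted digraph $D$ has three vertices and a directed triangle, yet $G$ has no alternating cycle (vertices $2,4,6$ have degree~$1$), and the triangle in $D$ does not even lift to a walk in $G$: after traversing $(1,3)$ and then the matching edge $(3,4)$, you are at~$4$, but the next non-matching edge $(3,5)$ is incident to~$3$, not~$4$.

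This is precisely why the problem is not an elementary reachability question: the obstruction is the same odd-cycle (blossom) phenomenon you acknowledged for the first item but tried to sidestep here. The paper handles the second item exactly as it handles the first --- by citing a dedicated result (Gabow, \emph{Unique maximum matching algorithms}) and noting explicitly that it, too, relies on blossom shrinking. You should do the same rather than attempt an elementary self-contained argument.
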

\begin{proof}
  See \cite{gabow_linear-time_1985} and {\cite[Section 9.4]{tarjan_data_1983}}
  for the first result, \cite{gabow_unique_2001} for the second. Note that both
  of these algorithms use blossom shrinking.
\end{proof}

Since a significant portion of the algorithms we will see in this paper
ultimately proceed by a reduction to an augmenting path or alternating cycle
problem, the above results play a central role.

To decide the blossom-binding relation, we need to find blossoms with both the
stem and one intermediate edge prescribed. A first step towards this will be to
find augmenting paths crossing a given edge. We will see later that this is
NP-complete in general (\cref{prescribed-augmenting-npc}). However, the problem
is tractable in the absence of alternating cycles (plus a condition which will
always be satisfied in the various reductions to augmenting paths used in this
paper).

\begin{lemma}
  \label{lemma-prescribed}
  Let $G = (V,E)$ be a graph and $M$ be a matching of $G$. Suppose that:
  \begin{itemize}
  \item there are \emph{no alternating cycles} for $M$ -- equivalently, $M$ is
    the unique perfect matching of the subgraph induced by the vertices matched
    by $M$;
  \item there are exactly two unmatched vertices.
  \end{itemize}
  Then the existence of an augmenting path for $M$ \emph{crossing a prescribed
    matching edge $e \in M$} can be reduced to the existence of a perfect
  matching; and such a path can be found in linear time.
\end{lemma}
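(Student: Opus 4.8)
The plan is to reduce the existence of a prescribed-edge augmenting path to a perfect-matching question on an auxiliary graph, exploiting the acyclicity hypothesis heavily. Write $e = (a,b) \in M$, and let $s,t$ be the two unmatched vertices. An augmenting path $P$ from $s$ to $t$ crossing $e$ splits, after removing the edge $e$, into an alternating path $P_s$ from $s$ to one endpoint of $e$ (say $a$) and an alternating path $P_t$ from $b$ to $t$; both $P_s$ and $P_t$ end at their $e$-side endpoint with a \emph{matching} edge (since $e \notin M$-neighbours on $P$ must alternate), so in particular $a$ and $b$ are each reached along an edge of $M$, meaning $P_s$ and $P_t$ have odd length and $P_s \triangle M$, $P_t \triangle M$ behave like augmenting structures on their respective vertex sets. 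The first key step is therefore to observe that finding such a $P$ amounts to simultaneously finding an augmenting path from $s$ to $a$ and one from $b$ to $t$ in $G \setminus \{e\}$ that are \emph{vertex-disjoint}.

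The second key step is to encode this simultaneous, vertex-disjoint requirement as a single perfect matching. I would form the graph $H$ obtained from $G$ by deleting the edge $e$ (keeping $a,b$) and then deleting the two originally-unmatched vertices together with... no: more carefully, I want a graph whose perfect matchings correspond to the symmetric differences $M \triangle P$. Since $M \triangle P$ is itself a perfect matching of $G$ (by \cref{berge-augmenting}) that uses the edge $e$ and does not use the matching edges of $M$ incident to $s$ or $t$ (there are none, as $s,t$ are unmatched) — the cleanest encoding is: let $H = G[V \setminus \{s,t\}]$ after forcing $e$ into the matching, i.e. delete $a,b$ as well, and ask whether the remaining graph $G[V \setminus \{s,t,a,b\}]$ has a perfect matching; but that throws away the constraint linking $s$ to $a$ and $b$ to $t$ along the path. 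The honest encoding is to instead contract the two "halves": add the forced edge $e$ to the matching by deleting $a,b$, add forced edges to absorb $s$ and $t$... Here I think the right move, following the paradigm used elsewhere in the paper, is: a perfect matching $N$ of $G$ with $e \in N$ exists iff $G[V\setminus\{a,b\}]$ has a perfect matching, but we additionally need $N \triangle M$ to be \emph{connected through $e$} — and this is exactly where acyclicity saves us. Because there are no alternating cycles for $M$, the symmetric difference $N \triangle M$ of \emph{any} perfect matching $N$ (when $N$ is near-perfect plus... — rather: $M$ restricted to matched vertices is the unique perfect matching there) is a union of paths and cycles, but the acyclicity forbids the alternating cycles, so $N \triangle M$ is a disjoint union of alternating paths with unmatched endpoints; since $M$ leaves only $s,t$ unmatched, and $N$ we will arrange to leave nothing unmatched on the relevant set, $N \triangle M$ is forced to be a \emph{single} $s$–$t$ alternating path. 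Thus the only thing to check is that this forced path passes through $e$, which we guarantee by deleting the two endpoints of $e$ before computing the matching: set $H := G[V \setminus \{a,b\}]$ and decide whether $H$ has a perfect matching. If $N_H$ is such a matching, then $N := N_H \cup \{e\}$ is a perfect matching of $G$ containing $e$, $N \triangle M$ is (by acyclicity) a single $s$–$t$ alternating path, and since $e \in N \setminus M$ lies on it, that path is an augmenting path through $e$; conversely any augmenting path through $e$ yields such an $N$ and hence such an $N_H$.

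The linear-time claim then follows immediately from \cref{augmenting-path-linear}: deciding whether $H$ has a perfect matching, and producing one, is a special case of the augmenting-path/maximum-matching primitive (start from the empty matching, or from $M$ restricted to $V \setminus \{a,b\}$ with $s,t$ and the two former $M$-neighbours of $a,b$ now exposed, and augment), all in time linear in the edges of $H$, hence of $G$; recovering the path itself is just taking the symmetric difference $N \triangle M$ and reading off the component, again linear.

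The step I expect to be the main obstacle is pinning down precisely \emph{why} $N \triangle M$ must be a single path rather than a path plus leftover matched-but-rearranged pieces — i.e.\ making rigorous the claim that under the "no alternating cycles" hypothesis every perfect matching of $G$ differs from $M$ by exactly one augmenting path, and verifying that deleting $\{a,b\}$ (as opposed to some other gadget) correctly forces $e$ onto that path without accidentally allowing $a$ or $b$ to be matched "the wrong way". I would handle this by invoking \cref{berge-augmenting} on $M$ versus $N$ (with $|N| = |M| + 1$ on the full vertex set once $e$ is reinstated) to get that $N \triangle M$ is one augmenting path plus some cycles alternating for both, then killing the cycles with the acyclicity hypothesis on $M$; the "exactly two unmatched vertices" hypothesis is what makes "one augmenting path" unambiguous. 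The extra unstated side condition alluded to in the lemma statement ("plus a condition which will always be satisfied in the various reductions") is presumably that $e$ is not itself a bridge separating $s$ from $t$ in a degenerate way, or that both endpoints of $e$ are matched by $M$ — I would state it explicitly as needed once the construction forces its hand.
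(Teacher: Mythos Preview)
Your reduction has a genuine gap: it produces an augmenting path that \emph{avoids} $e$ rather than one that crosses it. You set $H = G[V \setminus \{a,b\}]$, take a perfect matching $N_H$ of $H$, and form $N = N_H \cup \{e\}$. Then you claim $e \in N \setminus M$, but the lemma prescribes $e \in M$, so in fact $e \in N \cap M$ and hence $e \notin N \triangle M$. Your symmetric-difference argument is otherwise correct --- $N \triangle M$ is indeed a single $s$--$t$ augmenting path by acyclicity --- but that path lives entirely in $H$ and never touches $a$ or $b$, let alone $e$. (The early remark that the two half-paths $P_s,P_t$ reach $a,b$ along matching edges is also backwards: since $e \in M$, the edges adjacent to $e$ on an augmenting path are non-matching.)

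The paper's construction is the ``dual'' of yours: instead of deleting the vertices $a,b$ and forcing $e$ into the new matching, it deletes only the \emph{edge} $e$, setting $G' = (V, E \setminus \{e\})$ and $M' = M \setminus \{e\}$. Now $M'$ has four exposed vertices $s,t,a,b$, and a perfect matching $M''$ of $G'$ gives, via Berge and the acyclicity hypothesis, a symmetric difference $M' \triangle M''$ consisting of exactly two vertex-disjoint $M'$-augmenting paths on the endpoint set $\{s,t,a,b\}$. The step you did not anticipate is ruling out the bad pairing $s$--$t$ and $a$--$b$: an $M'$-alternating $a$--$b$ path in $G'$ together with $e \in M$ would close up into an $M$-alternating cycle in $G$, contradicting the hypothesis. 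So the two paths must pair $\{s,t\}$ with $\{a,b\}$, and splicing in $e$ yields the desired augmenting path through $e$. For linear time, one starts from $M'$ (four exposed vertices) and finds two augmenting paths via \cref{augmenting-path-linear}. The ``condition which will always be satisfied'' in the preamble is just the exactly-two-unmatched-vertices hypothesis already stated in the lemma, not an extra unstated assumption.
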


\begin{proof}
  Let $u,v \in V$ be the unmatched vertices. If there is an augmenting path for
  $M$ in $G$, its endpoints must be $u$ and $v$, and this is equivalent to the
  existence of a perfect matching in $G$. Let $e = (a,b)$, $G' = (V, E \setminus
  \{e\})$ and $M' = M \setminus \{e\}$.

  Suppose $G'$ admits a perfect matching $M''$. Then the symmetric difference
  $M' \triangle M''$ consists of two vertex-disjoint alternating paths for $M'$
  whose endpoints are $\{u, v, a, b\}$, by Berge's lemma for paths
  (\cref{berge-augmenting}); indeed, our assumptions prevent the existence of
  alternating cycles for $M$, and therefore for $M' \subset M$ as well.

  We claim that these paths either go from $u$ to $a$ and $b$ to $v$, or from
  $u$ to $b$ and $a$ to $v$. Otherwise, there would be an alternating path from
  $a$ to $b$ for $M'$ in $G'$, and together with $(a,b) = e \in M$, this would
  give us an alternating cycle for $M$ in $G$.

  In both cases, let us join the two paths together by adding $e$. We get a path
  starting with $u$, ending with $v$, crossing $e$ and alternating for $M$ in
  $G$. Conversely, from such a path, one can get a perfect matching in $G'$.

  For the linear time complexity, we exploit the fact that we already have at
  our disposal a matching $M'$ of $G'$ which leaves only 4 vertices unmatched. A
  perfect matching can then be found as follows: find a first augmenting path
  $P$ for $M'$ in linear time, and then a second one $P'$ for $M' \triangle P$,
  both steps being done in linear time by \cref{augmenting-path-linear}. If both
  augmenting paths exist, then $M \triangle P \triangle P'$ is a perfect
  matching, and conversely, if $G'$ admits a perfect matching, then the
  procedure succeeds in finding some $P$ and $P'$. (This does not mean that $P$
  and $P'$ are the same as the paths in the previous part of the proof, since
  they may not be vertex-disjoint.)
\end{proof}

\begin{theorem}
  Let $M$ be the unique perfect matching of a graph $G$, $u \in V$ and
  $e,f \in M$. The blossom-binding relations $u \twoheadrightarrow e$ and $e
  \rightarrow f$ can both be decided in linear time.
\end{theorem}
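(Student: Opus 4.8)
\emph{Plan of proof.} The idea is to reduce both blossom-binding relations to the prescribed-edge augmenting-path problem solved in \cref{lemma-prescribed}, by means of a small gadget that turns ``a blossom with root $u$'' into ``an augmenting path between two designated vertices''.

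I first treat $u \twoheadrightarrow f$, where $f = (a,b) \in M$. If $f$ is incident to $u$, then $f$ is the stem of every blossom rooted at $u$ and hence cannot lie on such a blossom; the answer is then ``no'', and we may henceforth assume that $f$ is incident neither to $u$ nor to its $M$-partner $u^{*}$ (the latter because $(u,u^{*}) \in M$), so that $f \in M \setminus \{(u,u^{*})\}$. Now build $G'$ from $G$ by deleting $u$, adding two fresh vertices $u_1, u_2$ each joined to every neighbour $v$ of $u$ with $(u,v) \notin M$, and adding a fresh pendant vertex $z$ joined to $u^{*}$; set $M' = (M \setminus \{(u,u^{*})\}) \cup \{(u^{*},z)\}$. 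Then $G'$ has $O(|V| + |E|)$ vertices and edges, the only $M'$-unmatched vertices are $u_1$ and $u_2$, and $M'$ has no alternating cycle: such a cycle would avoid $u_1, u_2$ (unmatched) and $z$ (degree one), hence also $u^{*}$ (whose only $M'$-edge leads to $z$), and would therefore be an alternating cycle for $M$ in $G$, contradicting \cref{berge-upm}. So $(G', M', f)$ satisfies the hypotheses of \cref{lemma-prescribed}.

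The crux is the equivalence: $u \twoheadrightarrow f$ in $G$ if and only if $G'$ has an $M'$-augmenting path crossing $f$. From a blossom $C = u, x_1, \ldots, x_{2k}, u$ rooted at $u$ and containing $f$ (alternating, with non-matching edges at the two positions meeting $u$), one notes that the stem $(u,u^{*})$ is not an edge of $C$, so none of the $x_i$ equals $u^{*}$; replacing the two edges of $C$ at $u$ by $(u_1, x_1)$ and $(x_{2k}, u_2)$ produces an $M'$-augmenting path from $u_1$ to $u_2$ through $f$. Conversely, any $M'$-augmenting path must join $u_1$ to $u_2$, must avoid $z$ and hence $u^{*}$, so identifying $u_1$ with $u_2$ yields a blossom rooted at $u$ through $f$; alternation is preserved because $M$ and $M'$ agree on $V \setminus \{u, u^{*}\}$ and the edges at $u_1, u_2$ were installed precisely as the non-matching edges at $u$. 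Plugging $(G', M', f)$ into \cref{lemma-prescribed} then decides $u \twoheadrightarrow f$ (and produces a witnessing blossom) in linear time.

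Finally, for $e = (p,q) \in M$: the stem of a blossom is the $M$-edge at its root, so $e$ is the stem of a blossom $C$ exactly when the root of $C$ is $p$ or $q$; hence $e \rightarrow f$ holds iff $p \twoheadrightarrow f$ or $q \twoheadrightarrow f$, which is settled by two linear-time calls to the procedure above. The step I expect to need the most care is the equivalence of the previous paragraph — keeping the parities straight (an odd-length blossom versus an odd-length augmenting path) and checking that an augmenting path in $G'$ cannot slip through $u^{*}$ or $z$, so that it genuinely projects to a blossom and not merely to an alternating closed walk with a chord. The pendant vertex $z$ is introduced exactly to ``shield'' the single forbidden vertex $u^{*}$, which is what allows us to invoke \cref{lemma-prescribed} in its stated form (two unmatched vertices) rather than reproving it.
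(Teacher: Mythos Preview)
Your proof is correct and follows the same overall strategy as the paper's: reduce $e \rightarrow f$ to two instances of $\cdot \twoheadrightarrow f$, and reduce $u \twoheadrightarrow f$ to \cref{lemma-prescribed} via a gadget that splits the root $u$ into two unmatched copies. The paper's gadget is lighter --- it keeps $u$ in place and simply adds two fresh vertices $s,t$ with the \emph{full} neighborhood of $u$, with no pendant vertex --- but then an augmenting $s$--$t$ path could in principle traverse $u$ (and hence $u^*$), and the paper's one-line claim that replacing $s,t$ by $u$ ``gets a blossom'' glosses over why this does not cause trouble (it doesn't, because such a detour would force an alternating cycle on one side, but this is never said); the paper also does not isolate the degenerate case $f=(u,u^*)$, for which its reduction in fact gives the wrong answer. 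Your construction --- deleting $u$, restricting $u_1,u_2$ to the non-matching neighbors, and shielding $u^*$ with the pendant $z$ --- makes the correspondence between augmenting paths and blossoms literally bijective, at the price of a slightly heavier gadget. Same idea, cleaner execution.
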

\begin{proof}
  For $e = (v,w)$, $e \rightarrow f \Longleftrightarrow v \twoheadrightarrow f
  \text{ or } w \twoheadrightarrow f$, so it suffices to treat the case $u
  \twoheadrightarrow e$. This is done by a reduction to the previous lemma.

  We build a graph $G'$ by adding two new vertices $s, t$ to $G$; the neighbors
  of $s$ in $G'$ are exactly those of $u$ in $G$, and the same goes for $t$. If
  $P$ is an augmenting path in $G'$, then by replacing both endpoints $s$ and
  $t$ by $u$, we get a blossom in $G$ whose root is $u$; the converse also
  holds.
\end{proof}

\begin{corollary}
  The kingdom ordering of a unique perfect matching can be decided in time
  $O(n^2 m)$ for a graph with $n$ vertices and $m$ edges.
\end{corollary}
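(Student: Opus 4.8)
The plan is to read off the kingdom ordering from the blossom-binding relation, as licensed by \cref{bellin-thm-graph}, and then bound the cost of a naive transitive-closure computation. By \cref{bellin-thm-graph}, for $e,f \in M$ we have $e \prec f$ if and only if $e \rightarrow^+ f$, so deciding the kingdom ordering amounts to computing the transitive closure of the single relation $\rightarrow$ on the set $M$ of matching edges. Note that since $G$ has $n$ vertices and a perfect matching $M$, every vertex is matched and hence $|M| = n/2$; in particular there are only $O(n^2)$ ordered pairs $(e,f) \in M \times M$, and moreover $m \geq n/2$, i.e.\ $m = \Omega(n)$.

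First I would compute the relation $\rightarrow$ explicitly, say as an $|M| \times |M|$ Boolean table. For each of the $O(n^2)$ pairs $(e,f)$, the last theorem proved above decides whether $e \rightarrow f$ in time linear in the size of $G$, i.e.\ $O(n+m) = O(m)$ (using $m = \Omega(n)$). Hence the whole table is assembled in $O(n^2 m)$ time.

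Next I would take the transitive closure of this relation, for instance by a Floyd--Warshall-type procedure or by running $|M|$ graph searches over the table, in $O(|M|^3) = O(n^3)$ time. Since $m = \Omega(n)$ we have $n^3 = O(n^2 m)$, so this step is dominated by the previous one, and the total running time is $O(n^2 m)$. With the resulting closed table in hand, each individual query \enquote{does $e \prec f$ hold?} is answered in constant time, which is what it means to have \enquote{decided} the partial order $\prec$.

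There is no real obstacle here; the only points to verify are the elementary size estimates $|M| = \Theta(n)$ and $m = \Omega(n)$ (both immediate from the existence of a perfect matching), which ensure that the cubic transitive-closure term is absorbed into $O(n^2 m)$ and that \enquote{linear time} for each blossom-binding query can be charged as $O(m)$.
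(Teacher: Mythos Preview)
Your proposal is correct and follows essentially the same approach as the paper: compute the full blossom-binding relation on $M$ in $O(n^2 m)$ time using the preceding theorem, then apply Floyd--Warshall in $O(n^3)$ time and invoke \cref{bellin-thm-graph}. You spell out explicitly that $m = \Omega(n)$ absorbs the $O(n^3)$ term into $O(n^2 m)$, a detail the paper leaves implicit.
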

\begin{proof}
  The previous result allows us to compute the blossom-binding relation between
  all pairs of matching edges in time $O(n^2 m)$ (there are $n/2$ matching
  edges). The conclusion follows from \cref{bellin-thm-graph} and the $O(n^3)$
  Floyd--Warshall algorithm for transitive closure.
\end{proof}

\section{A few remarks on properly colored paths and cycles}
\label{sec:remarks}

In this section, we mostly focus on recalling known result on edge-colored
graphs and provide some minor improvements which will be useful in the remainder
of the paper. Novelties such as the edge-colored line graph and its applications
are to be found in the next sections, not here.

As usual, there are both an efficient algorithm for finding properly colored
paths, and an associated \enquote{structure from acyclicity} theorem.

\begin{theorem}
  \label{pc-path-linear}
  Let $G$ be an edge-colored graph and $u,v$ be two vertices in $G$. A properly
  colored path between $u$ and $v$ can be found in linear time.
\end{theorem}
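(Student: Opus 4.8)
The plan is to reduce the problem, in linear time, to finding an augmenting path in an auxiliary (uncoloured) graph, and then to invoke \cref{augmenting-path-linear}. Such a reduction is essentially classical (see \cite[Chapter~16]{bang-jensen_digraphs._2009}); the one subtlety I would have to address is keeping the auxiliary graph of size $O(|E|)$, which calls for a \emph{sparse} vertex gadget rather than the naive complete-graph one.

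From an edge-coloured graph $G = (V,E)$ I would build a graph $H$ together with a matching $M_0$ whose only unmatched vertices are $u$ and $v$. For each edge $e = xy$ of $G$, $H$ gets two new vertices $\alpha_e, \beta_e$ joined by an edge that is put into $M_0$. For each $w \in V \setminus \{u,v\}$, writing $c_1, \dots, c_k$ for the distinct colours on edges incident to $w$, $H$ gets the gadget on the ports $w^{c_1}, \dots, w^{c_k}$, their private partners $\bar w^{c_1}, \dots, \bar w^{c_k}$, and two hubs $h_w, h_w'$, with edges $\{w^{c_i}, \bar w^{c_i}\}$ and $\{h_w, h_w'\}$ (all put into $M_0$) together with $\{\bar w^{c_i}, h_w\}$ and $\{\bar w^{c_i}, h_w'\}$. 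Finally, an edge $e = xy$ of colour $c$ is attached to its two endpoints by the edges $\{x^c, \alpha_e\}$ and $\{\beta_e, y^c\}$, with the convention that $x^c$ is read as the bare vertex $u$ when $x = u$ and as $v$ when $x = v$ (likewise for $y$). Then $H$ and $M_0$ have size $O(|E|)$, are built in linear time, and $u, v$ are the only $M_0$-unmatched vertices.

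The correctness claim I would prove is: $G$ has a properly coloured $u$--$v$ path iff $M_0$ admits an augmenting path in $H$. For the ``if'' direction, given such an augmenting path $P$, the set $M := M_0 \triangle P$ is a perfect matching of $H$; I would observe that for each $w \neq u, v$ the number of ports of $w$ matched by $M$ towards edge-gadget vertices is even (a parity count in the $(2k+2)$-vertex gadget) and at most two (otherwise three or more of the $\bar w^{c_i}$ would have to be matched inside the gadget, where only the two hubs are available), hence $0$ or $2$, and in the latter case the two exposed ports carry distinct colours since distinct ports mean distinct colours by construction. Consequently the ``used'' edge set $F = \{\, e \in E : \{\alpha_e, \beta_e\} \notin M \,\}$ has maximum degree $\le 2$ in $G$ with consecutive edges at every vertex differently coloured, and $u, v$ have degree exactly $1$ in it; moreover $F$ is connected, being the image of the edge set of the connected path $P$ under the projection $H \to G$ that collapses each gadget and sends $\alpha_e, \beta_e$ to the endpoints of $e$. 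A connected subgraph of maximum degree $\le 2$ with exactly two vertices of degree $1$ is a path, so $F$ is the desired properly coloured $u$--$v$ path. The ``only if'' direction is the routine converse: a properly coloured $u$--$v$ path dictates which edges to ``use'' and hence which edge-gadgets, hubs and ports to toggle, and one checks that the result is a single $M_0$-alternating path from $u$ to $v$.

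To finish I would run the linear-time algorithm of \cref{augmenting-path-linear} on $(H, M_0)$: if it returns an augmenting path, projecting it as above yields a properly coloured $u$--$v$ path in linear time; otherwise it certifies that none exists. I expect the main obstacle to be the gadget itself --- designing it to be simultaneously sparse (so that $|E(H)|$ is bounded by $\sum_w O(\deg w)$ rather than $\sum_w O(\deg w)^2$) and to have precisely the right matching structure, and then verifying, in the ``if'' direction, that an augmenting path in $H$, which a priori only projects to an edge-simple walk in $G$, in fact projects to a vertex-simple path; this last point is exactly what the ``at most one pass-through per gadget'' property buys.
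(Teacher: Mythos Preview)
Your proposal is correct and follows essentially the same approach that the paper attributes to Szeider and to the P-gadget framework of~\cite{gutin_properly_2009}: a linear-size reduction to augmenting paths via a per-vertex gadget. Your gadget (ports, partners, two hubs) is in fact a valid P-gadget in the sense of \cref{def:p-gadget} --- it has a unique perfect matching and admits one after removing exactly two ports --- so your argument is a concrete instantiation of the reduction the paper cites; the extra edge-gadgets $\alpha_e,\beta_e$ are not strictly needed but are harmless.
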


\begin{definition}
  In an edge-colored graph, the \emph{chromatic degree} of a vertex is the
  number of different colors among its incident edges.
\end{definition}

\begin{theorem}[Yeo \cite{yeo_note_1997}]
  \label{yeo}
  Let $G$ be an edge-colored graph with no properly colored cycles. Note $V$ for
  the vertex set of $G$. Then, $G$ has a \emph{color-separating vertex} $u \in
  V$: for each connected component $C$ in $G[V \setminus \{u\}]$, all edges
  between $C$ and $u$ have the same color. In particular, if $u$ has chromatic
  degree $\geq 2$, then it is a cut vertex.
\end{theorem}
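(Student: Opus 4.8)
The plan is to argue by contradiction: suppose $G$ has no properly colored cycle and yet no color-separating vertex, and build a properly colored cycle. If $G$ has no edges the claim holds vacuously, so $G$ has an edge and hence a properly colored path with at least one edge; prolonging it greedily at both ends, fix a \emph{maximal} properly colored path $P = v_0 v_1 \cdots v_k$ ($k \geq 1$), i.e.\ one that cannot be extended at either endpoint. Write $\beta = c(v_0 v_1)$.

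Two observations drive everything. First, maximality forces every edge joining $v_0$ to a vertex \emph{outside} $V(P)$ to have color $\beta$, for otherwise $P$ would extend; and symmetrically at $v_k$. Second, whenever $v_0 v_i \in E$ with $i \geq 2$ and $c(v_0 v_i) \neq \beta$, the cycle $v_0 v_1 \cdots v_i v_0$ — whose only possibly improper transition is at $v_i$, since the transition at $v_0$ is legitimate by $c(v_0 v_i) \neq \beta = c(v_0 v_1)$ — must fail to be properly colored, so $c(v_{i-1} v_i) = c(v_0 v_i)$. Now invoke the hypothesis at $v_0$: in particular $v_0$ has two incident edges of distinct colors, hence one of color $\neq \beta$, and by the first observation its other endpoint lies on $V(P)$; so with the second observation we get an index $i \geq 2$ with $v_0 v_i \in E$ and $c(v_0 v_i) = c(v_{i-1} v_i) =: \gamma \neq \beta$. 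This exhibits a \enquote{near-properly-colored} cycle $Z = v_0 v_1 \cdots v_i v_0$ having a single improper transition, at $v_i$, both of whose edges at $v_i$ have color $\gamma$.

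The crux — and what I expect to be the main obstacle — is to upgrade such a near-properly-colored cycle into a genuine one, thereby reaching the contradiction. The idea is to apply the full hypothesis at the offending vertex $v_i$: the segment $v_0, \ldots, v_{i-1}$ lies in a single component $D$ of $G - v_i$, to which $v_i$ is joined by the two $\gamma$-colored edges $v_i v_{i-1}$ and $v_i v_0$; since $v_i$ is not color-separating, either $D$ receives a further edge from $v_i$ of color $\neq \gamma$, or some other component of $G - v_i$ receives two differently colored edges from $v_i$. In either case one reroutes the portion of $Z$ through $v_i$ along a path living inside $G - v_i$; arranging this so that it yields a properly colored cycle — or else a near-properly-colored cycle to which the same reduction applies in a strictly smaller instance — is where the real work sits. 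To make the induction go through I would run the whole argument on a counterexample minimizing $|V(G)| + |E(G)|$, checking that each rerouting step preserves both hypotheses (no properly colored cycle, no color-separating vertex) while strictly decreasing the measure; the delicate bookkeeping is in tracking which component of $G$ minus a given vertex each new edge falls in, together with its color, and everything else is routine. Finally, the displayed \enquote{in particular} needs no separate proof: if a color-separating $u$ is not a cut vertex then $G - u$ is connected, so all edges at $u$ share a color, i.e.\ $u$ has chromatic degree $\leq 1$ — the contrapositive.
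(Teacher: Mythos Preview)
The paper does not prove Yeo's theorem --- it is cited and used as a black box --- so there is nothing in the paper to compare against. Your opening is correct and standard: from a maximal properly colored path $P = v_0 \cdots v_k$ and the hypothesis that no vertex is color-separating (hence every vertex has chromatic degree $\geq 2$), you validly obtain a chord $v_0 v_i$ with $c(v_0 v_i) = c(v_{i-1} v_i) = \gamma \neq \beta$, and thus a cycle $Z$ properly colored everywhere except at $v_i$.

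The gap is precisely where you locate it: you do not carry out the crux, and your sketched plan does not close it. The minimum-counterexample idea does buy something --- in your case (b), passing to $G' = G[C' \cup \{v_i\}]$ yields by minimality a color-separating vertex $u$ of $G'$; one checks $u \neq v_i$, and since $N_G(u) \subseteq C' \cup \{v_i\}$ for any $u \in C'$, this $u$ remains color-separating in $G$, a contradiction. The same trick handles case (a) whenever $G - v_i$ is disconnected, so a minimum counterexample is in fact $2$-connected. But in that residual $2$-connected case your argument stalls: the non-$\gamma$ edge $v_i w$ may land off $Z$ (indeed off $P$), and then there is neither a strictly smaller induced subgraph to invoke minimality on nor an evident shorter near-PC cycle to pass to, so no measure decreases and your \enquote{rerouting} has nothing to reroute along. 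You have correctly identified the hard step but not supplied the idea that finishes it; the published proofs need a sharper extremal setup than a merely maximal path (a longest PC path, together with an extremal choice among the chords at its endpoint) to force the relevant auxiliary edges back onto~$P$.
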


The algorithmic part (\cref{pc-path-linear}) was actually first proven in
Szeider's paper on forbidden transitions~\cite{szeider_finding_2003}, by
reduction to augmenting paths. A more general presentation of this reduction,
parameterized by a choice of \enquote{P-gadget}, was later given in
\cite{gutin_properly_2009} (see also the exposition in \cite[Section
16.4]{bang-jensen_digraphs._2009}).

There is a much simpler reduction to matchings in the case of
\emph{2-edge-colored} graphs, i.e. edge-colored graphs using only two colors. It
seems to have been first published in~\cite{manoussakis_alternating_1995}, where
it is attributed to Edmonds. The existence of this reduction also makes the
2-edge-colored case of Yeo's theorem above immediately deducible from Kotzig's
theorem (\cref{kotzig}); that said, this case was first proved
in~\cite{grossman_alternating_1983} without using Kotzig's theorem.

Our two small contributions here are the following:
\begin{itemize}
\item To our knowledge, no linear time algorithm for finding properly colored
  \emph{cycles} appears in the literature. (A slightly worse than linear
  algorithm is proposed in \cite[Corollary~16.4.3]{bang-jensen_digraphs._2009}.)
  We repair this omission in this section, by a mostly straightforward
  adaptation of the reduction for paths in~\cite{gutin_properly_2009} -- the
  main subtlety being that we need to tweak the definition of
  \enquote{P-gadgets} for our purpose.
\item We generalize the reduction for the 2-edge-colored case to cover
  edge-colored graphs with chromatic degree $\leq 2$, and unify it with the
  well-known correspondence between bipartite matchings and directed graphs.
  This is achieved by extending it to a combinatorial bijection between graphs
  equipped with perfect matchings and a new object we call \enquote{locally
    2-colored graphs}.
\end{itemize}

\subsection{Finding properly colored cycles in linear time}
\label{sec:pc-cycle-linear}

\begin{definition}
  \label{def:p-gadget}
  A \emph{P-gadget} on the vertices $V$ is a graph $G = (V',E)$ equipped with a
  \emph{unique} perfect matching $M$ such that $V \subseteq V'$ and, for each
  nonempty $U \subseteq V$, $G[V' \setminus U]$ has at least one perfect
  matching if and only if $|U| = 2$.
\end{definition}

This differs slightly from the definition in \cite{gutin_properly_2009}:
\begin{itemize}
\item $M$ does not merely exist, but is included as part of the data (this is a
  minor detail);
\item more importantly, we require this perfect matching $M$ to be unique.
\end{itemize}
This last condition will help us for cycle existence problems.

\begin{remark}
  Szeider proved the equivalence between Kotzig's and Yeo's theorems
  (theorems~\ref{kotzig} and~\ref{yeo}) in~\cite{szeider_theorems_2004} using a
  particular P-gadget; the proof adapts using any P-gadget according to our
  definition, making use of this uniqueness condition.
\end{remark}

\begin{lemma}
  \label{p-gadget-existence}
  For every finite vertex set $V$, there exists a P-gadget on $V$ with size
  $O(|V|)$ and it can be constructed in time $O(|V|)$.
\end{lemma}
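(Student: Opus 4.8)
The plan is to exhibit an explicit gadget and verify the definition by hand. Writing $V=\{t_1,\ldots,t_k\}$, I would introduce $k+2$ fresh vertices $z_1,z_2,s_1,\ldots,s_k$, set $V'=V\cup\{z_1,z_2,s_1,\ldots,s_k\}$, and let
\[
E=\{z_1z_2\}\cup\{z_1s_i,\,z_2s_i : 1\le i\le k\}\cup\{s_it_i : 1\le i\le k\}.
\]
In words: each terminal $t_i$ is a pendant vertex hanging off $s_i$, and every $s_i$ is otherwise joined to the two-vertex \enquote{hub} $\{z_1,z_2\}$. Since $|V'|=2k+2$ and $|E|=3k+1$, this $G=(V',E)$ has size $O(|V|)$ and is constructed in time $O(|V|)$; note that the two hub vertices make $|V'|$ even for every $k$, so no parity case-split on $k$ is needed.

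I would first pin down the perfect matching. Each $t_i$ has degree $1$ in $G$, so every perfect matching must contain all the edges $s_it_i$; this covers all the $s_i$, after which the only neighbour of $z_1$ still available is $z_2$, forcing $z_1z_2$ into the matching. Hence $M=\{z_1z_2\}\cup\{s_it_i:1\le i\le k\}$ is a perfect matching, and the forced-choice argument just given shows it is the \emph{unique} one (so this step does not even invoke Kotzig's theorem).

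It then remains to verify, for every nonempty $U\subseteq V$, that $G[V'\setminus U]$ has a perfect matching precisely when $|U|=2$. For $|U|=1$ this is immediate from parity, since $|V'\setminus U|=2k+1$ is odd. For $U=\{t_a,t_b\}$ I would simply display the perfect matching $\{z_1s_a,\,z_2s_b\}\cup\{s_it_i:i\notin\{a,b\}\}$ of $G[V'\setminus U]$. The only step carrying real content -- the one the construction is designed to make work -- is the case $|U|\ge 3$: after deleting those terminals, each of the (at least three) vertices $s_i$ with $t_i\in U$ has lost its only neighbour outside $\{z_1,z_2\}$, hence must be matched to $z_1$ or to $z_2$; but $z_1$ and $z_2$ together lie on at most two matching edges, so by pigeonhole $G[V'\setminus U]$ has no perfect matching. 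I do not foresee a genuine obstacle here: the points to state carefully are the forced-edge reasoning for uniqueness and this final pigeonhole step, plus the routine check that the construction behaves uniformly for all $k$ (including $k\le 2$), which it does.
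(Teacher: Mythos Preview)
Your construction is correct and the verification is complete: the forced-edge argument pins down the unique perfect matching, parity handles $|U|=1$, the explicit matching handles $|U|=2$, and the pigeonhole on $\{z_1,z_2\}$ disposes of $|U|\ge 3$. The small-$k$ cases work uniformly as you note.

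The paper itself does not give a self-contained proof: it simply points to \cite[Subsection~16.4.1]{bang-jensen_digraphs._2009}, where three concrete P-gadget constructions are described, and asserts that each of them satisfies the strengthened uniqueness requirement of \cref{def:p-gadget}. Your gadget is in fact essentially one of those three (the one often attributed to Szeider), so the two approaches agree on the underlying object; the difference is that you carry out the verification explicitly rather than by citation. The payoff of your write-up is self-containment and a visible check of the added uniqueness clause, which the paper leaves to the reader; the payoff of the paper's approach is brevity and the observation that \emph{any} of the standard gadgets works.
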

\begin{proof}
  Three possible constructions are given in~\cite[Subsection
  16.4.1]{bang-jensen_digraphs._2009} and one can check that they fit our
  altered definition of a P-gadget.
\end{proof}

\begin{theorem}
  \label{compatible-cycles}
  A properly colored cycle in an edge-colored graph can be found in linear time.
\end{theorem}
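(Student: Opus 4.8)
The goal is to find a properly colored (PC) cycle in linear time. The natural approach is to adapt the P-gadget reduction from PC paths to PC cycles: replace each vertex $v$ of the edge-colored graph $G$ by a P-gadget, wiring the edges of $G$ so that edge colors at $v$ determine which "ports" of the gadget are used, then look for a perfect matching (or alternating structure) in the resulting auxiliary graph $H$. Because a PC cycle is "closed", the right reduction target is not an augmenting path but rather an \emph{alternating cycle} — equivalently, the question of whether the perfect matching we build in $H$ is unique, which by \cref{augmenting-path-linear} is decidable in linear time together with production of the witnessing cycle.

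**Key steps, in order.** First I would fix, for each vertex $v$ of $G$, a P-gadget $G_v$ on a port set $V_v$ indexed by the \emph{colors} appearing at $v$ (not by the incident edges themselves); by \cref{p-gadget-existence} each has size $O(\deg_{\mathrm{col}}(v))$ and is built in linear time, so the total size of $H$ is $O(|V(G)| + |E(G)|)$. For each edge $e = (u,v)$ of color $c$, I add to $H$ an edge joining the port of $G_u$ labeled $c$ to the port of $G_v$ labeled $c$. Second, I would record the unique perfect matching $M_v$ of each $G_v$ and set $M = \bigcup_v M_v$; this is a perfect matching of the part of $H$ coming from the gadgets, but not of all of $H$, so instead I work with the matching on $H[\text{gadget vertices}]$ — and here the defining property of a P-gadget (that $G_v[V_v' \setminus U]$ has a perfect matching iff $|U| = 2$) is exactly what forces: a set of inter-gadget edges can be completed to a perfect matching of $H$ if and only if it uses exactly two ports at each involved gadget, i.e.\ it is a disjoint union of "transitions" $e_1 v e_2$ with $e_1, e_2$ differently colored at $v$. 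Third, I would show that \emph{alternating cycles} of $H$ with respect to the gadget matching correspond precisely to PC closed walks in $G$, and then argue — using the uniqueness clause in our strengthened P-gadget definition, exactly as in Szeider's proof that Kotzig's theorem implies Yeo's — that a \emph{shortest} such alternating cycle projects to a PC \emph{cycle} (no repeated vertex) in $G$. Finally, I would invoke \cref{augmenting-path-linear}: decide whether the perfect matching built on the gadget-induced subgraph of $H$ extends uniquely / has no alternating cycle in linear time, extract the cycle if it exists, and project it back to $G$.

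**Main obstacle.** The delicate point is the last structural step: an arbitrary alternating cycle in $H$ projects only to a PC \emph{closed trail} in $G$, which may revisit vertices, and one must extract a genuine PC cycle. For PC \emph{paths} the analogous issue is handled by taking a shortest augmenting path and arguing it cannot self-intersect; for cycles the same idea should work, but the tweak to the P-gadget definition — insisting $M$ be the \emph{unique} perfect matching of the gadget — is precisely what is needed to make the shortest-alternating-cycle argument go through (if a P-gadget admitted its own alternating cycle, a "trivial" cycle inside one gadget would be returned). So the hard part is verifying that with the strengthened P-gadgets, a minimum-length alternating cycle of $H$ both (a) must use at least one inter-gadget edge and (b) visits each gadget at most once, hence projects to a simple PC cycle in $G$; the reverse direction (a PC cycle in $G$ lifts to an alternating cycle in $H$) is routine. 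I also need to double-check the edge case where $G$ has parallel edges of different colors between the same pair of vertices, which should still lift correctly since ports are indexed by colors.
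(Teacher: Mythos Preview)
Your construction --- P-gadgets on color-indexed ports, union of the gadget matchings as the perfect matching $M'$ of the auxiliary graph $G'$, reduction to alternating-cycle detection via \cref{augmenting-path-linear} --- is exactly the paper's. But your ``main obstacle'' is illusory, and your proposed fix would break the linear-time claim. You say an arbitrary alternating cycle for $M'$ projects only to a PC closed trail and plan to take a \emph{shortest} alternating cycle to force simplicity. First, \cref{augmenting-path-linear} returns \emph{some} alternating cycle, not a shortest one, and no linear-time shortest-alternating-cycle algorithm is on offer, so your plan as stated does not run in linear time. Second, and more importantly, the shortest requirement is unnecessary: you already observed in your second step that the inter-gadget edges of any perfect matching $M'' \neq M'$ hit exactly two ports at each involved gadget. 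Apply this with $M'' = M' \triangle C'$ for an \emph{arbitrary} alternating cycle $C'$. Since distinct ports at $v$ encode distinct colors, every vertex of $G$ is incident to zero or exactly two differently colored edges of the projected edge set --- that is already a disjoint union of PC \emph{cycles}, not merely a closed trail. So whatever alternating cycle \cref{augmenting-path-linear} hands you, its projection yields a PC cycle immediately; your point (b) follows directly from the P-gadget axiom, with no minimality argument. (The same remark applies to the PC-path reduction you invoke as an analogy: no ``shortest augmenting path'' step is involved there either.)

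The uniqueness clause in the strengthened P-gadget definition is indeed needed, but only for your point (a): it rules out alternating cycles living entirely inside a single gadget, which would be false positives projecting to the empty set in $G$.
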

\begin{proof}
  We proceed by a linear-time reduction to the problem of finding an alternating
  cycle and apply \cref{augmenting-path-linear}.

  Let $G = (V, E)$ an edge-colored graph. Without loss of generality, we can
  assume that $G$ has no isolated vertices.

  For each $v \in V$, let $c_1, \ldots, c_k$ be the colors used by the edges
  incident to $v$, $k$ being the chromatic degree of $v$. Introduce fresh
  vertices $v_{c_1}, \ldots, v_{c_k}$ and construct a P-gadget $(G_v, M_v)$ on
  $\{v_{c_1}, \ldots, v_{c_k}\}$.

  Build a graph $G'$ by, first, taking the disjoint union of the $G_v$ for $v
  \in V$ and then, for each original edge $e = (u,v) \in E$ with color $c$,
  adding an edge between $u_c$ and $v_c$ in $G'$. The target of the reduction is
  this graph $G'$ equipped with the perfect matching $M' = \bigcup_v M_v$.
  Thanks to \cref{p-gadget-existence}, this reduction is indeed linear-time.

  We claim that this perfect matching is unique if and only if $G$ has no
  properly colored cycle. First, notice that the P-gadgets are disconnected from
  each other by the removal of $E'$ from $G'$. Thus, from the uniqueness
  condition in our definition of P-gadgets, it follows that $M'$ is the only
  perfect matching in $G'$ which does not intersect $E'$.

  Suppose $M'' \neq M'$ is a perfect matching. For each $v \in V$, let $V_v$ be
  the vertices of $G_v$, and $U_v$ be those covered by an edge in $M'' \cap E'$.
  $U_v \subseteq \{v_{c_1}, \ldots, v_{c_k}\}$ and, since $M''$ induces a
  perfect matching on $G_v[V_v \setminus U_v]$, $|U_v| \in \{0,2\}$. $M'' \cap
  E'$ therefore lifts to a non-empty set $C \subseteq E$ of edges in $G$ such
  that every vertex of $G$ is incident either to no edge, or to two edges with
  different colors, of $C$. In other words, $C$ is the edge set of a disjoint
  union of properly colored cycles.

  Thus, by finding an alternating cycle $C'$ for $M'$ in $G'$, we can obtain
  such an $M''$ as the symmetric difference $C' \triangle M'$, which allows us
  to retreive a properly colored cycle in~$G$.

  Conversely, properly colored cycles in $G$ map to alternating cycles for $M'$,
  up to a choice, for each visited vertex $v$, of a perfect matching of $G_v[V_v
  \setminus \{v_c, v_{c'}\}]$ where $c$ and $c'$ are the colors of the two edges
  of the cycle incident to $v$.
\end{proof}

\subsection{Locally 2-colored graphs}
\label{sec:locally}

As mentioned before, we now introduce a new object and put it in bijection with
graphs equipped with perfect matchings. This can be thought of as a new way to
see the latter, whose benefit is to make some reductions to matchings more
obvious.

We write $\partial(u)$ for the set of incident edges of a vertex $u$.

\begin{definition}
  \label{def:locally}
  A \emph{locally 2-colored graph} is a graph in which each vertex $u$ comes
  with a partition $\partial(u) = R_u \sqcup B_u$ (\enquote{red} and
  \enquote{blue} edges) of its incident edges. We allow not only simple graphs,
  but also multigraphs, under the condition that any pair of parallel edges has
  at least one endpoint where they differ in color.

  A path is said to be \emph{compatible} if, for every intermediate vertex $u$,
  one of its incident edges in the path is in $R_u$ and the other is in $B_u$.
  Compatible cycles are defined analogously.
\end{definition}

This generalizes properly colored paths in 2-edge-colored graphs, as well as in
any edge-colored graph with chromatic degree bounded by 2 -- the latter case
will arise in the next section. Local 2-colorings are in turn a particular case
of forbidden transitions (\cref{def:transition}), see \cref{remark-ft-loc}.

\begin{theorem}
  \label{bijection-perfect-matching}
  There is a bijection (modulo isomorphism) between graphs equipped with a
  perfect matching and locally 2-colored graphs, through which alternating paths
  (resp.\ cycles) correspond to compatible paths (resp.\ cycles).
\end{theorem}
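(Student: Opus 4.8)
The plan is to construct the bijection explicitly in both directions and then check that the two constructions are mutually inverse and that the correspondence of paths/cycles holds. Given a graph $G$ with a perfect matching $M$, I would build a locally 2-colored graph $H$ as follows: contract each matching edge $(u,v) \in M$ to a single vertex $w_{uv}$ of $H$. Every non-matching edge $e \in E(G) \setminus M$ has endpoints $a$ and $b$ which lie in (possibly equal) matched pairs; $e$ becomes an edge of $H$ between the corresponding contracted vertices. The local 2-coloring at $w_{uv}$ records, for each incident edge of $H$, whether its preimage in $G$ was attached at the $u$-end or the $v$-end of the matching edge: say edges through $u$ are red and edges through $v$ are blue. (One checks the multigraph condition from \cref{def:locally}: two parallel edges of $H$ between $w_{uv}$ and $w_{xy}$ come from two distinct edges of $G$, which cannot share both endpoints, so they differ in color at $w_{uv}$ or at $w_{xy}$.)

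Conversely, given a locally 2-colored graph $H$, I would expand each vertex $w$ with partition $\partial(w) = R_w \sqcup B_w$ into two vertices $w^R$ and $w^B$ joined by a new edge, which will be placed in $M$; reattach each edge of $H$ incident to $w$ to $w^R$ if it was red and to $w^B$ if it was blue. The resulting graph carries the set of new edges as a perfect matching. These two operations are visibly inverse to each other modulo isomorphism: contracting the matching edges $\{w^R w^B\}$ of the expanded graph and recoloring by endpoint recovers $H$ with its partition, and expanding the contracted vertices of $H$ recovers $G$ with $M$.

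For the correspondence of walks, the key observation is that an edge-step in $H$ that enters a vertex $w$ on a red edge and leaves on a blue edge (or vice versa) lifts, under expansion, to a two-edge segment in $G$: non-matching edge into $w^R$, then the matching edge $w^R w^B$, then non-matching edge out of $w^B$. Conversely an alternating walk in $G$ passes through each matched pair $\{u,v\}$ either by traversing the matching edge (entering at one endpoint, leaving at the other — which under contraction becomes a single vertex visited with one red and one blue incident edge) or not at all. So alternating paths/cycles through $G$ are exactly the compatible paths/cycles of $H$; that the non-repetition constraints (no repeated vertex for paths, no repeated edge implicit in "cycle") match up is immediate from the bijection on vertices and on non-matching edges, together with the fact that an alternating path uses each matching edge at most once.

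The main obstacle, such as it is, is bookkeeping rather than conceptual: one must be careful that every vertex of $G$ is matched (so that every vertex of $G$ really does sit inside some contracted vertex of $H$, and $H$ has no "leftover" structure), and that isolated matching edges — pairs $\{u,v\}$ with no incident non-matching edges — correspond correctly to isolated vertices of $H$ with empty partition, which is why the statement is phrased modulo isomorphism. I would also spell out explicitly that an alternating \emph{path} in $G$ cannot re-enter a matched pair (it would have to reuse the matching edge or revisit a vertex), which is what guarantees the image in $H$ is genuinely a path and not merely a compatible walk; the same for cycles. Once these points are nailed down, the theorem follows.
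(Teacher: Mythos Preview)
Your proposal is correct and follows essentially the same approach as the paper: the paper builds the map from locally 2-colored graphs to graphs with perfect matchings by splitting each vertex $u$ into $u_R,u_B$ joined by a matching edge, states that the inverse is contraction of the matching edges, and defers the path/cycle correspondence to the 2-edge-colored case in~\cite{manoussakis_alternating_1995}; you give the same two maps (starting from the contraction side) and actually spell out the path correspondence in more detail than the paper does. One small slip: the parenthetical ``(possibly equal) matched pairs'' is never realized, since in a simple graph a non-matching edge cannot join two vertices that are matched to each other.
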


\begin{proof}
  Let $G = (V, E, R, B)$ be a locally 2-colored graph; we will build a graph $G'
  = (V',E')$ with a perfect matching $M$ from $G$.

  For each vertex $u \in V$, add two new vertices $u_R, u_B$ to $V'$ and a edge
  $e_u = (u_R, u_B)$ to $E'$. Let $M = \{e_u \mid u \in V\}$. Then, for each
  edge $e = (u, v)$, add an edge $(u_X, v_Y)$ to $E'$, where $X, Y \in \{R,B\}$
  are determined uniquely by the condition $e \in X_u \cap Y_v$. The restriction
  on parallel edges in the definition of locally 2-colored (multi)graphs ensures
  that the resulting graph $G'$ is simple.

  The inverse bijection is given by contracting the edges in a perfect matching
  into vertices. As for the correspondence between compatible and alternating
  paths, it holds for the same reason as in the 2-edge-colored
  case~\cite[Lemma~1.1]{manoussakis_alternating_1995}.
\end{proof}

An example of this bijection is shown in \cref{fig:bijection-pm}.

\begin{figure}
  \centering
  \begin{subfigure}{7.3cm}
    \centering
    \begin{tikzpicture}
      \node[vertex] (w) at (0,2) {};
      \node[vertex] (x) at (4,2) {};
      \node[vertex] (y) at (0,0) {};
      \node[vertex] (z) at (4,0) {};
      \node[vertex] (o) at (2,1) {};
      \draw[thick] (x) -- (z);
      \draw[amber,thick] (x) -- (o);
      \draw[amber,thick] (z) -- (o);
      \draw[lavenderindigo,thick] (w) -- (o);
      \draw[lavenderindigo,thick] (y) -- (o);
    \end{tikzpicture}
    \caption{A graph with chromatic degree $\leq 2$}
    \label{fig:semidirect}
  \end{subfigure}~~~~~\begin{subfigure}{7.3cm}
    \centering
    \begin{tikzpicture}
      \node[vertex] (w) at (0,2) {};
      \node[vertex] (x) at (4,2) {};
      \node[vertex] (y) at (0,0) {};
      \node[vertex] (z) at (4,0) {};
      \node[vertex] (o) at (2,1) {};

      \node[vertex] (w') at (1,5) {};
      \node[vertex] (x') at (5,5) {};
      \node[vertex] (y') at (1,3) {};
      \node[vertex] (z') at (5,3) {};
      \node[vertex] (o') at (3,4) {};

      \draw[matching edge] (x) -- (x');
      \draw[matching edge] (y) -- (y');
      \draw[matching edge] (z) -- (z');
      \draw[matching edge] (w) -- (w');
      \draw[matching edge] (o) -- (o');

      \draw[non matching edge] (x) -- (z);
      \draw[non matching edge] (x') -- (o');
      \draw[non matching edge] (z') -- (o');
      \draw[non matching edge] (w) -- (o);
      \draw[non matching edge] (y) -- (o);
    \end{tikzpicture}
    \caption{The corresponding graph with perfect matching (the matching edges
      are drawn in thick blue)}
  \end{subfigure}
  \caption{The bijection between locally 2-colored graphs and graphs equipped with
    a perfect matching}
\label{fig:bijection-pm}
\end{figure}
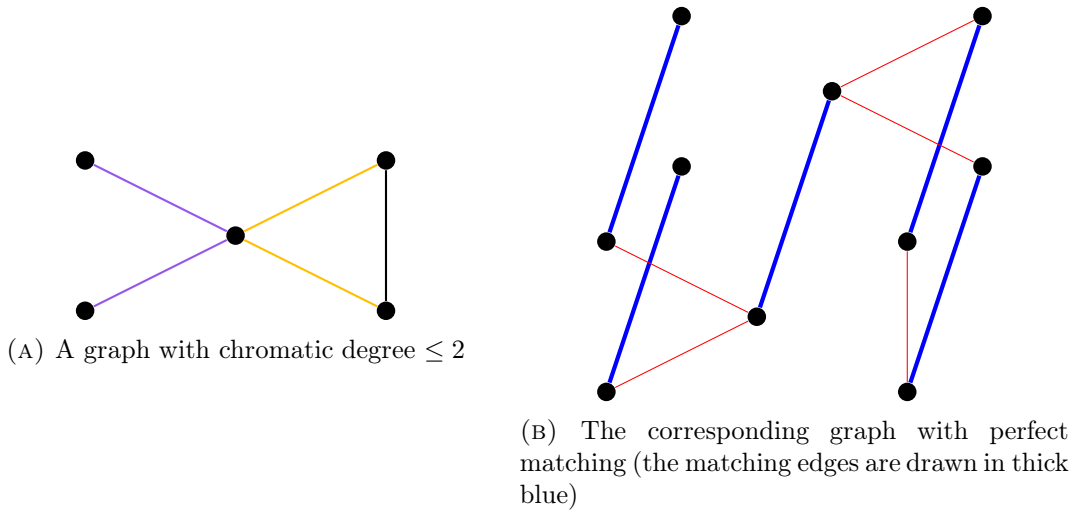

\begin{corollary}
  \label{compatible-paths-2-colored}
  In a locally 2-colored graph, finding a compatible path between two given
  vertices, or a compatible cycle, can be done in linear time.

  Furthermore, if the graph has no compatible cycle, then a compatible path
  joining two given endpoints and visiting a third given intermediate vertex can
  be found in linear time.
\end{corollary}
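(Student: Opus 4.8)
The plan is to combine the bijection of \cref{bijection-perfect-matching} with the linear-time matching results of \cref{augmenting-path-linear} and \cref{lemma-prescribed}. First I would handle the compatible-path problem: given a locally $2$-colored graph $G$ and two prescribed endpoints $u,v$, pass through the bijection to obtain a graph $G'$ with perfect matching $M = \{e_w \mid w \in V\}$; a compatible path from $u$ to $v$ corresponds to an alternating path from one endpoint of $e_u$ to one endpoint of $e_v$. Concretely I would delete $e_u$ and $e_v$ from $M$ (but keep their endpoints), obtaining a matching that leaves exactly four vertices unmatched, and then look for a perfect matching of the resulting graph, using the trick already spelled out in the proof of \cref{lemma-prescribed}: two successive linear-time augmenting-path searches starting from the matching $M \setminus \{e_u, e_v\}$. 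A parity/endpoint argument identical to the one there shows that a perfect matching exists iff the unmatched vertices get paired as "$u_R$ or $u_B$ to $v_R$ or $v_B$" and the leftover two endpoints of $e_u$ (resp.\ $e_v$) to each other, which is exactly what is needed to read off an alternating $u$-to-$v$ path, hence a compatible path in $G$. (If $u$ and $v$ are adjacent by a single edge one can short-circuit; otherwise the reduction is clean.)

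For compatible cycles, the bijection sends them to alternating cycles for $M$ in $G'$, so I would simply invoke the second bullet of \cref{augmenting-path-linear}: decide whether $M$ is the unique perfect matching of $G'$, and if not, extract an alternating cycle in linear time and translate it back. Note the bijection is size-preserving up to a constant factor ($|V'| = 2|V|$, $|E'| = |E| + |V|$), so all these reductions are genuinely linear in the size of the input.

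For the last assertion — a compatible path through a prescribed intermediate vertex $w$ when $G$ has no compatible cycle — I would observe that "no compatible cycle in $G$" translates, via the bijection, to "no alternating cycle for $M$ in $G'$", i.e.\ $M$ is the unique perfect matching of the subgraph induced by the matched vertices. That is exactly the hypothesis of \cref{lemma-prescribed}. After deleting $e_u$ and $e_v$ from the matching we are in the two-unmatched-vertices situation of that lemma, and a compatible path through $w$ corresponds to an alternating $u$-to-$v$ path crossing the prescribed matching edge $e_w \in M$; \cref{lemma-prescribed} gives such a path in linear time. The one point requiring a little care — and the main obstacle — is bookkeeping the endpoints: after removing $e_u,e_v$ the graph has four exposed vertices rather than two, so to apply \cref{lemma-prescribed} verbatim I would first add the auxiliary edge $(u_R,u_B)$-style identification, i.e.\ argue directly as in the proof of \cref{lemma-prescribed} that among the four exposed vertices the only way to complete to a perfect matching of $G' \setminus \{e_w\}$ forces the pairing corresponding to a genuine $u$--$v$ path through $e_w$, the alternative pairings being ruled out precisely by the absence of alternating cycles. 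I expect this endpoint case analysis to be the only non-routine part; everything else is a direct translation through \cref{bijection-perfect-matching}.
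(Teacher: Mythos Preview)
Your treatment of compatible cycles is correct and coincides with the paper's: translate through \cref{bijection-perfect-matching} and invoke the alternating-cycle half of \cref{augmenting-path-linear}.

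The path reduction, however, does not work as stated. After deleting $e_u$ and $e_v$ from $M$ you have four exposed vertices $u_R,u_B,v_R,v_B$. If $e_u,e_v$ remain as edges of $G'$, then $M$ itself is a perfect matching of the resulting graph, so the test \enquote{does a perfect matching exist?} is vacuous. If instead you delete $e_u,e_v$ from the edge set as well, the test becomes strictly stronger than the existence of a single compatible $u$--$v$ path: a perfect matching must cover all four exposed vertices by two vertex-disjoint augmenting paths, and the pairing you describe (\enquote{the leftover two endpoints of $e_u$ to each other}) is impossible once $e_u$ is gone. Concretely, take $G$ to be the compatible path $u\text{--}w\text{--}v$ with $u,v$ of degree~$1$; there is a compatible $u$--$v$ path, yet after removing $e_u,e_v$ the vertices $u_{\bar X}$ and $v_{\bar Y}$ are isolated, so no perfect matching exists. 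The converse implication also fails without an acyclicity hypothesis: a perfect matching could arise from the pairing $(u_R,u_B),(v_R,v_B)$, i.e.\ from compatible cycles through $u$ and through $v$ living in different components, with no $u$--$v$ path at all. So the \enquote{parity/endpoint argument} from \cref{lemma-prescribed} does not carry over here, because that argument essentially used the absence of alternating cycles, which is not assumed in the first part of the corollary.

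The paper sidesteps all of this with a one-line device: it adds two fresh unmatched vertices $s,t$ to $G'$, joining $s$ to both endpoints of $e_u$ and $t$ to both endpoints of $e_v$. Now there are exactly two exposed vertices, and an augmenting path from $s$ to $t$ (found in linear time by \cref{augmenting-path-linear}) is forced to traverse $e_u$ and $e_v$ at its ends; what lies strictly between them is precisely an alternating path corresponding to a compatible $u$--$v$ path. This same construction places you exactly in the hypotheses of \cref{lemma-prescribed} (two exposed vertices, no alternating cycle) for the prescribed-intermediate-vertex statement, so the last part follows immediately with prescribed edge $e_w$, without any ad hoc four-vertex case analysis.
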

\begin{proof}
  Let $G$ be a locally 2-colored graph. The corresponding graph with perfect
  matching $(G',M)$ can be computed in linear time.

  If we are looking for a compatible cycle in $G$, it suffices to find an
  alternating cycle for $M$ in~$G'$. To find a path, suppose that $e, f \in M$
  are the edges of $G'$ corresponding to the required endpoints in $G$; we add
  two auxiliary unmatched vertices $u,v$ to $G'$, join $u$ to both endpoints of
  $e$, and similarly for $v$ and $f$, so that augmenting paths will correspond
  to compatible paths. In both cases, the result follows from
  \cref{augmenting-path-linear}. For the path visiting a prescribed vertex, we
  apply \cref{lemma-prescribed}.
\end{proof}

Using an existing result on 2-edge-colored graphs, we can now show that the
acyclicity assumption in the above corollary and in \cref{lemma-prescribed} is
necessary.

\begin{theorem}[\cite{chou_paths_1994}]
  For a 2-edge-colored graph $G$ and three vertices $s$, $t$ and $u$ in $G$,
  the existence of a properly colored path between $s$ and $t$ containing $u$ is
  an NP-complete problem.
\end{theorem}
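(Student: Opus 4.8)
The plan: membership in NP is immediate — a properly colored $s$--$t$ path through $u$ is a linear-size certificate, and checking simplicity, the endpoints, the occurrence of $u$, and the alternation of consecutive colors is done in linear time — so the content is the NP-hardness. I would get it by reduction from the \emph{directed two vertex-disjoint paths} problem: given a digraph $D$ with distinct terminals $s_1,t_1,s_2,t_2$, are there vertex-disjoint directed paths $s_1\leadsto t_1$ and $s_2\leadsto t_2$? This is one of the classical NP-complete problems (Fortune--Hopcroft--Wyllie). By attaching pendant vertices (a fresh source $\sigma_i$ with a single arc $\sigma_i\to s_i$ and a fresh sink $\tau_i$ with a single arc $t_i\to\tau_i$, $i=1,2$, taken as the new terminals), I may assume the two source terminals have in-degree $0$ and the two target terminals have out-degree $0$; this normalization is what will let me control the direction in which paths get traversed.

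From $D$ I build a $2$-edge-colored graph $G$. First the standard ``in/out split'': each vertex $v$ of $D$ becomes two vertices $v^{\mathrm{in}},v^{\mathrm{out}}$ joined by a \emph{red} edge, and each arc $v\to w$ becomes a \emph{blue} edge $v^{\mathrm{out}}w^{\mathrm{in}}$. Then add a new vertex $u$, a \emph{blue} edge $u\,\tau_1^{\mathrm{out}}$ and a \emph{red} edge $u\,\sigma_2^{\mathrm{out}}$. The produced instance is $(G,s,t,u)$ with $s:=\sigma_1^{\mathrm{in}}$ and $t:=\tau_2^{\mathrm{out}}$.

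Correctness hinges on one observation about the in/out part: there the red edges form a perfect matching $M$ on the pairs $\{v^{\mathrm{in}},v^{\mathrm{out}}\}$, so a properly colored path is exactly an $M$-alternating path, and a short case analysis shows that once such a path enters a pair at its $\mathrm{in}$-vertex, the matching edge sends it to the $\mathrm{out}$-vertex and then a blue arc-edge sends it to the next pair, again at an $\mathrm{in}$-vertex — so from that point it follows the arcs of $D$ \emph{forward} for the rest of its length (symmetrically for ``backward''). Because $\sigma_1$ has in-degree $0$, the only edge at $s$ is red, so every properly colored path out of $s$ starts forward; because $\tau_1,\tau_2$ have out-degree $0$, each $\tau_i^{\mathrm{out}}$ is a dead end apart from the edge $u\tau_1^{\mathrm{out}}$. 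Chasing the colors at the seams $\tau_1^{\mathrm{out}},u,\sigma_2^{\mathrm{out}}$ then forces any properly colored $s$--$t$ path through $u$ to have the shape: a forward directed path $\sigma_1\leadsto\tau_1$ in $D$, then $\tau_1^{\mathrm{out}}\!-\!u\!-\!\sigma_2^{\mathrm{out}}$, then a forward directed path $\sigma_2\leadsto\tau_2$ in $D$. Such a walk is a simple path of $G$ exactly when the two directed paths are vertex-disjoint in $D$, and conversely any pair of vertex-disjoint directed paths glues back into such a properly colored path; this gives the equivalence and hence NP-completeness.

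The only delicate point is this direction-forcing observation together with the coloring at the three seam vertices: one has to check that a properly colored path cannot reach $u$ from the wrong side, cannot leave $u$ and then turn back, and cannot overshoot $t$ — all of which are prevented by the source/sink normalization and the red/blue pattern, but which require the parities along the gadget to line up. (This is essentially the reduction behind the cited result of Chou et al.; I reconstruct it here only to exhibit how the theorem goes.)
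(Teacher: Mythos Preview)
The paper does not give its own proof of this theorem: it is quoted as a known result from \cite{chou_paths_1994} and then used as a black box to derive \cref{prescribed-augmenting-npc}. So there is nothing in the paper to compare your argument against.

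That said, your reconstruction is sound. The in/out split turns the $2$-coloring into a perfect matching whose alternating paths are exactly the forward (or backward) directed walks of $D$; your pendant normalization pins the direction at $s=\sigma_1^{\mathrm{in}}$ to ``forward'', and the degree constraints at $\tau_1^{\mathrm{out}}$, $u$, and $\sigma_2^{\mathrm{out}}$ force any properly colored $s$--$t$ path through $u$ to decompose as a $\sigma_1\!\leadsto\!\tau_1$ path, the seam $\tau_1^{\mathrm{out}}\!-\!u\!-\!\sigma_2^{\mathrm{out}}$, and a $\sigma_2\!\leadsto\!\tau_2$ path, with simplicity in $G$ equivalent to vertex-disjointness in $D$. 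One small point worth making explicit in a write-up: after the seam, the path enters the second copy at $\sigma_2^{\mathrm{out}}$ (not $\sigma_2^{\mathrm{in}}$), but since $\sigma_2$ has in-degree $0$ this cannot interfere with the first half; and the only way the forward walk from $s$ can reach $u$ is via $\tau_1^{\mathrm{out}}$, since $\sigma_2^{\mathrm{in}}$ is unreachable going forward. With these checks, the reduction from the Fortune--Hopcroft--Wyllie two-disjoint-paths problem is correct.
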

\begin{corollary}
  \label{prescribed-augmenting-npc}
  Finding an augmenting path crossing a prescribed matching edge is an
  NP-complete problem, even when there are only two unmatched vertices.
\end{corollary}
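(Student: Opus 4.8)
The plan is to reduce from the preceding NP-completeness theorem of Chou: deciding, for a 2-edge-colored graph $G$ with three distinguished vertices $s,t,u$, whether there is a properly colored $s$--$t$ path passing through $u$. The reduction is essentially the construction already used in the proof of \cref{compatible-paths-2-colored}, now read in the opposite (hardness) direction and without the acyclicity hypothesis. The first step is to observe that a 2-edge-colored graph is in particular a locally 2-colored graph: at each vertex one puts the edges of the first colour into the class $R$ and those of the second colour into the class $B$, and then \enquote{properly colored} and \enquote{compatible} coincide. Hence \cref{bijection-perfect-matching} turns $G$ into a graph $G'$ with a perfect matching $M$, in which compatible paths correspond to alternating paths; under this bijection every vertex $w$ of $G$ becomes a matching edge $e_w=(w_R,w_B)\in M$, with the red edges at $w$ attached to $w_R$ and the blue ones to $w_B$.

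Next I would build the target instance of the prescribed-edge augmenting-path problem. Starting from $(G',M)$, add two fresh vertices $p$ and $q$, join $p$ to both endpoints of $e_s$ and $q$ to both endpoints of $e_t$, and call the result $G''$. Then $M$ is still a matching of $G''$, and its only unmatched vertices are $p$ and $q$ — exactly two, as required. The prescribed matching edge is $e_u$. I may assume $s,t,u$ are pairwise distinct, the remaining cases being trivial. The claim to establish is: $G''$ has an augmenting path for $M$ crossing $e_u$ if and only if $G$ has a properly colored $s$--$t$ path containing $u$.

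For correctness, note that any augmenting path in $G''$ runs between $p$ and $q$; since $p$ and $q$ are unmatched, its first and last edges are non-matching, so its second edge is the matching edge $e_s$ and its second-to-last edge is $e_t$. Deleting the two extreme edges leaves an alternating path for $M$ in $G'$ whose first and last edges are $e_s$ and $e_t$; contracting the matching edges, this descends by \cref{bijection-perfect-matching} to a compatible — i.e.\ properly colored — $s$--$t$ path in $G$ (at every internal vertex $w\neq s,t$ the alternating path must traverse $e_w$, forcing exactly one red and one blue incident edge, and the endpoints $s,t$ carry no constraint). This path visits $u$ precisely when the alternating path uses $e_u$, i.e.\ precisely when the augmenting path crosses $e_u$. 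The converse is the same chain of equivalences run backwards: a properly colored $s$--$t$ path through $u$ lifts to an alternating path from an endpoint of $e_s$ to an endpoint of $e_t$ that uses $e_u$, and prepending the edge to $p$ and appending the edge to $q$ yields the desired augmenting path. The whole construction is computable in linear time, and membership in NP is immediate since an augmenting path is itself a certificate, so NP-completeness follows.

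I do not expect a real obstacle here: the construction is almost verbatim that of \cref{compatible-paths-2-colored}, and the only points requiring a little care — that there is no colour constraint at the endpoints $s$ and $t$ (mirrored by the freedom in $G''$ to leave $e_s$, resp.\ enter $e_t$, along an edge of either colour), and the identification of \enquote{the augmenting path crosses $e_u$} with \enquote{the properly colored path visits $u$} — both follow directly from the properties of the bijection of \cref{bijection-perfect-matching} already recorded above.
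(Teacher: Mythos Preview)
Your proof is correct and follows exactly the approach the paper intends: the reduction you spell out is precisely \enquote{the same reduction as the previous corollary} that the paper invokes, i.e.\ the construction from \cref{compatible-paths-2-colored} applied in the hardness direction to the Chou problem. The only point where you are slightly loose is in the converse direction, where the lifted alternating path must be taken to include $e_s$ and $e_t$ as its first and last edges so that prepending the edge to $p$ (and appending to $q$) preserves alternation; but this is consistent with what you wrote and with the paper's equally terse treatment.
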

\begin{proof}
  NP-hardness is established by the same reduction as the previous corollary,
  and the problem is in NP (guess the path non-deterministically).
\end{proof}

Beyond its algorithmic applications, this combinatorial bijection also allows us
to recover a well-known equivalence. A \emph{directed} graph induces a local
2-coloring on the underlying undirected (multi)graph, which distinguishes
incoming and outgoing arcs at each vertex; directed paths then correspond to
compatible paths. On the other hand, finding an augmenting path for a matching
in a \emph{bipartite} graph amounts to traversing a directed graph; indeed, this
is implicitly what happens when applying the Ford--Fulkerson flow algorithm to
find a maximum matching, and it is also why historically bipartite maximum
matchings were solved before the general case.

\begin{proposition}
  The previous bijection sends \emph{bipartite} graphs equipped with a perfect
  matching on the locally 2-colored encodings of \emph{directed} graphs and vice
  versa.
\end{proposition}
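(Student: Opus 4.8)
The plan is to unfold the explicit construction behind \cref{bijection-perfect-matching} and read off directly when the resulting perfect-matching graph is bipartite. Recall first what it means for a locally $2$-colored graph to be the encoding of a directed graph: a directed graph $D$ on vertex set $V$ colours, at each vertex $u$, the arcs entering $u$ red and the arcs leaving $u$ blue, and the local $2$-colorings obtained this way are exactly those in which every edge $e = (u,v)$ has one red endpoint and one blue endpoint — orienting $e$ from its blue endpoint towards its red endpoint then recovers $D$. I will show that, under the bijection $G \mapsto (G',M)$ of \cref{bijection-perfect-matching} (with the notation of its proof: each $u \in V$ splits into $u_R, u_B$ joined by the matching edge $e_u$, and an edge $e \in X_u \cap Y_v$ of $G$ becomes the edge $(u_X, v_Y)$ of $G'$), these are precisely the $G$ for which $G'$ is bipartite, and that the correspondence commutes with taking inverses.

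For the forward direction I would argue as follows. Suppose $G$ is the encoding of a directed graph $D$, and take any edge $e = (u,v)$; as an arc of $D$ it points, say, from $u$ to $v$, so $e \in B_u \cap R_v$ and hence $e$ becomes the edge $(u_B, v_R)$ of $G'$. Since each matching edge $e_u = (u_R, u_B)$ also has one endpoint of each kind, the partition $V' = \{u_R \mid u \in V\} \sqcup \{u_B \mid u \in V\}$ exhibits $G'$ as bipartite.

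For the converse, suppose $G'$ is bipartite with bipartition $(P,Q)$. Every matching edge $e_u$ crosses the cut, so exactly one of $u_R, u_B$ lies in $P$; let $\iota(u) \in \{R,B\}$ be the colour of that endpoint. Given an edge $e = (u,v)$ of $G$ with $e \in X_u \cap Y_v$, its image $(u_X, v_Y)$ in $G'$ also crosses the cut, so up to exchanging the roles of $u$ and $v$ we may assume $u_X \in P$ and $v_Y \in Q$, which forces $X = \iota(u)$ and $Y \ne \iota(v)$. Orienting $e$ from $v$ to $u$ in this case (and symmetrically in the other) produces a directed graph $D$ in which, at every vertex $w$, the arcs entering $w$ are exactly the edges whose colour at $w$ is $\iota(w)$; hence $G$ — after renaming, at each vertex $w$ with $\iota(w) = B$, its two colour classes — is the encoding of $D$. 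A short check shows this orientation construction is inverse to the one used in the forward direction, which gives the claimed bijective correspondence between the two subclasses.

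The point that requires care rather than real computation is the bookkeeping of colour labels: producing $(G',M)$ from a locally $2$-colored graph, and reconstructing the latter, are canonical only up to choosing at each vertex which of its two classes is called \enquote{red} — this is why the renaming appears in the converse above. However, such a renaming is precisely an isomorphism on the perfect-matching side (it swaps the two endpoints of a matching edge) and does not affect the notion of compatible path or cycle, so the correspondence is well defined at the \enquote{modulo isomorphism} level at which \cref{bijection-perfect-matching} is stated. It remains only to dispatch the multigraph corner cases — parallel edges, and whether loops are allowed — but these follow immediately from the fact, already part of \cref{bijection-perfect-matching}, that $G'$ is a simple graph.
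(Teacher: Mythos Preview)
Your proof is correct and follows the same route as the paper's: both unfold the explicit bijection and identify directed-graph encodings as those locally 2-colored graphs in which every edge receives different colours at its two endpoints, hence maps to an edge between $V_R$ and $V_B$ in $G'$. You are in fact more thorough than the paper: its proof only spells out the direction \enquote{directed encoding $\Rightarrow$ bipartite} and leaves the converse implicit, whereas you supply that converse via the function $\iota$ and the relabeling argument, and you also address the \enquote{modulo isomorphism} subtlety head-on.
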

\begin{proof}
  In a locally 2-colored graph coming from a directed graph, each edge has two
  different colors since it is incoming for one of its endpoints and outgoing
  for the other one. Conversely, any such locally 2-colored graph can be
  realized as a directed graph: simply orient each edge from its blue side to
  its red side.

  Now, this means that in the corresponding graph with perfect matching, every
  edge is between $V_R = \{u_R \mid u \in V\}$ and $V_B = \{v_B \mid v \in V\}$
  (reusing the notations from a previous proof). Therefore, the graph is
  bipartite, with the partition of the vertices being $V' = V_R \sqcup V_B$.
\end{proof}

\begin{remark}
  One could actually develop in a straightforward way, using the machinery of
  P-gadgets (previous subsection), a theory of \enquote{locally $k$-colored
    graphs} for $k \geq 2$, with linear time algorithms and a general
  formulation of the \enquote{local Yeo's theorem}. We will not detail this
  generalization since it will not be needed in the rest of the paper.
\end{remark}

\section{Graphs with forbidden transitions}


As mentioned in the introduction, we will turn our attention to a very general
notion of local constraints on paths and trails.

\begin{definition}[\cite{szeider_finding_2003}]
  \label{def:transition}
  Let $G = (V,E)$ be a graph. A \emph{transition graph} for a vertex $v \in
  V$ is a graph whose vertices are the edges incident to $v$ : $T(v) =
  (\partial(v), E_v)$. A \emph{transition system} on $G$ is a family $T =
  (T(v))_{v \in V}$ of transition graphs. A graph equipped with a transition
  system is called a \emph{graph with forbidden transitions}.

  A path (resp.\ trail) $v_1, e_1, v_2 \ldots, e_{k-1}, v_k$ is said to be
  \emph{compatible} if for $i = 1, \ldots, k-1$, $e_i$ and $e_{i+1}$ are
  adjacent in $T(v_{i+1})$. For a \emph{cycle} (resp.\ \emph{closed} trail), we
  also require $e_{k-1}$ and $e_1$ to be adjacent in $T(v_1) = T(v_k)$.

  (That is, the edges of the transition graphs specify the \emph{allowed}
  transitions, i.e.\ the pairs of edges which may occur consecutively in a
  compatible path or trail.)
\end{definition}

\begin{remark}
  \label{remark-ft-loc}
  An edge-coloring of a graph induces a transition system made of complete
  multipartite graphs: for each vertex $v$, two edges of $\partial(v)$ are
  adjacent in $T(v)$ if and only if they have different colors. Properly colored
  paths (resp.\ trails) are exactly the paths (resp.\ trails) compatible with
  this transition system.

  For a similar reason, locally 2-colored graphs (\cref{def:locally}) correspond
  exactly\footnote{Modulo the fact that in a locally 2-colored graph, a pair of
    vertices may have two parallel edges joining them, if they are colored
    differently on one endpoint.} to graphs with forbidden transitions whose
  transition graphs are all complete bipartite, the notion of compatible path /
  cycle / (closed) trail being the same.
\end{remark}

\subsection{The edge-colored line graph}

We now introduce a key construction of this paper.


\begin{definition}
  \label{def:lec}
  Let $G = (V, E)$ be a graph and $T$ be a transition system on $G$. The
  \emph{EC-line graph} $L_{EC}(G, T)$ is formed by taking the line graph of $G$,
  coloring its edges so that the clique corresponding to $v$ is given the color
  $v$ (using the vertices of $G$ as the set of colors), and deleting the edges
  corresponding to forbidden transitions.

  Formally, $L_{EC}(G, T)$ is defined as the graph with vertex set $E$ and edge
  set $F = \bigsqcup_{v \in V} T(v)$, equipped with an edge coloring $c : E' \to
  V$ with values in $V$: for $f \in F$, $c(f)$ is the unique vertex such that $f
  \in T(c(f))$.
\end{definition}



Consider a vertex in the line graph, corresponding to an edge $e = (u,v)$ in the
original graph. Its neighbors are the edges which share a vertex with $e$.
Equipping the line graph with an edge coloring allows us to distinguish between
the neighbors of $e$ incident to $u$ and those incident to $v$. The following
proposition encapsulates the usefulness of this additional information.

\begin{proposition}
  \label{lec-magic}
  The compatible \emph{cycles} (resp.\ \emph{closed trails}) of a graph
  with forbidden transitions correspond bijectively to \emph{rainbow} (resp.\
  \emph{properly colored}) cycles in its EC-line graph.
\end{proposition}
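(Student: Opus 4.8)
The plan is to set up the bijection at the level of walks and then check that it preserves the relevant combinatorial structure. Recall that a closed walk in $G$ is a sequence $v_1, e_1, v_2, \ldots, e_k, v_1$; its image should be the closed walk $e_1, f_1, e_2, f_2, \ldots, e_k, f_k, e_1$ in $L_{EC}(G,T)$, where each vertex of $L_{EC}(G,T)$ is an edge of $G$, and $f_i$ is the transition (element of $T(v_{i+1})$) joining $e_i$ and $e_{i+1}$ — this transition exists and is unique precisely because, in a closed trail, $e_i \neq e_{i+1}$ so the pair $\{e_i, e_{i+1}\}$ is a genuine potential edge of the line graph, and it lies in $T(v_{i+1})$ iff the transition is allowed. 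The colour of $f_i$ is $v_{i+1}$ by \cref{def:lec}. Conversely, a closed walk in $L_{EC}(G,T)$ traverses vertices $e_1, \ldots, e_k$ (edges of $G$) along edges $f_1, \ldots, f_k$ whose colours are vertices $w_1 = c(f_1), \ldots, w_k = c(f_k)$ of $G$; I would recover the closed walk $w_k, e_1, w_1, e_2, w_2, \ldots$ in $G$, observing that consecutive edges $e_i, e_{i+1}$ of $G$ share the vertex $w_i$ since $f_i \in T(w_i) = T(c(f_i))$ forces $w_i$ to be a common endpoint.

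First I would verify that this correspondence is well-defined on closed walks in both directions and is mutually inverse, which is essentially bookkeeping once the indexing convention above is fixed. Next I would check the ``closed trail $\leftrightarrow$ properly colored cycle'' half: a closed walk in $G$ is a closed trail iff the $e_i$ are pairwise distinct, which is exactly the condition that the image closed walk in $L_{EC}(G,T)$ visits pairwise distinct \emph{vertices}, i.e.\ is a cycle; and it is a closed walk in $L_{EC}(G,T)$ at all (i.e.\ each $f_i$ exists as an edge) iff every transition $\{e_i, e_{i+1}\}$ is allowed, i.e.\ the trail is compatible. Moreover two consecutive edges $f_{i-1}, f_i$ of the image have colours $v_i$ and $v_{i+1}$; these are equal iff $v_i = v_{i+1}$, which for a trail in a graph cannot happen at an intermediate step unless $e_i$ is a loop — so assuming no loops (or handling loops as a trivial degenerate case), a compatible closed trail maps to a \emph{properly} colored closed walk, and since it is also a cycle (distinct vertices), a properly colored cycle. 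Conversely a properly colored cycle in $L_{EC}(G,T)$ has all consecutive colours distinct and all vertices distinct, which translates back to distinct $v_i$'s at consecutive positions and distinct $e_i$'s, giving a compatible closed trail; the properness is automatically available and is not even needed for the reverse direction beyond ensuring the walk is a genuine closed trail rather than backtracking.

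Then I would treat the ``cycle $\leftrightarrow$ rainbow cycle'' half, which is the sharper of the two. A compatible cycle in $G$ has, in addition to distinct edges, distinct \emph{vertices} $v_1, \ldots, v_k$; under the correspondence these are exactly the colours $c(f_1), \ldots, c(f_k)$ of the edges of the image, so the vertex-distinctness of the cycle in $G$ is equivalent to the colours of the image cycle being pairwise distinct — that is, the image is a \emph{rainbow} cycle in $L_{EC}(G,T)$. (Note a rainbow cycle is in particular properly colored, consistent with the inclusion ``cycle $\subseteq$ closed trail''.) Conversely, starting from a rainbow cycle in $L_{EC}(G,T)$: it is a cycle, so the $e_i$ are distinct, giving a closed trail in $G$; and it is rainbow, so the colours $v_i = c(f_i)$ are distinct, giving distinct vertices, hence a cycle in $G$; compatibility is automatic since every edge $f_i$ of $L_{EC}(G,T)$ is by construction an allowed transition.

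The only real subtlety I anticipate is the handling of degenerate short cases — parallel edges, loops, and closed trails/cycles of length $1$ or $2$ — and pinning down exactly which ``closed walk'' is glued to which so that the color of $f_i$ indexes the right vertex $v_{i+1}$ of $G$; this is where an off-by-one in the cyclic indexing would cause the correspondence to misalign. I would address this by fixing the convention once and for all at the start (as above, with colours reading off the \emph{intermediate} vertices of $G$, cyclically) and noting explicitly that the definition of a trail/closed trail forbids $e_i = e_{i+1}$, which is precisely what makes ``$\{e_i,e_{i+1}\}$ is an edge of the line graph'' meaningful. Everything else is a direct unwinding of \cref{def:lec}.
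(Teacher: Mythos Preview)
Your proposal is correct and follows the same approach as the paper --- indeed, it is far more detailed than the paper's own proof, which is little more than a sketch pointing to the one nontrivial observation (the $e \to f \to g$ example showing why proper coloring is needed to recover a genuine walk in $G$). One small expository point: your closing remark that properness is ``not even needed for the reverse direction beyond ensuring the walk is a genuine closed trail rather than backtracking'' slightly undersells the issue --- without properness the preimage is not a walk in $G$ at all (as in the paper's $e,f,g$ example), not merely a backtracking one --- but you have already handled this correctly earlier in the paragraph by noting that consecutive colours being distinct forces $w_{i-1} \neq w_i$, hence $e_i = (w_{i-1}, w_i)$.
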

\begin{proof}
  What happens here is fairly intuitive. Let us however pinpoint a crucial role
  played by the edge coloring: it excludes monochromatic sub-paths of length $>
  1$ in transition graphs, which could allow forbidden transitions to be taken.
  For instance, suppose we have a vertex with three incident edges $e, f, g$
  such that the allowed transitions are $e \leftrightarrow f$ and $f
  \leftrightarrow g$; then a path in the EC-line graph containing the sub-path
  $e \to f \to g$ would translate, in the original graph, to a path where
  $e$ and $g$ occur consecutively.
  
  This explains why we want properly colored cycles -- rainbow cycles being, in
  particular, properly colored. As for the distinction between compatible cycles
  and closed trails, notice that the global condition of non-repetition of
  vertices translates in the EC-line graph to requiring a rainbow cycle, whereas
  a repeated edge in the original graph would correspond to a repeated
  vertex in the EC-line graph.
\end{proof}

The situation for paths and trails is analogous to the above, a few precisions
being necessary to ensure bijectivity.

\begin{proposition}
  Let $G$ be a graph with transition system $T$, and $s,t$ be two distinct
  vertices of $G$.

  The compatible \emph{paths} between $s$ and $t$ correspond bijectively to
  \emph{rainbow} paths in $L_{EC}(G,T)$ between some vertex of $\partial(s)$
  (identified with a subset of vertices in $L_{EC}(G,T)$) and some vertex of
  $\partial(t)$ which do not cross any edge with color $s$ or $t$.

  Similarly, the compatible \emph{trails} between $s$ and $t$ where neither $s$
  nor $t$ appear as intermediate vertices correspond bijectively to
  \emph{properly colored} paths in $L_{EC}(G,T)$ between some vertex of
  $\partial(s)$ and some vertex of $\partial(t)$ which do not cross any edge
  with color $s$ or~$t$.
\end{proposition}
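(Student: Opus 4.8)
The plan is to mimic the proof of \cref{lec-magic} (the cycle/closed-trail case) and adapt the bijection to the open setting, where the subtlety is handling the two endpoints $s$ and $t$ properly. First I would set up the correspondence at the level of walks. Given a compatible trail $W = s, e_1, v_2, e_2, \ldots, e_{k-1}, t$ in $G$ in which $s$ and $t$ occur only as the first and last vertices, the sequence of edges $e_1, e_2, \ldots, e_{k-1}$ is a walk in $L_{EC}(G,T)$: consecutive edges $e_i, e_{i+1}$ are adjacent in $T(v_{i+1})$, which is exactly an edge of $L_{EC}(G,T)$ of color $v_{i+1}$. Since the intermediate vertices $v_2, \ldots, v_{k-1}$ are all distinct from $s$ and from $t$, this walk uses no edge of color $s$ or $t$. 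Its first vertex $e_1$ lies in $\partial(s)$ and its last vertex $e_{k-1}$ lies in $\partial(t)$. Conversely, from a walk $e_1, \ldots, e_{k-1}$ in $L_{EC}(G,T)$ with $e_1 \in \partial(s)$, $e_{k-1} \in \partial(t)$, avoiding colors $s$ and $t$, one reconstructs a compatible walk in $G$ by reading off $v_{i+1} := c(\{e_i, e_{i+1}\})$ as the intermediate vertices and prepending $s$, appending $t$; the color-avoidance guarantees no $v_{i+1}$ equals $s$ or $t$, so $s,t$ do not reappear inside. I would note explicitly (as in \cref{lec-magic}) that $e_1 \neq e_{k-1}$ since one is incident to $s$, the other to $t$, and $s \neq t$, and that the reconstruction of intermediate vertices is forced, so the two maps are mutually inverse as operations on walks.

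Next I would translate the global non-repetition conditions. A trail in $G$ is a walk without repeated edges; under the correspondence above this is exactly the condition that $e_1, \ldots, e_{k-1}$ are pairwise distinct, i.e.\ that the walk in $L_{EC}(G,T)$ is a path (no repeated vertices). The "properly colored" condition on this path is automatic from the structure: two consecutive edges of the path have colors $v_{i+1}$ and $v_{i+2}$, which are distinct because the original trail does not repeat the vertex between them --- more precisely, if $\{e_{i-1},e_i\}$ and $\{e_i,e_{i+1}\}$ had the same color $v$, then $e_{i-1}$ and $e_{i+1}$ would both be incident to $v$ along with $e_i$, forcing the same vertex to occur at two positions of the walk, hence a repeated vertex in a trail only if the trail is long enough to revisit --- so I would instead argue directly as in the proof of \cref{lec-magic}: a monochromatic sub-path of length two in $L_{EC}$ corresponds to a vertex appearing twice, which is fine for a trail only if... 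Actually the clean statement is: in the walk-level bijection, a trail maps to a properly colored walk precisely because the color of edge $i$ of the $L_{EC}$-walk is the $(i{+}1)$-st vertex of the $G$-walk, and in a walk the middle vertex $v_{i+1}$ of two consecutive $G$-edges $e_i,e_{i+1}$ differs from the middle vertex $v_{i+2}$ of $e_{i+1},e_{i+2}$ unless $e_i = e_{i+2}$; for a trail the latter is impossible, so consecutive colors differ, i.e.\ the $L_{EC}$-path is properly colored. So properly colored is not an extra restriction but a consequence, and conversely every properly colored path in $L_{EC}$ (avoiding $s,t$, with the right endpoints) yields a genuine trail --- proper coloring rules out exactly the pathology flagged in \cref{lec-magic} where a forbidden transition sneaks in via a monochromatic length-two segment.

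For the path case, the same walk-level bijection applies, but now the condition "no repeated vertices in $G$" must be matched. A path in $G$ visiting vertices $s = v_1, v_2, \ldots, v_k = t$ corresponds to the $L_{EC}$-walk $e_1, \ldots, e_{k-1}$ where each $e_i = \{v_i, v_{i+1}\}$; distinctness of the $v_i$ forces distinctness of the $e_i$ (so the $L_{EC}$-walk is a path) and moreover the stronger rainbow condition: two edges $e_i, e_j$ of this $L_{EC}$-path have colors among $v_2, \ldots, v_{k-1}$ (after ignoring the endpoint colors $v_1 = s$, $v_k = t$ which are excluded anyway), and these interior vertices are all distinct, so all the colors appearing on the $L_{EC}$-path are distinct --- the path is rainbow. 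Conversely a rainbow path in $L_{EC}$ avoiding colors $s,t$, with $e_1 \in \partial(s)$ and $e_{k-1} \in \partial(t)$, reconstructs a sequence of vertices $s, v_2, \ldots, v_{k-1}, t$ that are pairwise distinct: the interior ones because the colors (= interior vertices) are distinct by rainbow-ness, and none of them equals $s$ or $t$ by color-avoidance; the endpoints $s \neq t$ by hypothesis. Hence a genuine compatible path.

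The main obstacle, and the point that needs the most care, is the bookkeeping at the endpoints: ensuring that the condition "$s$ and $t$ do not appear as intermediate vertices" (for trails) or the vertex-simplicity at $s,t$ (for paths) is correctly captured by "the $L_{EC}$-path avoids edges of color $s$ and of color $t$" together with "its endpoints lie in $\partial(s)$ and $\partial(t)$ respectively" --- and to check that no double-counting occurs, e.g.\ a trail of length one ($s$ adjacent to $t$) corresponds to a single vertex of $L_{EC}$, which is a degenerate path, and this edge case should be handled (or excluded) consistently. I would also remark why the first bullet needs rainbow and the second only properly colored: it is the same dichotomy as in \cref{lec-magic} --- repeated vertices in $G$ versus repeated edges in $G$ --- transported to the open case, with the endpoint colors $s,t$ being precisely the colors one must forbid so that the endpoints themselves are not silently revisited. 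Beyond that, everything is the routine verification that the two described maps are mutually inverse, which I would state but not belabor.
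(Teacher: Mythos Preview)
Your approach is exactly what the paper intends: it states this proposition without proof, noting only that ``the situation for paths and trails is analogous to the above [\cref{lec-magic}], a few precisions being necessary to ensure bijectivity,'' and you have carried out precisely that analogy, including the endpoint bookkeeping (avoiding colors $s$ and $t$) that constitutes the ``precisions''. One small slip: the claim that $e_1 \neq e_{k-1}$ because ``one is incident to $s$, the other to $t$'' is false in the length-one case (a single edge from $s$ to $t$ is incident to both), as you yourself notice later --- just drop that sentence, since the bijection does not need it.
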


From now on until the end of this section, we will focus mostly on the
correspondence between compatible trails and properly colored paths. To apply
the results of \cref{sec:locally}, we remark that:

\begin{proposition}
  \label{ec-2col}
  EC-line graphs have chromatic degree $\leq 2$, and can therefore be seen as
  locally 2-colored paths.
\end{proposition}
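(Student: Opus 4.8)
The plan is a direct unfolding of \cref{def:lec} followed by an appeal to the discussion of \cref{sec:locally}. First I would fix a vertex of $L_{EC}(G,T)$, which by definition is an edge $e = (u,v) \in E$ of $G$, and examine the edges incident to it. An edge of $L_{EC}(G,T)$ is an element $f$ of some transition graph $T(w)$, whose vertex set is $\partial(w)$; for $f$ to be incident to $e$ we need $e \in \partial(w)$, i.e.\ $w \in \{u,v\}$. Since the color of $f$ is exactly $w$, every edge of $L_{EC}(G,T)$ incident to $e$ carries color $u$ or color $v$. Hence the chromatic degree of $e$ is at most $2$, which is the first assertion.

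For the second assertion I would invoke the observation, recorded just after \cref{def:locally} (and again in \cref{remark-ft-loc}), that an edge-colored graph with chromatic degree $\le 2$ can be regarded as a locally 2-colored graph: at each vertex $e$ one chooses arbitrarily one of the at most two colors appearing on $\partial(e)$ to be \enquote{red} and the other to be \enquote{blue}, obtaining a partition $\partial(e) = R_e \sqcup B_e$, and a path is properly colored precisely when it is compatible for this local 2-coloring. The only point needing care is the parallel-edge condition in \cref{def:locally} when $G$ is a multigraph. Two parallel edges of $L_{EC}(G,T)$ joining vertices $e$ and $e'$ can only come from $e$ and $e'$ being themselves parallel edges of $G$, say both joining $u$ and $v$: they are then the transition $\{e,e'\} \in T(u)$ and the transition $\{e,e'\} \in T(v)$, carrying the distinct colors $u$ and $v$. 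So at the endpoint $e$ (and likewise at $e'$) these two parallel edges fall into different color classes, hence differ in color there, as required.

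I do not expect a genuine obstacle: the statement is essentially bookkeeping about which cliques of the line graph a given vertex belongs to, plus the multigraph caveat just discussed, which is vacuous when $G$ is simple. The mild subtlety — and the reason it is worth stating — is that the two colors available at a vertex $e$ of $L_{EC}(G,T)$ have a canonical meaning (they are the two endpoints $u$ and $v$ of $e$ in $G$), so one should either keep this labelling or observe that the compatibility notion is insensitive to the red/blue relabelling.
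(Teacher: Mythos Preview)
Your argument is correct. The paper actually gives no proof at all for this proposition --- it is stated without a proof environment and treated as immediate from the definitions --- so there is nothing to compare against. Your unfolding of \cref{def:lec} is exactly the right justification: a vertex $e=(u,v)$ of $L_{EC}(G,T)$ can only be incident to edges coming from $T(u)$ or $T(v)$, hence only colors $u$ and $v$ occur at $e$.

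One small remark: your careful treatment of the parallel-edge condition in \cref{def:locally} is a nice touch, but in the paper's setting $G$ is implicitly a simple graph, so two distinct edges $e,e'$ of $G$ share at most one endpoint and $L_{EC}(G,T)$ is automatically simple. The multigraph discussion is thus not strictly needed here, though it does no harm and shows you have understood where the subtlety would lie.
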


Composing the EC-line graph construction with the bijection from
\cref{bijection-perfect-matching}, we end up with a direct reduction to
matchings. It turns out to be simple and clean, because we exploit the low
chromatic degree; otherwise, we would have had to use the reduction based on
P-gadgets (\cref{sec:pc-cycle-linear}) instead of the bijection, which would
have given a messier result.

\begin{definition}
  \label{def:lpm}
  The \emph{PM-line graph} $L_{PM}(G,T)$ is defined as the graph:
  \begin{itemize}
  \item with vertex set $\{ u_e \mid e \in E,\, u \text{ is an endpoint of } e \}$;
  \item with edge set $M \sqcup E'$, where
    \begin{itemize}
    \item $M = \{ (u_e, v_e) \mid e = (u,v) \in E \}$;
    \item $E' = \{ (u_e, u_f) \mid u \in V,\, e,f \in \partial(u)
      \text{ are adjacent in } T(u) \}$;
    \end{itemize}
  \item equipped with the perfect matching $M$.
  \end{itemize}
\end{definition}


\begin{proposition}
  Compatible closed trails in a graph with forbidden transitions correspond
  bijectively to alternating cycles of its PM-line graph.
\end{proposition}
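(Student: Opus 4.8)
The plan is to realize $L_{PM}(G,T)$ as the image of the EC-line graph $L_{EC}(G,T)$ under the bijection of \cref{bijection-perfect-matching}, and then to chain together the correspondences already at our disposal. By \cref{ec-2col}, $L_{EC}(G,T)$ has chromatic degree $\leq 2$, and it carries a canonical local $2$-coloring: at a vertex $e$ of $L_{EC}(G,T)$ -- that is, an edge $e = (u,v)$ of $G$ -- the incident edges split into those of color $u$ (the transitions of $T(u)$ through $e$) and those of color $v$ (the transitions of $T(v)$ through $e$). Rather than naming the two classes \enquote{red} and \enquote{blue}, I would name each by the corresponding endpoint of $e$ in $G$; this is legitimate, since the two endpoints of $e$ are distinct and the bijection of \cref{bijection-perfect-matching} is insensitive to the names chosen for the two classes at each vertex.

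First I would run the construction of \cref{bijection-perfect-matching} on this locally $2$-colored graph and check that it produces $(L_{PM}(G,T), M)$ on the nose. Each vertex $e$ of $L_{EC}(G,T)$ spawns two new vertices; under the naming convention above these are exactly $u_e$ and $v_e$ for $e = (u,v)$, which is the vertex set of \cref{def:lpm}, and the matching edge they produce is $(u_e, v_e)$, i.e.\ an element of $M$. A non-matching edge of $L_{EC}(G,T)$ is a transition $f$ joining two $G$-edges $e, e'$ in some $T(w)$; since $f$ has color $w$ and $w$ is a common endpoint of $e$ and $e'$, the bijection attaches this edge to the class named $w$ at both $e$ and $e'$, so it becomes $(w_e, w_{e'})$ -- and these are precisely the edges of $E'$ in \cref{def:lpm} (take $u = w$; the two $G$-edges are adjacent in $T(u)$ because $f$ is an edge of $T(u)$). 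Hence $(L_{PM}(G,T), M)$ is (isomorphic to) the graph with perfect matching associated to $L_{EC}(G,T)$. A minor point to dispose of along the way is that parallel transitions between the same pair $e, e'$ -- possible only when $G$ is a multigraph and $e, e'$ share both endpoints -- always differ in color, so the multigraph condition in \cref{def:locally} holds and $L_{PM}(G,T)$ is well defined.

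It then remains to compose three bijections. By the \enquote{closed trail} half of \cref{lec-magic}, the compatible closed trails of $(G,T)$ are in bijection with the properly colored cycles of $L_{EC}(G,T)$. For a cycle in a graph of chromatic degree $\leq 2$, being properly colored coincides with being compatible for the canonical local $2$-coloring: at a vertex of chromatic degree $2$ the two color classes are exactly the two colors, while a vertex of chromatic degree $\leq 1$ lies on neither a properly colored cycle nor a compatible one. Finally, \cref{bijection-perfect-matching} carries the compatible cycles of this locally $2$-colored graph to the alternating cycles of $(L_{PM}(G,T), M)$. All three maps are bijections, so their composite is the asserted bijection. I do not expect a genuine obstacle here: the only real work is the identification in the previous paragraph, which is pure bookkeeping -- lining up the vertex set, the matching $M$, and the edge set $E'$ of \cref{def:lpm} term by term with the output of the construction in \cref{bijection-perfect-matching} -- and so deserves care in the notation more than anything else.
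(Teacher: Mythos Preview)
Your proposal is correct and follows exactly the approach the paper intends: the paper's own proof is the single sentence \enquote{By combining \cref{bijection-perfect-matching} with the relevant half of \cref{lec-magic}}, and you have faithfully unpacked what that combination means, including the bookkeeping identification of $L_{PM}(G,T)$ with the image of $L_{EC}(G,T)$ under the bijection of \cref{bijection-perfect-matching}.
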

\begin{proof}
  By combining \cref{bijection-perfect-matching} with the relevant half of
  \cref{lec-magic}.
\end{proof}

Naturally, a similar correspondence holds for compatible trails between two
given vertices. However, a disadvantage of $L_{PM}$ with respect to $L_{EC}$ is
that the former does not provide a counterpart to \emph{paths} avoiding
forbidden transitions.

\begin{remark}
  This PM-line graph construction appears implicitly in Retoré's
  work~\cite{retore_handsome_2003} on proof nets. This remark was indeed the
  starting point of the present work; we refer the reader back to the
  introduction for further discussion of this point.
\end{remark}

\subsection{Algorithms for trails avoiding forbidden transitions}

With the tools now at our disposal, we are in a position to give an algorithm
for finding compatible trails.

\begin{theorem}
  \label{compatible-trail}
  Deciding whether, in a graph $G$ with transition system $T$, two given
  vertices can be joined by a compatible trail, and computing such a trail, can
  be done in linear time (in the size of the input, including $T$).

  Similarly, a compatible closed trail in $G$ can be found in linear time.
\end{theorem}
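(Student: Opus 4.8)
The plan is to reduce everything to matching problems via the constructions already assembled in this section, and then invoke the linear-time matching algorithms of \cref{augmenting-path-linear}. First I would handle the closed trail case: by the previous proposition, compatible closed trails in $(G,T)$ correspond bijectively to alternating cycles of the PM-line graph $L_{PM}(G,T)$ for its distinguished perfect matching $M$. So the algorithm is simply to build $L_{PM}(G,T)$, run the linear-time alternating-cycle detection of \cref{augmenting-path-linear}, and translate the resulting alternating cycle back into a compatible closed trail. The correctness is immediate from the bijection; what needs checking is that $L_{PM}(G,T)$ has size linear in the input and can be constructed in linear time. Here the key point is that \enquote{the size of the input, including $T$} means we count the allowed transitions: the edge set $E'$ of $L_{PM}$ has one edge per allowed transition, $M$ has one edge per edge of $G$, and the vertex set has size $2|E|$, so everything is $O(|E| + \sum_v |E_v|)$, which is exactly linear in the input under this accounting.

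For the path case between two prescribed vertices $s$ and $t$, I would use the same $L_{PM}$ construction together with the standard trick for turning augmenting paths into a search between fixed endpoints. Concretely, compatible trails from $s$ to $t$ in which neither $s$ nor $t$ appears as an intermediate vertex correspond (by the $L_{EC}$/$L_{PM}$ analysis, composing \cref{lec-magic}-style reasoning with \cref{bijection-perfect-matching}) to alternating paths in $L_{PM}(G,T)$ between a vertex of $\partial(s)$ and a vertex of $\partial(t)$; to find these one adds two fresh unmatched vertices, one adjacent to all endpoints of matching edges $u_e$ with $e \in \partial(s)$, the other similarly for $t$, and then looks for an augmenting path, exactly as in the proof of \cref{compatible-paths-2-colored}. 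One subtlety I would need to address: the general compatible trail from $s$ to $t$ is allowed to revisit $s$ or $t$ as an intermediate vertex, so the bijection as stated does not directly cover it. I would handle this by a small preprocessing step — e.g.\ splitting $s$ and $t$ into copies, one per incident edge, linked through a P-gadget (\cref{def:p-gadget}) or simply treating each starting/ending edge choice separately and taking the disjunction — so that a genuine $s$--$t$ compatible trail is recovered even when it passes through $s$ or $t$ again; since there are at most $\deg(s)\cdot\deg(t)$ such choices this is not literally linear unless one is careful, so the cleaner route is the gadget, which keeps the reduction linear.

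I expect the main obstacle to be precisely this bookkeeping around the endpoints and around what \enquote{linear} means. The mathematical content — that compatible (closed) trails biject with alternating cycles/paths in a derived graph equipped with a perfect matching — is already done in the propositions preceding the theorem; the work in the proof is (i) verifying the input-size accounting so that \enquote{linear in the size of the input, including $T$} is honest (the transition system can in principle be quadratic in $|E|$ if one is not told it is sparse, so one must phrase the bound in terms of $\sum_v |E_v|$), (ii) confirming that the translation maps in both directions are computable in linear time, and (iii) dealing with the corner case where an endpoint of the desired trail is visited more than once, which the clean bijection excludes. None of these is deep, so I would keep the proof short: build $L_{PM}$, cite \cref{augmenting-path-linear}, cite the bijection, and devote a sentence or two to the endpoint gadget and the size bound. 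The closed-trail half is essentially a one-line corollary of the proposition immediately above it.
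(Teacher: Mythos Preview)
Your approach is essentially the paper's: the paper routes through $L_{EC}(G,T)$ and invokes \cref{compatible-paths-2-colored}, while you go directly to $L_{PM}(G,T)$ and invoke \cref{augmenting-path-linear}; since the paper itself observes that $L_{PM}$ is just $L_{EC}$ composed with the bijection of \cref{bijection-perfect-matching}, these are the same reduction unpacked to different depths. Your discussion of the size accounting ($|T| = \sum_v |E_v|$) matches the paper's exactly.

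The one place you diverge is the endpoint issue, and there you overcomplicate. You are right that the stated bijection only covers trails that do not revisit $s$ or $t$, but the fix is a one-line observation, not a P-gadget: any compatible trail from $s$ to $t$ contains a compatible sub-trail from $s$ to $t$ that avoids both as intermediate vertices (take the suffix starting at the last occurrence of $s$, then the prefix ending at the first subsequent occurrence of $t$; compatibility is inherited since consecutive transitions are unchanged, and no edges repeat since they didn't before). So existence is unaffected, and for the search problem the trail you recover is already valid. The paper simply glosses over this; a P-gadget construction would work but is unnecessary machinery here.
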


To be more precise about what we mean by \enquote{size of the input}, let us say
we represent each transition graph $T(v)$ using an adjacency list. Then the size
$|T(v)|$ of this representation is the number of edges in $T(v)$ and can go up to
$\mathrm{deg}(v)^2$. Thus, the total size of the input is $\Theta(|T|)$, where $|T| =
\sum_v |T(v)| \leq \sum_v\mathrm{deg}(v)^2$ (we have $|E| = O(|T|)$ and, assuming there
is no isolated vertex, $|V| = O(|T|)$).

For an edge-colored graph, writing down the explicit transition graphs may
result in a non-linear increase in size. A linear-time algorithm for finding a
properly colored trail between two given vertices of an edge-colored graph is
given in \cite[Corollary 2.3]{abouelaoualim_paths_2008}.

\begin{proof}[Proof of \cref{compatible-trail}]
  Thanks to \cref{ec-2col}, it suffices to apply the algorithm
  of \cref{compatible-paths-2-colored} to $L_{EC}(G,T)$, the running time being
  therefore linear in the size of $L_{EC}(G,T)$. Since the edge set of
  $L_{EC}(G,T)$ is $\bigcup_{v \in V} T(v)$, the size is also $\Theta(|T|)$.
\end{proof}

\begin{theorem}
  For a graph $G$, a transition system $T$ on $G$, two vertices $s,t$ and
  an edge $e$ in $G$, the existence of a compatible trail from $s$ to
  $t$ crossing $e$:
  \begin{itemize}
  \item is a NP-complete problem in general;
  \item can be decided in linear time in the size of the input, including $T$,
    when $G$ contains no compatible closed trail.
  \end{itemize}
  In the latter case, the linear time algorithm also solves the corresponding
  search problem.
\end{theorem}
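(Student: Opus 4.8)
The plan is to exploit the same EC-line graph machinery used for \cref{compatible-trail}. By \cref{ec-2col}, $L_{EC}(G,T)$ is a locally $2$-colored graph (via \cref{bijection-perfect-matching}, a graph with a perfect matching $M$), and by the second proposition relating compatible trails to properly colored paths, a compatible trail from $s$ to $t$ crossing $e$ corresponds to a properly colored path in $L_{EC}(G,T)$ between a vertex of $\partial(s)$ and a vertex of $\partial(t)$, avoiding colors $s,t$, and passing through the line-graph vertex $e$. Under the bijection of \cref{bijection-perfect-matching} this becomes an augmenting path crossing a prescribed matching edge $e' \in M$ (after adding the two auxiliary unmatched vertices attached to the matching edges corresponding to the endpoints $s,t$, exactly as in the proof of \cref{compatible-paths-2-colored}). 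So both halves of the theorem reduce to questions about augmenting paths through a prescribed edge.

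For the NP-completeness claim, membership in NP is immediate (guess the trail and check compatibility and that it crosses $e$ in linear time). For NP-hardness I would reduce from the NP-complete problem of \cref{prescribed-augmenting-npc} — finding an augmenting path crossing a prescribed matching edge, with only two unmatched vertices — or, tracing one step further back, directly from the $2$-edge-colored problem of \cite{chou_paths_1994} (properly colored $s$–$t$ path through a prescribed vertex $u$). Since a $2$-edge-colored graph is in particular a graph with forbidden transitions (\cref{remark-ft-loc}), and a properly colored path through a vertex $u$ is, under the correspondence of the second proposition, a compatible path through... here one must be slightly careful: we want a prescribed \emph{edge} rather than vertex. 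I would pick any edge $e$ incident to $u$ and split $u$ into a forced transition, or more simply subdivide an edge at $u$ so that routing through the new degree-$2$ vertex is equivalent to routing through $u$; then the prescribed-edge version inherits hardness. The main obstacle in this part is bookkeeping the difference between "path" and "trail": the source problem is about paths, but our theorem is about trails, so I must ensure the reduction's instances have no compatible \emph{closed} trail that could let a trail cheat (e.g. by making the graph acyclic-like, or arguing that in the reduced instance any compatible trail from $s$ to $t$ through $e$ can be shortcut to a compatible path — which works because repeated edges can be excised while preserving the crossing of $e$, as long as $e$ is not on the excised closed sub-trail; handling that case needs a short argument).

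For the tractable case, the hypothesis that $G$ has no compatible closed trail translates, via \cref{lec-magic} and \cref{bijection-perfect-matching}, to: $L_{PM}(G,T)$ (equivalently the graph-with-matching associated to $L_{EC}(G,T)$) has no alternating cycle for $M$, i.e. $M$ is the unique perfect matching of the subgraph induced by the matched vertices. This is precisely the acyclicity hypothesis of \cref{lemma-prescribed}. The two-unmatched-vertices condition of that lemma is arranged by the standard endpoint gadget (add auxiliary vertices $u,v$ joined to the endpoints of the matching edges corresponding to $s$ and $t$), and one checks this addition does not create alternating cycles. Then \cref{lemma-prescribed} gives a linear-time algorithm for finding an augmenting path crossing the prescribed matching edge $e'$, hence (pulling back through the constructions, all of which are linear-time and of size $\Theta(|T|)$ by the argument in the proof of \cref{compatible-trail}) a linear-time algorithm deciding and producing the compatible trail. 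The only mild subtlety is verifying that the prescribed line-graph vertex $e$ corresponds to a genuine matching edge $e' \in M$ in $L_{PM}$ and that crossing $e'$ is equivalent to the trail traversing the edge $e$ of $G$ — which is exactly how $M$ was defined (one matching edge $u_e v_e$ per edge $e$ of $G$), so this is direct.
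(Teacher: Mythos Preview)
Your treatment of the tractable case is essentially the paper's: reduce via $L_{EC}$ (which has chromatic degree $\leq 2$) to a properly-colored-path-through-prescribed-vertex problem in a graph with no PC cycle, and then invoke the second half of \cref{compatible-paths-2-colored} (equivalently \cref{lemma-prescribed}). That is correct and matches the paper.

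For NP-hardness, however, you are working much harder than necessary and leaving a genuine loose end. You correctly identify \cref{prescribed-augmenting-npc} as the right source problem, but then spend the paragraph worrying about path-versus-trail and vertex-versus-edge mismatches, proposing subdivisions and shortcut arguments and finally conceding that ``handling that case needs a short argument'' you do not supply. The paper dispenses with all of this via a single observation you are missing: the augmenting-path-through-a-prescribed-edge problem is \emph{already literally an instance} of the compatible-trail-through-a-prescribed-edge problem. A graph with a matching $M$ carries the transition system induced by the 2-edge-coloring (matching / non-matching), and the crucial point is that an \emph{alternating trail} for $M$ is automatically an alternating \emph{path}: if a trail visited a matched vertex $u$ twice, each visit would have to use the unique matching edge incident to $u$, so that edge would be repeated, contradicting the trail condition. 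Hence ``augmenting path through $e$'' and ``compatible trail from $s$ to $t$ through $e$'' coincide on these instances, and NP-hardness transfers instantly from \cref{prescribed-augmenting-npc}. No subdivision, no shortcutting, and no residual case to handle.
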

\begin{proof}
  Finding an augmenting path for a matching crossing a given edge is an
  instance of this problem: indeed, an \enquote{augmenting trail} cannot visit
  the same vertex $u$ twice, since it would have to go through the unique
  matching edge incident to $u$ twice. Thus, the NP-completeness result from
  \cref{prescribed-augmenting-npc} applies.

  Suppose now that $G$ contains no compatible closed trail. Using the EC-line
  graph construction, the problem reduces to finding a path in $L_{EC}(G,T)$
  between $\partial(u)$ and $\partial(v)$ visiting $e$, knowing that
  $L_{EC}(G,T)$ has no properly colored cycle. This can be done in linear time
  by the second half of \cref{compatible-paths-2-colored}, thanks again to
  \cref{ec-2col}.
\end{proof}

\subsection{A structural theorem}

As for alternating cycles in perfect matchings, the absence of closed trails
avoiding forbidden transitions leads to the existence of a bridge -- provided we
add another assumption.

\begin{theorem}
  \label{forbidden-transition-bridge}
  Let $G$ be a graph with transition system $T$, with at least one edge.
  If, for all vertices $v$ in $G$, the transition graph $T(v)$ is
  \emph{connected}, and $G$ has no compatible closed trail, then $G$ has a
  bridge.
\end{theorem}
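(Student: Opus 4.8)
The plan is to reduce this to Kotzig's theorem via the PM-line graph construction of \cref{def:lpm}. Given $G$ with transition system $T$ in which every $T(v)$ is connected and $G$ has no compatible closed trail, we form $L_{PM}(G,T)$ together with its perfect matching $M$. By the proposition relating compatible closed trails to alternating cycles of $L_{PM}$, the absence of compatible closed trails in $G$ means $M$ has no alternating cycle, i.e. $M$ is the \emph{unique} perfect matching of $L_{PM}(G,T)$ (using \cref{berge-upm}). Kotzig's theorem (\cref{kotzig}) then gives a bridge of $L_{PM}(G,T)$ lying in $M$; such a matching edge has the form $(u_e,v_e)$ for some edge $e=(u,v)$ of $G$. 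The remaining task is to transfer this bridge back to a bridge of $G$, and this is where the connectedness hypothesis on the $T(v)$ enters.

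The key structural step is to understand the connected components of $L_{PM}(G,T)$ and how they relate to those of $G$. Each "star" $\{u_e : e \in \partial(u)\}$ in $L_{PM}$ is, on its non-matching edges $E'$, exactly a copy of $T(u)$; since $T(u)$ is connected, all vertices $u_e$ with $e \ni u$ lie in one component of $L_{PM}$. Two such stars (for $u$ and $v$) are linked precisely when $uv = e \in E(G)$, via the matching edge $(u_e,v_e)$. Hence the map sending the star of $u$ to the vertex $u$ induces a bijection between connected components of $L_{PM}(G,T)$ and connected components of $G$ (here the "at least one edge" hypothesis ensures $L_{PM}$ is nonempty, and isolated vertices of $G$, if any, can be ignored as they contribute no edge to $L_{PM}$ and no bridge; alternatively assume WLOG $G$ has no isolated vertex). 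I would then argue: deleting the matching edge $(u_e,v_e)$ from $L_{PM}(G,T)$ disconnects its component; I claim this forces $e$ to be a bridge of $G$. Indeed, a path in $G$ from $u$ to $v$ avoiding $e$ would, combined with the connectedness of each $T(w)$ along it, lift to a path in $L_{PM}(G,T) \setminus \{(u_e,v_e)\}$ from the $u$-star to the $v$-star (at each intermediate vertex $w$ one walks within the connected copy of $T(w)$ from the entering edge's vertex $w_f$ to the leaving edge's vertex $w_{f'}$), contradicting that $(u_e,v_e)$ is a bridge.

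The main obstacle I anticipate is precisely the careful verification of this lifting argument — in particular making sure that "a path in $G$ avoiding the edge $e$" lifts to something that avoids the matching edge $(u_e,v_e)$, and not merely to a closed trail or a walk that re-enters the $u$- or $v$-star illegitimately. The subtlety is that a path in $G$ from $u$ to $v$ may still pass through $u$ or $v$ as intermediate vertices; one should work instead with the right notion (a $u$–$v$ walk in $G - e$, or better, exploit that if $e$ is not a bridge of $G$ then $u$ and $v$ lie in the same component of $G - e$ and connect them by \emph{any} walk there). One must also handle the degenerate case where $e$ is the only edge of its component of $G$: then $(u_e,v_e)$ is genuinely a bridge of its (two-vertex) component of $L_{PM}$, matching the fact that $e$ is a bridge of $G$. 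Finally, to get the corollary about properly colored trails, one instantiates $T$ as the complete-multipartite transition system of \cref{remark-ft-loc}; a complete multipartite graph on at least two vertices is connected, so the hypothesis is automatically met whenever every vertex has chromatic degree $\geq 1$, i.e. no isolated vertices.
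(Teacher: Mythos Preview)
Your proof is correct and matches one of the two approaches the paper gives: the paper's primary argument applies Yeo's theorem to $L_{EC}(G,T)$ and reads off a bridge from the color-separating vertex, but it explicitly remarks that \enquote{the same argument can be replayed by applying Kotzig's theorem to $L_{PM}(G,T)$}, which is exactly what you do. Your lifting of a $u$--$v$ walk in $G\setminus\{e\}$ to a walk in $L_{PM}(G,T)\setminus\{(u_e,v_e)\}$ via the connected copies of $T(w)$ is the right mechanism, and your observation that only matching edges of the form $((w_{i-1})_{f_i},(w_i)_{f_i})$ with $f_i\neq e$ are used dispels the worry you flagged.

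One small slip in your closing remark on the corollary: the hypothesis there is chromatic degree $\geq 2$, not $\geq 1$. If a vertex has several incident edges all of the same color, its transition graph is edgeless and hence disconnected, so $\geq 1$ does not suffice; the paper's proof of \cref{edge-colored-kotzig} uses precisely that $k\geq 2$ makes the complete $k$-partite graph $T(v)$ connected.
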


The connectedness assumption is not too restrictive: if $T(v)$ has multiple
connected components, then $v$ can be split into multiple vertices corresponding
to those connected components without changing the reachability by compatible
trails. However, the bridges of the graph after this transformation are not
necessarily bridges of the original graph.

\begin{proof}
  Again, our proof uses the EC-line graph structure (\cref{def:lec}). Without
  loss of generality, we assume $G$ to be connected (as a graph, without
  taking the transition system into consideration). Since $G$ has at least one
  edge, $L_{EC}(G,T)$ is not the empty graph; since the transition graphs are
  connected, it can be seen $L_{EC}(G,T)$ is connected as well; and the absence
  of compatible closed trail in $G$ means that it has no properly colored cycle.
  Therefore, by Yeo's theorem (\cref{yeo}), there is a color-separating vertex
  $e$ in $L_{EC}(G,T)$ -- the notation reminding us that it corresponds to an
  edge of $G$.

  If $e$ has chromatic degree~0 (resp.~1) in $L_{EC}(G,T)$, it means that both
  its endpoints are (resp. one of its endpoints is) a degree~1 vertex in $G$. In
  both cases, $e$ is a bridge of $G$ and we are done.

  Else, $e$ has chromatic degree~2, and since it is color-separating, it is a
  cut-vertex of $L_{EC}(G,T)$. But we have\footnote{Taking a vertex-induced
    subgraph on the left of the $\simeq$ sign, and removing an edge while
    leaving the vertices intact on the right.} $L_{EC}(G,T) \setminus \{e\}
  \simeq L_{EC}(G \setminus \{e\},T)$, and so, again using the connectedness of
  the transition graphs, since $L_{EC}(G,T) \setminus \{e\}$ is disconnected, $G
  \setminus \{e\}$ is disconnected. In other words, $e$ is a bridge of $G$.

  The same argument can be replayed by applying Kotzig's theorem (\cref{kotzig})
  to $L_{PM}(G,T)$ (\cref{def:lec}).
\end{proof}

As a corollary, we obtain a new proof of the \enquote{structure from acyclicity}
property for properly colored trails. (The original
proof~\cite{abouelaoualim_paths_2008} applies Yeo's theorem to a construction
which is rather different from our EC-line graph and which does not generalize
to forbidden transitions.)

\begin{corollary}[{\cite[Theorem~2.4]{abouelaoualim_paths_2008}}]
  \label{edge-colored-kotzig}
  Let $G$ be an edge-colored graph such that every vertex of $G$ is incident
  with at least two differently colored edges. Then, if $G$ does not have a
  properly colored closed trail, then $G$ has a bridge.
\end{corollary}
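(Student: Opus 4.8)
The plan is to derive Corollary~\ref{edge-colored-kotzig} from Theorem~\ref{forbidden-transition-bridge} by checking that the natural transition system attached to an edge-coloring satisfies the connectedness hypothesis, once we restrict to color classes that actually appear at each vertex. First I would set up the transition system $T$ on $G$ whose transition graph $T(v)$, for each vertex $v$, has the edges of $\partial(v)$ as its vertex set, with two such edges adjacent exactly when they have different colors. By Remark~\ref{remark-ft-loc}, the paths (resp.\ closed trails) compatible with $T$ are precisely the properly colored paths (resp.\ closed trails) of $G$, so the hypothesis that $G$ has no properly colored closed trail translates directly to: $G$ has no compatible closed trail for $T$.

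Next I would verify the connectedness assumption of Theorem~\ref{forbidden-transition-bridge}. Fix a vertex $v$; by hypothesis, the edges incident to $v$ use at least two colors. The transition graph $T(v)$ is then a complete multipartite graph whose parts are the color classes at $v$, and a complete multipartite graph with at least two nonempty parts is connected (any two vertices in different parts are adjacent, and any two in the same part have a common neighbor in another part, which is nonempty). Hence each $T(v)$ is connected. Since $G$ has at least one edge — which follows from every vertex having two differently colored incident edges (in particular $G$ has no isolated vertices, and having two colored edges at some vertex forces an edge to exist) — all hypotheses of Theorem~\ref{forbidden-transition-bridge} are met, and we conclude that $G$ has a bridge.

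The only genuinely delicate point is making sure the translation is faithful on both sides — that ``no properly colored closed trail'' really is ``no compatible closed trail'' for the transition system we built, which is exactly the content of Remark~\ref{remark-ft-loc} — and that the connectedness of the complete multipartite transition graphs genuinely uses the two-colors hypothesis (with only one color class at $v$, $T(v)$ would be edgeless, hence disconnected as soon as $\deg(v) \geq 2$, and the theorem would not apply). I expect no real obstacle here; this is a short verification-style corollary, and the bulk of the work has already been done in proving Theorem~\ref{forbidden-transition-bridge}.
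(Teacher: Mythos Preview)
Your proposal is correct and follows essentially the same approach as the paper: set up the transition system induced by the edge coloring, observe that each $T(v)$ is complete $k$-partite with $k\geq 2$ (hence connected) thanks to the two-colors hypothesis, and apply Theorem~\ref{forbidden-transition-bridge}. Your write-up is a bit more verbose and adds the explicit check that $G$ has at least one edge, but the argument is identical in substance.
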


\begin{proof}
  Let $T$ be the transition system induced by the edge coloring of $G$, and $v$
  be any vertex of~$G$. $T(v)$ is a complete $k$-partite graph, $k$ being the
  chromatic degree of $v$. Since $v$ is incident to at least two differently
  colored edges, $k \geq 2$ making $T(v)$ connected. Thus, the previous theorem
  can be applied in this case.
\end{proof}

To conclude this section, we show that our theorem on closed trails avoiding
forbidden transitions actually \emph{implies} to Kotzig's theorem
(\cref{kotzig}) -- and therefore, is equivalent to both Kotzig's and Yeo's
theorems.

\begin{proof}[Proof of Kotzig's theorem from \cref{forbidden-transition-bridge}]
  We use a proof by contradiction. Let $G = (V,E)$ be a minimal counterexample,
  and $M$ be its unique perfect matching. It has no degree 1 vertex since then
  the incident matching edge would be a bridge. Thus, every vertex is incident
  to exactly one matching edge and at least one edge outside the matching.

  Since alternating paths a matching $M$ are the same as properly colored closed
  trails for the induced 2-edge-coloring, \cref{edge-colored-kotzig} applies to
  show $G$ has at least one bridge. The bridges of $G$ being outside the
  matching, after removing them, which disconnects the 2-edge-connected
  components of $G$, $M$ is still the unique perfect matching of $G$.

  Now, let $G' = (V',E')$ a 2-edge-connected component of $G$. Then $G'$ is
  strictly smaller than $G$, and since $G'$ is a connected component of a
  spanning subgraph of $G$ which contains $M$ as a perfect matching, $M \cap E'$
  is a perfect matching of $G'$. It is also unique, or else $M$ would not be
  unique in $G$. Thus, $G'$ is a counterexample to Kotzig's theorem,
  contradicting the minimality of $G$.
\end{proof}

\section{Finding rainbow paths}
\label{sec:rainbow-path}

The EC-line graph construction connects forbidden transitions and edge-colored
graphs. The previous section applied this connection to the former; here we are
interested in drawing the consequences for the latter.

A \emph{color class} in an edge-colored graph is the set of all edges with some
common color; it may also refer to the subgraph edge-induced by such a set. The
purpose of this section is to study the complexity of the following problem.

\begin{definition}
  The problem $\mathcal{A}$-\textsc{Rainbow Path} (for $\mathcal{A}$ a class of
  graphs) is defined as:
  \begin{itemize}
  \item Input: an edge-colored graph $G$, whose color classes all induce graphs
    belonging to $\mathcal{A}$ (up to isomorphism), and two vertices $s$ and $t$
    in $G$.
  \item Output: a rainbow path between $s$ and $t$ in $G$.
  \end{itemize}
  We will also use the same name to refer to the decision problem which asks for
  the existence of such a path.
\end{definition}

Using the notations from \cite{szeider_finding_2003}, we write
$\mathcal{A}^{\mathrm{ind}}$ for the closure of $\mathcal{A}$ under taking
vertex-induced subgraphs, and $K_2 + K_2$, $P_4$ and $L_4$ refer to the graphs
shown in \cref{fig:excluded}. We also write $K_2$ for the complete graph on 2
vertices.

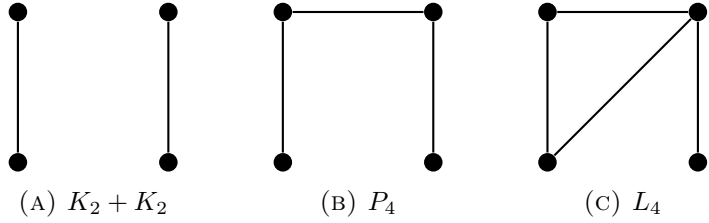
\begin{figure}
  \centering
  \begin{subfigure}{2.5cm}
    \centering
    \begin{tikzpicture}
      \node[vertex] (w) at (0,2) {};
      \node[vertex] (x) at (2,2) {};
      \node[vertex] (y) at (0,0) {};
      \node[vertex] (z) at (2,0) {};
      \draw[thick] (w) -- (y);
      \draw[thick] (x) -- (z);
    \end{tikzpicture}
    \caption{$K_2 + K_2$}
  \end{subfigure}~\qquad\qquad~\begin{subfigure}{2.5cm}
    \centering
    \begin{tikzpicture}
      \node[vertex] (w) at (0,2) {};
      \node[vertex] (x) at (2,2) {};
      \node[vertex] (y) at (0,0) {};
      \node[vertex] (z) at (2,0) {};
      \draw[thick] (w) -- (y);
      \draw[thick] (x) -- (z);
      \draw[thick] (w) -- (x);
    \end{tikzpicture}
    \caption{$P_4$}
    \label{subfig:p4}
  \end{subfigure}~\qquad\qquad~\begin{subfigure}{2.5cm}
    \centering
    \begin{tikzpicture}
      \node[vertex] (w) at (0,2) {};
      \node[vertex] (x) at (2,2) {};
      \node[vertex] (y) at (0,0) {};
      \node[vertex] (z) at (2,0) {};
      \draw[thick] (w) -- (y);
      \draw[thick] (x) -- (z);
      \draw[thick] (w) -- (x);
      \draw[thick] (y) -- (x);
    \end{tikzpicture}
    \caption{$L_4$}
  \end{subfigure}
  \caption{Excluded vertex-induced subgraphs}
  \label{fig:excluded}
\end{figure}

The algorithmic results of this section can be summarized as follows.

\begin{theorem}
  \label{dichotomy-thm}
  If $\mathcal{A}^{\mathrm{ind}}$ contains $K_2 + K_2$, $P_4$ or $L_4$, then
  the problem $\mathcal{A}$-\textsc{Rainbow Path} is \emph{NP-complete}.

  Else, every graph in $\mathcal{A}$ is the union of a \emph{complete
    multipartite} graph and of isolated vertices, and
  $\mathcal{A}$-\textsc{Rainbow Path} can be solved in \emph{linear} time.
\end{theorem}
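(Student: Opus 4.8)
The statement has two parts that reflect the classic structure of a dichotomy theorem, and I would handle them separately.

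\medskip
\textbf{The structural (easy) direction.} Suppose $\mathcal{A}^{\mathrm{ind}}$ contains none of $K_2 + K_2$, $P_4$, $L_4$. I would first reduce to connected graphs: a graph in $\mathcal{A}$ all of whose connected components are either isolated vertices or a single component $H$, and claim $H$ must be complete multipartite. Here I would invoke the excluded-subgraph characterization of complete multipartite graphs from \cite[Lemma~7]{szeider_finding_2003}: a graph is complete multipartite if and only if it has no vertex-induced subgraph isomorphic to one of a small finite list (which is exactly $K_2 + K_2$, $P_4$, $L_4$, or contains one of them). Note also that a disconnected graph with two nontrivial components already contains $K_2 + K_2$ as an induced subgraph, which forces all but one component to be isolated vertices. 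So the only freedom left is the one nontrivial component, and it must be complete multipartite. That settles the structural claim.

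\medskip
\textbf{The algorithmic (linear-time) direction.} Once we know every color class is a complete multipartite graph plus isolated vertices (and isolated vertices can simply be discarded, as they carry no edges of that color), I would exploit \cref{remark-ft-loc}: an edge-colored graph whose color classes are complete multipartite is exactly an instance where the induced transition system has \emph{complete multipartite} transition graphs. But \cref{lec-magic} / the proposition following it tell us that compatible \emph{paths} between $s$ and $t$ in a graph with forbidden transitions correspond bijectively to \emph{rainbow} paths in the EC-line graph $L_{EC}(G,T)$ (between a vertex of $\partial(s)$ and a vertex of $\partial(t)$, avoiding colors $s$ and $t$). The trick is to run this in reverse: we are given the rainbow-path instance $G$, and we want to \emph{recognize} it as (essentially) an EC-line graph of a forbidden-transition instance, reducing back to compatible \emph{path}-finding, which is itself reducible to augmenting paths and hence solvable in linear time by \cref{compatible-paths-2-colored} together with \cref{pc-path-linear}. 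Concretely: from $G$ with complete-multipartite color classes, build a graph $H$ whose vertices are the color classes of $G$ (plus two extra vertices for $s$ and $t$), with an edge of $H$ for each vertex of $G$ joining the (one or two) color classes meeting that vertex; the multipartite structure of each color class of $G$ becomes a transition system on $H$, and rainbow paths in $G$ become compatible paths in $H$. The main thing to verify is that this correspondence is a bijection on paths — that the "no monochromatic sub-path of length $>1$" subtlety flagged in the proof of \cref{lec-magic} is automatically handled, and that path-ness (no repeated vertex in $G$) matches path-ness in $H$ rather than merely trail-ness. Since each vertex of $G$ lies in at most two color classes, $H$ has bounded local structure and the whole reduction is linear in the size of $G$.

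\medskip
\textbf{Expected main obstacle.} The genuinely delicate point is the algorithmic direction's correspondence, specifically making sure that a \emph{rainbow} path in $G$ (all edges distinct colors, i.e.\ each color class visited at most once) translates to a genuine \emph{path} (not just a trail) in $H$, and conversely. A color class of $G$ might be visited by a path at several vertices as long as no two consecutive edges lie in it — so naively this looks like it could produce a closed-walk artifact in $H$; the rainbow condition (each color used once) is precisely what collapses this, forcing each vertex of $H$ to be visited at most once. I would spend most of the proof pinning down this bijection carefully, handling the endpoints $s,t$ (which may themselves be endpoints of edges of any color, so they need to be treated as fresh, color-free "boundary" vertices in $H$, analogous to the $\partial(s), \partial(t)$ bookkeeping in the proposition after \cref{lec-magic}), and then the linear-time claim follows immediately from \cref{compatible-paths-2-colored}. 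For NP-hardness of the other cases I would refer forward to the detailed construction in the body of the paper, noting only that it combines the forbidden-transition hardness technique of \cite{szeider_finding_2003} with the rainbow-path hardness of \cite{chakraborty_hardness_2011} via the $L_{EC}$ correspondence.
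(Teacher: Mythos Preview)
Your treatment of the structural direction and of the NP-hardness cases (via Szeider's excluded-subgraph lemma and the $L_{EC}$ reduction from compatible paths, plus the Chakraborty et al.\ gadget) matches the paper's approach. The gap is in the linear-time algorithmic direction.

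You propose to \emph{invert} the EC-line graph construction: treat the rainbow-path instance $G$ as $L_{EC}(H,T)$ for some graph-with-transitions $H$ whose vertices are the color classes of $G$ and whose edges are the vertices of $G$, then solve a compatible-path problem in $H$. For this to make sense, each vertex of $G$ must become a single \emph{edge} of $H$, i.e.\ must lie in at most two color classes --- you state this explicitly (\enquote{Since each vertex of $G$ lies in at most two color classes, $H$ has bounded local structure}). But that is exactly the chromatic-degree-$\leq 2$ property of EC-line graphs (\cref{ec-2col}), and it is \emph{not} a property of arbitrary multipartite decompositions: a vertex of $G$ can be incident to edges of arbitrarily many colors. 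So your $H$ is not well-defined in general, and the reduction collapses. Concretely, take a star $K_{1,3}$ with three edges of three different colors; the center lies in three color classes and cannot be an edge of any $H$.

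The paper's reduction goes in the other direction: instead of contracting $G$ to a smaller graph, it expands $G$ to a bipartite edge-colored graph $G'$ on vertex set $V \sqcup W$, where $W$ contains one new vertex $w_i$ per color class $H_i$, with edges $(v,w_i)$ for every $v \in H_i$. The edge $(v,w_i)$ is colored by the pair $(i,\text{part of }H_i\text{ containing }v)$. Then a \emph{properly colored} path in $G'$ between $s,t \in V$ corresponds bijectively to a \emph{rainbow} path in $G$: the proper-coloring constraint at each $w_i$ forces consecutive $V$-vertices to lie in different parts of $H_i$ (hence be adjacent in $G$), and the no-repeated-vertex condition on the $w_i$'s is exactly the rainbow condition. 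This sidesteps the chromatic-degree obstruction entirely, and \cref{pc-path-linear} finishes.
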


The case singled out as tractable by this theorem deserves a name, which we take
from~\cite{alon_multicolored_1991}.

\begin{definition}
  \label{def:multidec}
  A \emph{multipartite decomposition} of a graph is an edge coloring whose color
  classes are all complete multipartite.
\end{definition}

Again, we have a \enquote{structure from acyclicity} result:
\begin{theorem}
  \label{rainbow-structure}
  Let $G$ be a rainbow acyclic graph whose edge coloring is a multipartite
  decomposition. Suppose $G$ has at least one edge. Then there exists a color
  class $H$ with vertex partition $U_1, \ldots, U_k$, such that removing the
  edges of $H$ disconnects $U_1, \ldots, U_k$ -- or in other words, such that
  for all $v \in U_i$ and $w \in U_j$ with $i \neq j$, $v$ and $w$ are not
  connected in $G \setminus H$.
\end{theorem}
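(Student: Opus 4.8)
\medskip

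The plan is to deduce \cref{rainbow-structure} from the structural theorem for compatible closed trails, \cref{forbidden-transition-bridge}, via the EC-line graph construction running \emph{backwards}. The key observation is that a multipartite decomposition on a graph $G$ is exactly the data needed to exhibit $G$ as (a subgraph of) an EC-line graph: by \cref{remark-ft-loc}, an edge coloring where each color class is complete multipartite is precisely a transition system in which every $T(v)$ is a complete multipartite graph, hence connected. So first I would recast the hypothesis: let $T$ be the transition system on $G$ induced by the multipartite decomposition (two edges at $v$ are $T(v)$-adjacent iff they lie in different color classes meeting at $v$); then each $T(v)$ is connected. However, \cref{forbidden-transition-bridge} takes as input \emph{a graph with a transition system and concludes a bridge in that graph}, whereas here $G$ plays the role of the line graph, not of the base graph. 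So I cannot apply it directly to $G$; instead I want to invert the $L_{EC}$ construction.

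\medskip

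Concretely, I would build a graph $H^\sharp$ together with a transition system $S$ such that $G$ (up to adding isolated vertices, which are irrelevant) is isomorphic to $L_{EC}(H^\sharp, S)$, with the colors of $G$ corresponding to the vertices of $H^\sharp$. The natural candidate: take $H^\sharp$ to have one vertex per color class of $G$ (plus, if needed, degree-one "pendant" vertices to account for endpoints of edges of $G$ that are touched by only one color — I would handle this bookkeeping carefully, as it is the place where an off-by-one slip is easy), and one edge per vertex of $G$, the edge for $w\in V(G)$ joining the (at most two) color classes incident to $w$; the transition graph $S(c)$ at a color $c$ is then exactly the color class $H$ of $c$, which is complete multipartite, hence connected. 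By \cref{ec-2col} and the definition of $L_{EC}$, one checks $L_{EC}(H^\sharp,S)\cong G$, rainbow cycles of $G$ corresponding to compatible closed trails of $(H^\sharp,S)$ by \cref{lec-magic}. Since $G$ is rainbow acyclic, $(H^\sharp,S)$ has no compatible closed trail; since every $S(c)$ is connected and $H^\sharp$ has at least one edge (as $G$ has at least one edge), \cref{forbidden-transition-bridge} yields a bridge in $H^\sharp$. That bridge is an edge, i.e.\ a vertex $w$ of $G$; the two endpoints of this bridge are color classes $H$, $H'$... — wait, I actually want the conclusion phrased about a single color class. Let me instead run the argument one level up: rather than a bridge (an edge of $H^\sharp$, i.e.\ a vertex of $G$), I want a vertex of $H^\sharp$, i.e.\ a color of $G$. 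So the cleaner route is to mimic the \emph{proof} of \cref{forbidden-transition-bridge} directly: apply Yeo's theorem (\cref{yeo}) to $G$ itself.

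\medskip

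So here is the streamlined approach I would actually write. Assume WLOG $G$ connected and with no isolated vertices. Since $G$ is rainbow acyclic it has in particular no rainbow cycle; a multipartite decomposition has the property — to be verified — that any properly colored cycle in $G$ can be "shortcut" into a rainbow cycle, because whenever a cycle enters and leaves a single color class $H$ (complete multipartite) at two vertices $x,y$ lying in different parts of $H$, it can be short-circuited by the edge $xy\in H$ while staying a cycle; iterating removes all color repetitions. Hence $G$ has no properly colored cycle either. By Yeo's theorem, $G$ has a color-separating vertex $v$: for each component $C$ of $G\setminus\{v\}$, all edges between $v$ and $C$ have the same color. Now translate: the colors appearing at $v$ correspond to distinct components of $G\setminus\{v\}$ being "glued" at $v$ only through monochromatic bundles. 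Pick any color $c$ incident to $v$, with color class $H$ (complete multipartite, parts $U_1,\dots,U_k$); I claim removing $H$ disconnects the $U_i$ from each other. The point is that within a component of $G\setminus\{v\}$ all $v$-edges have one color, so a path in $G\setminus H$ from $U_i$ to $U_j$ ($i\ne j$) would have to route through $v$ or stay inside $H$'s vertex set using edges of other colors — and here I'd use the completeness of the multipartite structure of the \emph{other} color classes at each vertex to iterate Yeo's separating vertex, or more simply induct on the number of colors.

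\medskip

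\textbf{Main obstacle.} The delicate point is the last step: Yeo's theorem hands us a color-separating \emph{vertex}, but \cref{rainbow-structure} wants a color \emph{class} whose removal disconnects its own parts. Bridging this gap is exactly the content of the theorem and is not purely formal. I expect the cleanest fix is to proceed by induction on the number of color classes (or on $|E(G)|$): let $v$ be Yeo-separating, let $c_1,\dots,c_r$ be the colors at $v$ realizing the splitting of $G\setminus\{v\}$ into unions of components $\mathcal C_1,\dots,\mathcal C_r$; then either one of the color classes $H_{i}$ of some $c_i$ already separates its parts (done), or by a counting/parity argument akin to the bridge-deletion reasoning in \cref{sec:blossoms} one of the pieces $G[\{v\}\cup\mathcal C_i]$ is a strictly smaller rainbow-acyclic multipartite-decomposed graph with an edge, to which induction applies, and the color class produced there still works in $G$ because paths in $G\setminus H$ connecting two parts of $H$ cannot leave the piece. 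Making the "cannot leave the piece" claim precise — i.e.\ that the separation survives re-gluing at $v$ — is where the real care is needed, and it is essentially the reason the connectedness-of-$T(v)$ hypothesis entered \cref{forbidden-transition-bridge}; here it is automatic from "complete multipartite". Alternatively, if the $L_{EC}$-inversion of the second paragraph can be made to yield a \emph{vertex-cut} rather than a bridge (by applying to $L_{PM}(H^\sharp,S)$ the uniqueness/Kotzig machinery as in the last line of the proof of \cref{forbidden-transition-bridge}), that would give the color class directly; I would try that route first and fall back on the induction if the bookkeeping with pendant vertices gets unwieldy.
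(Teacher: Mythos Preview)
Your proposal contains a genuine error. The \enquote{streamlined approach} you settle on hinges on the claim that a rainbow-acyclic multipartite decomposition has no properly colored cycle, so that Yeo's theorem can be applied to $G$ itself. This claim is false. Take $V=\{1,\dots,6\}$ with the hexagon edges $(1,2),(2,3),(3,4),(4,5),(5,6),(6,1)$ colored $a,b,c,a,d,e$ and add the two diagonals $(1,4),(2,5)$ with color $a$. The color class $a$ is the complete bipartite graph on $\{1,5\}\cup\{2,4\}$, and every other class is a single edge, so this is a multipartite decomposition. The hexagon $1,2,3,4,5,6,1$ is a properly colored cycle. Yet there is no rainbow cycle: removing all $a$-edges leaves two disjoint paths $2\text{--}3\text{--}4$ and $1\text{--}6\text{--}5$, and every $a$-edge crosses between these two pieces, so any cycle uses an even (hence $\ge 2$) number of $a$-edges. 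Your shortcut argument breaks precisely in the configuration you would need here: with $w_i,w_{j+1}$ in one part and $w_{i+1},w_j$ in the other, neither arc of the cycle can be closed by a single $a$-edge into a shorter properly colored cycle. Your first route (inverting $L_{EC}$) also stalls, for a different reason: you posit \enquote{the (at most two) color classes incident to $w$}, but nothing in the hypothesis bounds the chromatic degree of $G$ by $2$ --- the pendant-vertex bookkeeping handles degree $1$, not degree $\ge 3$.

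The paper avoids both pitfalls by applying Yeo's theorem not to $G$ but to the auxiliary bipartite edge-colored graph $G'$ built in \cref{rainbow-path-linear}: one new vertex $w_i$ per color class $H_i$, edges $(v,w_i)$ for $v\in H_i$, colored so that $(u,w_i)$ and $(v,w_i)$ share a color iff $u,v$ lie in the same part of $H_i$. By construction, properly colored cycles in $G'$ correspond \emph{exactly} to rainbow cycles in $G$, so rainbow acyclicity of $G$ gives PC-acyclicity of $G'$ with no extra work. Yeo then yields a color-separating vertex $u$ of $G'$; if $u=w_i\in W$ the desired separating color class is $H_i$ immediately, while if $u\in V$ one shows $u$ is also color-separating in $G$ and finishes by a short case analysis together with strong induction on $|G|$. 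The auxiliary graph is exactly what makes the \enquote{vertex vs.\ color class} mismatch you flagged disappear in the good case.
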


The special case of \emph{bipartite} decompositions has been considered in a
number of works on combinatorics focusing on the minimum number of colors needed
for a bipartite decomposition of a given graph. For instance, a well-known
result is that all bipartite decompositions of the complete graph $K_n$ use at
least $n-1$ colors~\cite{graham_addressing_1971}.

As for rainbow \emph{paths and cycles} in bipartite decompositions, as mentioned
in the introduction, they were considered under the name \enquote{aggregates}
in~\cite[Chapter~2]{retore_reseaux_1993}, their study being motivated by linear
logic. A proof of \cref{rainbow-structure} for this special case, which does not
rely on another result such as Kotzig's or Yeo's theorems (respectively
theorems~\ref{kotzig} and~\ref{yeo}), was the main result of this chapter
\cite[Theorem~2.4]{retore_reseaux_1993}. Retoré would later adapt this into a
direct proof of Kotzig's theorem in \cite[Appendix]{retore_handsome_2003}.

\subsection{Hardness results on rainbow paths}

Let us establish the first half of \cref{dichotomy-thm}. In the following proofs of
NP-completeness, we only treat the NP-hardness part, since it is clear that the
problems belong to NP.

\begin{proposition}
  If $P_4 \in \mathcal{A}^{\mathrm{ind}}$ or $L_4 \in
  \mathcal{A}^{\mathrm{ind}}$, then $\mathcal{A}$-\textsc{Rainbow Path} is
  NP-complete.
\end{proposition}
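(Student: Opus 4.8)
The plan is to reduce from a known NP-complete rainbow-path problem — the one in general edge-colored graphs from \cite{chakraborty_hardness_2011} — to $\mathcal{A}$-\textsc{Rainbow Path}, exploiting the fact that $\mathcal{A}^{\mathrm{ind}}$ contains $P_4$ or $L_4$. The key observation is that both $P_4$ and $L_4$ contain $K_2 + K_2$ as a vertex-induced subgraph (delete an internal vertex of the path, or the appropriate vertex of $L_4$), so whichever of these is in $\mathcal{A}^{\mathrm{ind}}$, we may fix a graph $A \in \mathcal{A}$ together with a vertex subset whose induced subgraph is a disjoint pair of independent edges $e_1 = \{a_1, b_1\}$, $e_2 = \{a_2, b_2\}$. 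This gives us, inside a color class isomorphic to $A$, two edges that are "independent" in the sense that neither endpoint of $e_1$ is adjacent (within that color class) to an endpoint of $e_2$.

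First I would take an arbitrary instance of general \textsc{Rainbow Path}: an edge-colored graph $G$ with colors $1, \ldots, m$ and terminals $s, t$. The idea is to simulate each color $i$ of $G$ by a private copy $A_i$ of $A$: the edges of $G$ colored $i$ get mapped, one at a time, onto the two privileged independent edges $e_1, e_2$ of that copy — but since a color class in $G$ may contain many edges, I subdivide. Concretely, for each edge $uv$ of $G$ with color $i$, I would replace it by a path $u - x_{uv} - y_{uv} - v$ of length three, where the middle edge $x_{uv} y_{uv}$ is assigned to one of the privileged slots of $A_i$, and the two outer edges $u x_{uv}$, $y_{uv} v$ are given brand-new colors, each used by exactly that one edge (so they can never conflict with anything). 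The remaining vertices and edges of $A_i$ not used by any edge of $G$ are simply carried along as extra structure of that color class. The point of the length-three gadget and the independence of $e_1, e_2$: a rainbow path in the new graph can traverse at most one edge of each copy $A_i$ (since once it uses an $A_i$-colored edge, reusing that color is forbidden), and the independence condition guarantees that a path entering the gadget for edge $uv$ must exit through the other end $v$ rather than looping back through $A_i$. Conversely, any rainbow $s$–$t$ path in $G$ lifts by replacing each edge by its length-three gadget, and the new outer-edge colors are all distinct and fresh, so rainbowness is preserved.

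The main obstacle I anticipate is verifying soundness in the reverse direction: I must argue that a rainbow $s$–$t$ path $P'$ in the constructed graph cannot cheat by routing through the "interior" structure of some copy $A_i$ in a way that uses edges of color $i$ more cleverly than the intended simulation allows. Since $P'$ is rainbow, it uses at most one edge of color $i$ for each $i$; if that edge is one of the privileged ones $e_1$ or $e_2$ of $A_i$, it corresponds to crossing exactly one $G$-edge of color $i$; if it is some other edge of $A_i$, then $P'$ has wandered into $A_i$ through an outer gadget edge and must leave the same way, which (because the outer colors are unique) forces it to immediately backtrack — contradicting that $P'$ is a path. So $P'$ never uses a non-privileged $A_i$-edge productively, and projecting $P'$ back to $G$ (contracting each gadget) yields a walk that uses each color at most once, hence — after removing any closed subwalks, which can only decrease color usage — a rainbow $s$–$t$ path in $G$. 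The construction is clearly polynomial (each edge of $G$ is blown up to a constant-size gadget plus one fixed-size copy of $A$), and membership in NP is immediate, completing the proof.
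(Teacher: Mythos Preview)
Your argument rests on a false premise: neither $P_4$ nor $L_4$ contains $K_2 + K_2$ as a \emph{vertex-induced} subgraph. In the paper's convention (see Figure~\ref{fig:excluded}), $P_4$ is the path on four vertices; deleting an internal vertex leaves only three vertices, namely $K_2 + K_1$, not $K_2 + K_2$. Likewise $L_4$ has four vertices and four edges, so its only $4$-vertex induced subgraph is $L_4$ itself. Your reduction therefore never gets off the ground: you cannot in general find inside a graph $A \in \mathcal{A}$ two edges that are vertex-disjoint \emph{and} induce no further $A$-edges between their endpoints. (Indeed, if this implication held, the $P_4$/$L_4$ case would be subsumed by the separate $K_2 + K_2$ proposition that follows in the paper, and there would be no need to treat it.) A secondary issue: even granting the gadget, your construction assigns all middle edges of color-$i$ gadgets to the two ``slots'' $e_1, e_2$ of a single copy $A_i$, which breaks when color $i$ has more than two edges in $G$.

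The paper's proof takes an entirely different route: it invokes the edge-colored line graph $L_{EC}$ (\cref{def:lec}) to reduce from Szeider's $\mathcal{A}$-\textsc{Compatible Path} problem (paths avoiding forbidden transitions with transition graphs in $\mathcal{A}$). The color classes of $L_{EC}(G,T)$ are precisely the transition graphs $T(v)$, and compatible paths in $(G,T)$ correspond to rainbow paths in $L_{EC}(G,T)$; since Szeider proved $\mathcal{A}$-\textsc{Compatible Path} NP-complete when $P_4$ or $L_4$ lies in $\mathcal{A}^{\mathrm{ind}}$, the result follows in one line. The point is that $P_4$ and $L_4$ are obstructions at the level of \emph{transition structure}, not at the level of having two independent edges; the EC-line graph is exactly what transports that obstruction to the rainbow setting.
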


\begin{proof}
  Let us call $\mathcal{A}$-\textsc{Compatible Path} the problem of finding a
  path avoiding forbidden transitions between two given vertices when the
  transition graphs are all in $\mathcal{A}$. There is a polynomial-time
  reduction from $\mathcal{A}$-\textsc{Compatible Path} to
  $\mathcal{A}$-\textsc{Rainbow Path} by \cref{lec-magic}. The former problem
  was shown to be NP-complete when $P_4 \in \mathcal{A}^{\mathrm{ind}}$ or $L_4
  \in \mathcal{A}^{\mathrm{ind}}$ in \cite[Theorem 1]{szeider_finding_2003}.
\end{proof}

\begin{lemma}
  $\{K_2, K_2 + K_2\}$-\textsc{Rainbow Path} is NP-complete.
\end{lemma}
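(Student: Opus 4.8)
The plan is to reduce from a known NP-complete problem using only color classes that are single edges ($K_2$) or disjoint pairs of edges ($K_2 + K_2$). Since membership in NP is clear, I focus on hardness. The natural source is the general \textsc{Rainbow Path} problem (NP-complete by \cite{chakraborty_hardness_2011}), or alternatively a compatible-path / forbidden-transitions problem, but the cleanest route exploits the structure already developed in this paper: note that a color class isomorphic to $K_2 + K_2$ corresponds, via \cref{lec-magic}, to a transition graph that is itself a $K_2 + K_2$, i.e.\ a vertex of degree $4$ where the four incident edges are paired up into exactly two allowed transitions. So I would first try to reduce from the hardness instance in \cite{szeider_finding_2003}; however, that paper's dichotomy uses $P_4$ and $L_4$ in $\mathcal{A}^{\mathrm{ind}}$, not $K_2 + K_2$, so the transition graphs there are not of the required shape. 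Therefore the better approach is a direct reduction from a SAT-like or path-like problem, building a graph whose color classes are, by construction, only $K_2$'s and one-per-gadget $K_2 + K_2$'s.

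Concretely, I would reduce from a variant of 2-edge-colored or forbidden-transition path problems, or most directly from the NP-complete \textsc{Rainbow Path} problem on arbitrary edge-colored graphs, by \emph{splitting} large color classes. Given an arbitrary instance, take each color class $H$ (an arbitrary graph on some edge set) and replace it: subdivide or reroute so that the global "all edges of color $c$ distinct" constraint is re-encoded using many colors, each used on at most two edges, while forcing — through auxiliary path gadgets — that picking one edge of an old color forbids all the others. The device that forces "at most one of these many edges" using only $K_2 + K_2$ color classes is a chain of length-two detour gadgets: arrange the edges formerly sharing a color in a path-like auxiliary structure where consecutive ones are linked by a $K_2 + K_2$ color class so that a rainbow path cannot traverse two of them. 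The cleanest incarnation: route all edges of one old color class through a common "corridor" built so that entering it via one edge and exiting via another would repeat a color. This is exactly the kind of reduction already used implicitly for \textsc{Rainbow Path} hardness in \cite{chakraborty_hardness_2011}; I would adapt it, taking care that every color class in the final graph is $K_2$ or $K_2 + K_2$.

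The key steps in order: (1) fix the source problem — I would pick \textsc{Rainbow Path} on graphs where each color class is a star or a matching, which is already NP-complete, and ideally matchings suffice, so that color classes are disjoint unions of $K_2$'s and we only need to bound the \emph{number} of edges per color to two; (2) design a "color-splitting gadget" that takes a color class with $r > 2$ edges $e_1, \dots, e_r$ in the source graph and replaces each $e_i$ by a short path through a private gadget, introducing $O(r)$ new colors each used on exactly two edges arranged as a $K_2 + K_2$, such that any $s$–$t$ walk can use at most one of the original $e_i$ without repeating a color; (3) verify the correspondence: rainbow $s$–$t$ paths in the source graph map to rainbow $s$–$t$ paths in the new graph and conversely, so the new graph has a rainbow path iff the old one does; (4) check that the construction is polynomial-size and that every resulting color class is isomorphic to $K_2$ or $K_2 + K_2$.

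The main obstacle I expect is step (2): forcing mutual exclusion among $r$ edges while being allowed \emph{only} $K_2 + K_2$ color classes (not stars, not triangles, not matchings of size $\ge 3$) is restrictive, because $K_2 + K_2$ as a color class constrains only two pairs of edges at a time. The trick will be to daisy-chain the exclusions — build a binary-tree or path of $K_2 + K_2$ gadgets so that transitivity of "same color forbidden" propagates along the chain — and to make sure this chaining does not accidentally create a shortcut allowing a rainbow path that skips the intended structure. A secondary subtlety is ensuring the gadget edges never let a walk that is \emph{not} a path (revisiting a vertex) sneak through; but since we are dealing with paths, careful placement of gadget entry/exit vertices handles this. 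Once the gadget is correct, steps (1), (3), (4) are routine bookkeeping.
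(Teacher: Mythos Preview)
The paper's proof is a one-line observation: the reduction from 3-\textsc{Sat} in \cite{chakraborty_hardness_2011} \emph{already} produces instances whose color classes are all isomorphic to $K_2$ or $K_2 + K_2$, so no new construction is required. You cite that reference but treat it as merely a starting point to be adapted; in fact, inspecting their gadget shows it is already in the target class, and the lemma follows immediately.

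Your alternative route --- reduce from general \textsc{Rainbow Path} by splitting large color classes --- has a concrete gap at the step you yourself flag as the obstacle. You propose to ``daisy-chain the exclusions'' so that ``transitivity of `same color forbidden' propagates along the chain''. But the relation ``cannot both appear in a rainbow path'' is \emph{not} transitive: if $e_1,e_2$ share a color and $e_2,e_3$ share a (different) color, nothing prevents a rainbow path from using both $e_1$ and $e_3$ while avoiding $e_2$. So a path- or tree-shaped chain of $K_2+K_2$ color classes does not enforce mutual exclusion among $r$ edges. One could repair this by subdividing each original edge $e_i$ into a path of length $r-1$ and pairing, for every $i<j$, one segment of $P_i$ with one segment of $P_j$ under a dedicated color --- $\binom{r}{2}$ color classes in all, each a genuine $K_2+K_2$ since the paths are vertex-disjoint except possibly at the original endpoints --- but that is a different construction from the one you sketched, and it still needs a careful check that the new internal vertices do not create unintended shortcuts. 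Given that the literature reduction already lands in $\{K_2, K_2+K_2\}$, all of this machinery is unnecessary.
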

\begin{proof}
  The NP-completeness of the general rainbow path problem is proved in
  \cite[Theorem 2.3]{chakraborty_hardness_2011} by a reduction from from
  3-\textsc{Sat}. Examining the proof reveals that the instances generated by
  the reduction are edge-colored graphs whose color classes are all isomorphic
  either to $K_2$ (a single edge) or $K_2 + K_2$. Thus, this proof actually
  provides a polynomial reduction from 3-\textsc{Sat} to $\{K_2, K_2 +
  K_2\}$-\textsc{Rainbow Path}.
\end{proof}

\begin{proposition}
  If $K_2 + K_2 \in \mathcal{A}^{\mathrm{ind}}$, then
  $\mathcal{A}$-\textsc{Rainbow Path} is NP-complete.
\end{proposition}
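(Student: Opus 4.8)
The plan is to reduce from $\{K_2, K_2 + K_2\}$-\textsc{Rainbow Path}, which the previous lemma shows to be NP-complete, by a simple \emph{padding} argument that enlarges every color class into a copy of one fixed graph in $\mathcal{A}$.

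Since $K_2 + K_2 \in \mathcal{A}^{\mathrm{ind}}$, fix once and for all a graph $A \in \mathcal{A}$ that has $K_2 + K_2$ as a vertex-induced subgraph. Observe that $K_2$ is a vertex-induced subgraph of $K_2 + K_2$ (take the two endpoints of one of its edges), hence also of $A$; so I can fix an induced embedding of $K_2 + K_2$ into $A$ and, by restricting to one of its edges, an induced embedding of $K_2$ into $A$. Now, given an instance $(G,s,t)$ of $\{K_2, K_2 + K_2\}$-\textsc{Rainbow Path} — every color class of $G$ being isomorphic, as an edge-induced subgraph, to $K_2$ or to $K_2 + K_2$ — build $G'$ from $G$ as follows: for each color $c$, add $|V(A)| - |V(C_c)|$ fresh (\enquote{dummy}) vertices together with the missing edges, \emph{all of color $c$}, so that the color class of $c$ in $G'$ becomes isomorphic to $A$, with the original vertices and edges of $C_c$ placed inside it according to the chosen induced embedding of $K_2$ (resp.\ $K_2 + K_2$) into $A$. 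No new colors are introduced, each dummy vertex is incident only to edges of a single color, and — because $A$ induces exactly $K_2$ (resp.\ $K_2 + K_2$) on the slots occupied by original vertices — every fresh edge has at least one dummy endpoint. Thus $G'$ is a legitimate instance of $\mathcal{A}$-\textsc{Rainbow Path}, computed in time polynomial in $|G|$ since $A$ is fixed.

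It remains to show that $G$ has a rainbow $s$-$t$ path if and only if $G'$ does. One direction is immediate, as $G$ sits inside $G'$ on the original vertex set with unchanged edge colors. Conversely, let $P$ be a rainbow $s$-$t$ path in $G'$. If $P$ visited a dummy vertex $d$, then since $d \neq s,t$ it would be an interior vertex of $P$, so $P$ would use two distinct edges incident to $d$ — but all edges at $d$ share a color, contradicting that $P$ is rainbow. Hence $P$ avoids every dummy vertex, and therefore every fresh edge (each of which has a dummy endpoint), so $P$ lies entirely inside $G$ and is a rainbow $s$-$t$ path there. The one point that genuinely needs care is precisely this last step — that the padding gadgets cannot be traversed by a rainbow path because they are monochromatic around each new vertex — and it is quite direct.
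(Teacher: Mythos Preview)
Your proof is correct and follows essentially the same approach as the paper: both reduce from $\{K_2, K_2 + K_2\}$-\textsc{Rainbow Path} by fixing a single graph in $\mathcal{A}$ containing $K_2 + K_2$ as an induced subgraph, padding each color class to a copy of that graph via fresh monochromatic vertices, and arguing that a rainbow path cannot pass through any fresh vertex because all its incident edges share one color. Your write-up makes the key point (that the embedding is \emph{induced}, so every new edge has a dummy endpoint) slightly more explicit, but the argument is the same.
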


\begin{proof}
  We proceed by reduction from $\{K_2, K_2 + K_2\}$-\textsc{Rainbow Path}.

  Suppose $K_2 + K_2 \in \mathcal{A}^{\mathrm{ind}}$. We can then choose a graph
  $\Gamma \in \mathcal{A}$ with a set of four vertices $W$ such that the induced
  graph $\Gamma[W]$ is isomorphic to $K_2 + K_2$.

  Now, let $G$ be an instance of $\{K_2, K_2 + K_2\}$-\textsc{Rainbow Path}.
  Consider a color class $H$ in $G$ such that $H \simeq K_2 + K_2$. We build a
  new graph $G'$ by removing the edges of $H$, adding (a copy of) $\Gamma$ with
  the same color as $H$, and identifying the vertices of $H$ with (the copy of)
  $W$. This is equivalent to adding the vertices in $\Gamma \setminus W$ and
  adding some edges with the same color as $H$, with every new edge having at
  least one endpoint among the new vertices, and without removing anything. This
  modification does not affect the set of rainbow paths between $s$ and $t$, in
  particular, it does not change its emptiness or nonemptiness, since any
  rainbow path starting at $s$ and crossing a new edge would end up stuck in a
  new vertex.

  Thus, by induction, we can replace all color classes isomorphic to $K_2 + K_2$
  with copies of our gadget $\Gamma$, without affecting the existence of a
  rainbow path between $s$ and $t$. The same operation can be carried out to
  replace all the color classes consisting of single edges by copies of
  $\Gamma$: indeed, if $K_2 + K_2$ is a vertex-induced subgraph of $\Gamma$,
  then $K_2$ is as well. In the end, all color classes are isomorphic to
  $\Gamma$: we have constructed an instance of $\mathcal{A}$-\textsc{Rainbow
    Path}.

  All we have left to prove is that our reduction can be computed in polynomial
  time. This is clear if we remember that the size of $\Gamma$ is $O(1)$, since
  it depends only on $\mathcal{A}$ and not on the input.
\end{proof}

\subsection{Solving the tractable case}

The next 2 propositions cover the remaining half of \cref{dichotomy-thm}.

\begin{proposition}
  If neither $K_2 + K_2$, $P_4$ nor $L_4$ are in $\mathcal{A}^{\mathrm{ind}}$,
  then every graph in $\mathcal{A}$ is the union of a complete multipartite
  graph with isolated vertices.
\end{proposition}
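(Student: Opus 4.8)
The plan is to reduce the statement to a purely structural, forbidden-induced-subgraph characterization of complete multipartite graphs. First note that since $\mathcal{A} \subseteq \mathcal{A}^{\mathrm{ind}}$, the hypothesis says precisely that no $\Gamma \in \mathcal{A}$ contains $K_2 + K_2$, $P_4$ or $L_4$ as a vertex-induced subgraph. So it suffices to prove: \emph{any} graph $\Gamma$ with no induced $K_2 + K_2$, $P_4$ or $L_4$ is the disjoint union of a complete multipartite graph and some isolated vertices. This is essentially \cite[Lemma~7]{szeider_finding_2003}; I would either cite it or, preferably, include the short self-contained argument below.

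For the direct proof, let $\Gamma'$ be the subgraph of $\Gamma$ induced by its non-isolated vertices, so that $\Gamma = \Gamma' \sqcup I$ with $I$ a (possibly empty) set of isolated vertices; it is enough to show $\Gamma'$ is complete multipartite (the degenerate case $\Gamma' = \emptyset$, i.e.\ $\Gamma$ edgeless, is trivially of the desired form). A graph is complete multipartite iff non-adjacency is transitive, i.e.\ iff it has no induced $K_1 + K_2$ (a single edge plus an isolated vertex). So suppose, for contradiction, that $u,v,w$ induce a $K_1 + K_2$ in $\Gamma'$, with $vw$ an edge and $u$ adjacent to neither. Since $u$ is not isolated in $\Gamma$, it has a neighbour $a$, and $u \sim a$ together with $u \not\sim v, u \not\sim w$ forces $a \notin \{u,v,w\}$, so $\{u,v,w,a\}$ are four distinct vertices.

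Now I would do a case analysis on whether $a$ is adjacent to $v$ and to $w$. If $a$ is adjacent to neither, then $\{u,a\}$ and $\{v,w\}$ induce a $K_2 + K_2$. If $a$ is adjacent to exactly one of them, say $a \sim v$ (the other case is symmetric), then $u - a - v - w$ is an induced $P_4$ (the required non-edges $uv$, $uw$, $aw$ all hold). If $a$ is adjacent to both, then $a$ is adjacent to each of $u,v,w$ and $vw$ is the only other edge among the four, which is exactly an induced $L_4$. Each case contradicts the hypothesis, so $\Gamma'$ has no induced $K_1 + K_2$ and is therefore complete multipartite; since any neighbour of a non-isolated vertex is itself non-isolated, $\Gamma'$ has no isolated vertex of its own, and $\Gamma = \Gamma' \sqcup I$ is of the claimed form. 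There is no real obstacle here: the only points requiring care are passing to the non-isolated part $\Gamma'$ first (so that the vertex $a$ is available) and recognizing $L_4$ as precisely the configuration that arises when $a$ is adjacent to all of $u,v,w$ — which is why all three graphs of \cref{fig:excluded}, and not just $K_2 + K_2$ and $P_4$, must be excluded.
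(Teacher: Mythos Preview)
Your proposal is correct and takes essentially the same approach as the paper: both invoke \cite[Lemma~7]{szeider_finding_2003}. The paper's proof consists solely of that citation, whereas you additionally supply a clean self-contained argument via the induced-$K_1 + K_2$ characterization of complete multipartite graphs; your case analysis is accurate, including the identification of $L_4$ as the paw (triangle plus pendant) arising when $a$ is adjacent to all of $u,v,w$.
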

\begin{proof}
  This is shown as part of the proof of \cite[Lemma 7]{szeider_finding_2003}.
\end{proof}

In fact, since edge-induced subgraphs cannot have isolated vertices, an instance
for $\mathcal{A}$-\textsc{Rainbow Path} for such a class $\mathcal{A}$ can only
have complete multipartite graphs as color classes, that is, it is necessarily a
multipartite decomposition.



\begin{proposition}
  \label{rainbow-path-linear}
  There are linear-time algorithms for finding a rainbow path or cycle in a
  multipartite decomposition.
\end{proposition}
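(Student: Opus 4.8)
The plan is to reverse the edge-colored line graph construction so as to recast the problem as one about compatible cycles (resp. paths), and then reduce further to alternating cycles (resp. augmenting paths) in a graph equipped with a perfect matching. First I would dispose of the path case by the standard device: given endpoints $s$ and $t$, add a new edge between them carrying a fresh color — a color class isomorphic to $K_2$, hence still complete multipartite — so that rainbow $s$–$t$ paths correspond exactly to rainbow cycles through the new edge. It therefore suffices to find rainbow cycles in a multipartite decomposition.

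So let $G$ be a multipartite decomposition with complete multipartite color classes $(C_v)_v$. The point of the EC-line graph is that rainbow/compatible objects match up (\cref{lec-magic} and its path analogue), but EC-line graphs have chromatic degree $\leq 2$ (\cref{ec-2col}) whereas a color class of $G$ may share vertices with arbitrarily many others; so I would use a bipartite \enquote{incidence} variant of the reverse construction. Build a graph $H$ with one vertex $\widehat w$ for each vertex $w$ of $G$ and one vertex for each color $v$, joining $\widehat w$ to $v$ whenever $w$ is incident to a $v$-colored edge of $G$; equip $H$ with the transition system that allows \emph{all} transitions at each $\widehat w$ and, at each color vertex $v$, allows the transition between $\widehat w$ and $\widehat{w'}$ exactly when $w$ and $w'$ lie in different parts of $C_v$. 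Then the transition graph at $v$ is isomorphic to $C_v$ and the one at $\widehat w$ is a complete graph, both complete multipartite. Since $H$ is bipartite its compatible cycles (resp. its compatible paths between $\widehat s$ and $\widehat t$) zig-zag between $\widehat w$-vertices and color vertices, and unwinding the definitions shows these correspond bijectively to rainbow cycles (resp. rainbow $s$–$t$ paths) of $G$: distinctness of the color vertices visited is precisely rainbowness, and the transition constraint at a color vertex $v$ says precisely that the two endpoints of the chosen $v$-colored edge lie in different parts of $C_v$, i.e. that it really is an edge of $G$.

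It remains to find a compatible cycle (resp. path) in $H$ in linear time. This already lies on the tractable side of Szeider's dichotomy, since complete multipartite graphs — those whose complement is a disjoint union of cliques — contain neither $P_4$ nor $L_4$ as an induced subgraph (this is part of \cite[Lemma~7]{szeider_finding_2003}), so $P_4,L_4 \notin \mathcal{A}^{\mathrm{ind}}$; but for the \emph{linear} bound I would instead reduce directly to matchings, following the pattern of \cref{sec:pc-cycle-linear}. At each vertex of $H$ one replaces the transition graph by a gadget on its incident edge-vertices: at a $\widehat w$-vertex this is simply a P-gadget (\cref{p-gadget-existence}), since all transitions are allowed; at a color vertex $v$ it must be a two-layer gadget — a P-gadget for each part $\Pi^v_a$ with an additional exit terminal, feeding into a further P-gadget on the exit terminals — so that an alternating cycle uses, inside the $v$-block, either nothing or exactly one connection between edge-vertices coming from two different parts. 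Applying the linear-time alternating-cycle / augmenting-path algorithms of \cref{augmenting-path-linear} and then pushing the solution back through the reductions produces the desired rainbow cycle or path; since P-gadgets have size linear in their number of terminals and $\sum_w \deg_G(w) + \sum_v |V(C_v)| = O(|E(G)|)$, the whole pipeline is linear.

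The step I expect to take most care is the design and verification of the per-color \enquote{crossing} gadget. One must check both that every rainbow cycle of $G$ lifts to a valid alternating cycle of the constructed matching graph, and, conversely, that no alternating cycle there descends to a walk that reuses a color — in particular one that would \enquote{use color $v$} at two vertices lying in the same part $\Pi^v_a$, which would correspond to two distinct $v$-colored edges — and it is exactly the exit-terminal layer of the gadget (forcing \enquote{at most one member of each part, tied to whether that part's exit terminal is picked}) that rules this out. Getting the parity bookkeeping of this layered gadget right, and confirming that its base perfect matching is unique so that the reduction to alternating cycles is faithful, is the real content of the argument.
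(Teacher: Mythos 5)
Your central construction --- the bipartite incidence graph $H$ between the vertices of $G$ and its color classes, with a transition at a color vertex $v$ allowed exactly when the two $G$-vertices lie in different parts of $C_v$ --- is precisely the graph $G'$ used in the paper's proof, and your bijection between compatible cycles (resp.\ $\widehat{s}$--$\widehat{t}$ paths) of $H$ and rainbow cycles (resp.\ $s$--$t$ paths) of $G$ is the same correspondence. The divergence is in how you then solve the problem on $H$, and that is where the gaps lie. First, your opening reduction of paths to cycles is incorrect as stated: after adding a fresh-colored edge $st$, the augmented graph may contain rainbow cycles that avoid the new edge (namely any rainbow cycle already present in $G$), so an algorithm that finds \emph{some} rainbow cycle tells you nothing about the existence of a rainbow $s$--$t$ path; finding a rainbow cycle \emph{through a prescribed edge} is a genuinely harder problem (compare \cref{lemma-prescribed} and \cref{prescribed-augmenting-npc}). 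Fortunately this step is dispensable, since you also track the path case directly through the rest of the argument.

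Second, the hand-built \enquote{two-layer} gadget at color vertices is exactly the part you admit not to have verified, and it is unnecessary: all transition graphs of $H$ are complete multipartite (complete at the $\widehat{w}$-vertices, isomorphic to $C_v$ at $v$), so the transition system of $H$ is induced by an edge coloring of $H$ --- give $(\widehat{w}, v)$ and $(\widehat{w'}, v)$ the same color if and only if $w$ and $w'$ lie in the same part of $C_v$ --- and one can then invoke the off-the-shelf linear-time algorithms for properly colored paths and cycles (\cref{pc-path-linear}, \cref{compatible-cycles}) instead of designing and certifying a new layered gadget with its parity and matching-uniqueness conditions. This is what the paper does. Third, and independently of which route you take, both your transition system and any gadget indexed by the parts $\Pi^v_a$ require \emph{computing} the vertex partition of each complete multipartite color class; the input only provides the edge coloring, and extracting these partitions within the linear budget is not immediate (the paper does it via linear-time cotree recognition, \cref{partition-linear}). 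Your proposal silently assumes the partitions are given, so the claimed linear bound is not yet established.
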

\begin{proof}
  We treat here the case of paths; the case of cycles uses the same reduction.
  
  Let $G$ be a graph with a multipartite decomposition of its edges, $V$ be its
  set of vertices, and $H_1, \ldots, H_k$ be its color classes. Define $G'$ as
  the graph with vertices $V \sqcup W$, where $W = \{w_1, \ldots, w_k\}$ is a
  set of with one fresh vertex per color class, and with edges $(v, w_i)$ for
  all $v \in V$ and $i \in \{1, \ldots, k\}$ such that $v \in H_i$. Note that
  $G'$ is bipartite with partition $(V, W)$. Color the edges of $G$ in such a
  way that $(u, w_i)$ and $(v, w_j)$ have the same color if and only if $i = j$
  and $u$ is \emph{not} adjacent to $v$ in $H_i$, or equivalently, if $u$ and
  $v$ are in the same part of the vertex partition of $H_i$.

  We claim that \emph{properly colored} paths in $G'$ between vertices of $V$
  correspond bijectively to \emph{rainbow} paths for $G$. Let $s = v_1, w_1
  \ldots, w_{n-1}, v_n, = t$ be a properly colored path between $s \in V$ and $t
  \in V$ in $G'$, with $v_i \in V$ and $w_i \in W$ for all $i \in \{1, \ldots,
  n-1 \}$. Our choice of coloring, together with the fact that the path is
  properly colored, ensures that for all $i$, $(v_i, v_{i+1})$ is an edge in
  $H_i$. Thus, the path corresponds to a path $P$ in $G$. By definition, a path
  has no repeated vertices, so the $w_i$ -- and therefore the $H_i$ -- are
  distinct, which means that $P$ is actually a rainbow path. It is clear that
  this defines a bijection.

  Thus, our algorithm is to build $G'$ and then find a properly colored path in
  it. The time complexity is linear thanks to the following facts:
  \begin{itemize}
  \item the uncolored graph $G'$ can be constructed in linear time, since the
    number of edges to add between $V$ and $W$ is at most the sum of the numbers
    of vertices of each $H_i$, which is itself bounded by twice the number of
    edges of $G$;
  \item the edges of $G'$ can be colored in linear time -- in fact, this
    requires computing the vertex partition of a complete multipartite graph in
    linear time, which is non-trivial, see \cref{partition-linear};
  \item properly colored paths can be found in linear time (\cref{pc-path-linear}).
  \end{itemize}
\end{proof}

\begin{lemma}
  \label{partition-linear}
  The vertex partition of a complete multipartite graph can be computed in
  linear time.
\end{lemma}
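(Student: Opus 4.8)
The plan is to recover the parts one at a time, using the defining property of complete multipartite graphs: two distinct vertices are non-adjacent exactly when they lie in the same part, so the ``closed non-neighbourhood'' $\{v\}\cup(V\setminus N(v))$ of any vertex $v$ is precisely the part containing $v$. Concretely, I would keep a Boolean array recording which vertices are already assigned a part, iterate over the vertices, and whenever an unassigned vertex $v$ is met, open a fresh part: mark $N(v)$ in an auxiliary Boolean array, scan the whole vertex set adding to the new part every vertex that is at once unassigned and unmarked, and then clear the marks on $N(v)$.

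For correctness I would observe that when a part is opened at $v$, none of the other vertices of $v$'s part have been assigned yet, since otherwise the first vertex of that part reached by the outer loop would already have assigned $v$; hence the scan collects exactly the still-available non-neighbours of $v$, which is the whole part of $v$, and no vertex of another part, such a vertex being a (marked) neighbour of $v$.

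For the running time, the only delicate point -- and the step I expect to be the main obstacle -- is bounding the cost $O(kn)$ of the $k$ full scans (one per part) together with the cost $O(\sum_i \deg(v_i))$ of marking the neighbourhoods of the $k$ part-representatives $v_i$, given that $k$ may be as large as $n$. The relevant identities are $\deg(v_i) = n - |U_i|$ for the part $U_i$ opened at $v_i$, and $m = \sum_{i<j}|U_i||U_j| = \tfrac12\bigl(n^2 - \sum_i |U_i|^2\bigr)$. Since each $|U_i|\ge 1$: on one side $\sum_i \deg(v_i)\le \sum_i |U_i|(n-|U_i|) = 2m$; on the other $\sum_i |U_i|^2 \le (\max_i |U_i|)\,n \le (n-k+1)\,n$, which rearranges to $(k-1)n\le 2m$, i.e.\ $kn\le 2m+n$. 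Thus both contributions are $O(n+m)$, and together with the $O(n+m)$ needed to read the adjacency-list input and manage the arrays this gives an overall linear bound.

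(If one prefers, the repeated full scans can be replaced by maintaining a doubly-linked list of the still-unassigned vertices and scanning only its current contents; the analysis then rests on the identity $\sum_j j\,|U_j| = n + \sum_{i<j}|U_j| \le n + \sum_{i<j}|U_i||U_j| = n+m$, giving the same linear total, but the plain-array version above is marginally simpler to state.)
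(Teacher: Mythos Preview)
Your proof is correct and takes a genuinely different route from the paper's. The paper observes that complete multipartite graphs are a special case of cographs --- for $k\ge 2$ they are exactly the cographs whose canonical cotree has depth~$2$ with a $\land$-labelled root --- and then simply invokes an off-the-shelf linear-time cotree construction algorithm (Corneil--Perl--Stewart or Bretscher et al.); the immediate subtrees of the root give the parts.

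Your approach is more elementary and fully self-contained: you exploit directly the characterisation ``non-adjacent $\Leftrightarrow$ same part'' and peel off one part at a time via closed non-neighbourhoods, the real work being the counting argument that bounds $kn$ by $2m+n$ (and $\sum_i\deg(v_i)$ by $2m$) so that the repeated full scans remain linear in total. This avoids any dependency on the non-trivial cograph recognition machinery, at the price of a slightly more delicate running-time analysis. Both arguments are valid; yours is arguably the cleaner fit here, since the problem is narrow enough that the general cograph hammer is heavier than needed.
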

\begin{proof}
  Recall that the class of \emph{cographs}~\cite{corneil_complement_1981} is the
  smallest class of graphs containing the one-vertex graph and closed under
  disjoint unions and complementation. Equivalently, a cograph is a graph which
  can be described by a \emph{cotree}: a rooted tree whose leaves are labeled
  with the vertices of the graph, and whose internal nodes are labeled with
  either $\land$ or $\lor$, such that two vertices are adjacent iff the lowest
  common ancestor of the corresponding leaves is a $\land$ node. If we require
  all paths from the root to the leaves to alternate between $\land$ and $\lor$,
  this makes the cotree for a cograph unique.

  Complete $k$-partite graph for are a particular case of cographs: for $k \geq
  2$ (resp.\ $k=1$) they are the cographs whose canonical cotree has depth~2
  (resp.~1) and whose root has label $\land$ (resp.~$\lor$). In this case, the
  immediate subtrees of the root describe the vertex partition. Fortunately,
  there are several algorithms for computing a cotree in linear time, e.g.\
  \cite{corneil_linear_1985,bretscher_simple_2008}.
\end{proof}

Using the same reduction as \cref{rainbow-path-linear}, we can also prove
\enquote{structure from acyclicity} for rainbow acyclic multipartite
decompositions. The main ingredient in our proof will be Yeo's theorem
(\cref{yeo}).

\begin{proof}[Proof of \cref{rainbow-structure}]
  We proceed by strong induction on the size of $G$. Let $G'$ be the
  corresponding edge-colored graph with only stars as color classes, as
  constructed in the algorithm of \cref{rainbow-path-linear}; we reuse the
  notations from its proof, writing $V$ and $W$ respectively for the vertices of
  $V$ and the additional vertices in $G'$ corresponding to color classes.

  Since $G$ is rainbow acyclic, $G'$ has no properly colored cycle. By Yeo's
  theorem, $G'$ has a color-separating vertex $u \in V \sqcup W$. It remains to
  do a case analysis on this vertex:
  \begin{itemize}
  \item If $u \in W$, then it corresponds to a color class $H$. The color
    separation property for $u$ means exactly that the removal of $H$ in $G$
    disconnects its vertex partition, which is what we wanted.
  \item Else, $u \in V$, and it can be seen that $u$ is also a color-separating
    vertex of~$G$. From this point on, we can forget the construction $G'$.
    Choose a color $c$ used by some edges incident to $u$, and let $H$ be the
    color class of $c$ in $G$ and $U$ be the vertices of $H$ except $u$.
    \begin{itemize}
    \item If $H$ is a star with center $c$ (which includes the case of a single
      edge), then \emph{by color separation}, removing the edges of $H$ indeed
      disconnects $u$ from $U$.
    \item Else, let $H'$ be the color class of $c$ in $G[V \setminus \{u\}]$;
      $H'$ has at least two vertices and is connected. Let $F$ be the connected
      component of $H'$ in $G[V \setminus \{u\}]$. $F$ is smaller than $G$, is
      also rainbow acyclic, and has at least one edge, so by the induction
      hypothesis, it has a color class with color $c'$ whose removal disconnects
      its vertex partition in $F$.
      \begin{itemize}
      \item If $c' \neq c$, note that the color component of $c'$ in $G$ is
        connected -- it is a complete multipartite graph -- and has no edge
        incident to $u$ -- again by color separation -- so it is actually
        included in $F$. The removal of this color component thus disconnects
        its vertex partition.
      \item If $c' = c$, then removing $H$ from $G$ disconnects both $u$ from
        $U$ and the vertices of $U$ which are not in the same part of the
        partition from each other; thus, we have what we wanted.
      \end{itemize}
    \end{itemize}
  \end{itemize}
\end{proof}

\section{Arc-colored directed graphs}

\subsection{Properly colored directed trails and closed trails}

We now consider \emph{directed} graphs (a.k.a.\ \emph{digraphs}) and trails
therein. It is important to note that given two vertices $u$ and $v$ in a
digraph, if the arcs $(u,v)$ and $(v,u)$ both exist, they are considered to be
different arcs, and therefore \emph{they can both appear in the same trail}.
(This follows the definitions used in~\cite{gourves_complexity_2013}.) In
particular, this means that a directed trail in a symmetric digraph -- symmetry
means that the arc $(u,v)$ is present if and only if $(v,u)$ is -- does not
necessarily induce a trail in the corresponding undirected graph.

This should help understand how the following property can hold while its
counterpart in edge-colored undirected graphs is completely wrong (the
quintessential counterexample being alternating walks for perfect matchings that
go through blossoms).

\begin{proposition}
  Let $G$ be an arc-colored digraph and $s,t$ be two vertices of $G$. From any
  properly colored directed \emph{walk} from $s$ to $t$ one can extract a subset
  of arcs which forms a properly colored directed \emph{trail} from $s$ to $t$.
\end{proposition}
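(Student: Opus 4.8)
The plan is to take the given walk and repeatedly shorten it, each shortening removing at least one arc while preserving the properties of being a properly colored directed walk from $s$ to $t$ whose arc set is contained in that of the original walk. Since the number of arcs strictly decreases and cannot go below zero, the process terminates, and it can only stop at a walk that uses no arc twice, i.e.\ a directed trail.

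Concretely, write the walk as $w_0 \xrightarrow{a_1} w_1 \xrightarrow{a_2} \cdots \xrightarrow{a_m} w_m$ with $w_0 = s$ and $w_m = t$. Suppose some arc occurs twice, say $a_i = a_j$ with $i < j$; call this arc $e = (u,v)$, so that $w_{i-1} = w_{j-1} = u$ and $w_i = w_j = v$. Replace the walk by $w_0 \xrightarrow{a_1} \cdots \xrightarrow{a_i} w_i = w_j \xrightarrow{a_{j+1}} w_{j+1} \cdots \xrightarrow{a_m} w_m$, i.e.\ excise the arcs $a_{i+1}, \dots, a_j$. This is again a directed walk from $s$ to $t$ — the splice is legitimate because the head $v$ of $a_i$ equals the tail $w_j = v$ of $a_{j+1}$, or there is simply nothing after $a_i$ when $j = m$ — it uses a subset of the original arcs, and it is strictly shorter since $j \geq i+1$.

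It then remains to check that the shortened walk is still properly colored. The only pair of consecutive arcs that is new is $(a_i, a_{j+1})$ — every other consecutive pair already occurred consecutively in the original walk — and it occurs only when $j < m$. But the original walk being properly colored gives $\mathrm{color}(a_j) \neq \mathrm{color}(a_{j+1})$, and $\mathrm{color}(a_i) = \mathrm{color}(e) = \mathrm{color}(a_j)$, whence $\mathrm{color}(a_i) \neq \mathrm{color}(a_{j+1})$ as needed. Iterating, we obtain after finitely many steps a properly colored directed trail from $s$ to $t$ whose arc set is contained in that of the original walk, which is what we want. (When $s = t$, the degenerate case of the empty walk causes no trouble.)

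There is essentially no obstacle here beyond getting the bookkeeping of the splice right; the point worth emphasising is precisely where directedness is used. A repeated arc in a \emph{directed} walk is necessarily traversed in the same direction at both occurrences, which is why the surviving arcs $a_i$ and $a_{j+1}$ meet at the common vertex $v$ and why $a_i$ carries the colour of $e$. In the undirected setting the edge $e$ could be traversed as $u \to v$ at step $i$ and as $v \to u$ at step $j$; removing the detour would then force a splice of $a_{i-1}$ directly to $a_{j+1}$, and these two edges — though both differently coloured from $e$ — may well share a colour, which is exactly the phenomenon behind the alternating-walk-through-a-blossom counterexample mentioned just before the statement.
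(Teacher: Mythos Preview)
Your proof is correct and follows essentially the same argument as the paper: find a repeated arc, excise the segment between its two occurrences (keeping one copy), observe that the splice remains properly colored because the surviving arc is literally the same as the removed one and hence carries the same color, and iterate. Your write-up is in fact a bit more careful about edge cases and includes a helpful explanation of where directedness is used, but the underlying idea is identical.
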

\begin{proof}
  Consider a walk $s,e_1, \ldots, v_i,e_{i+1},v_{i+1}, \ldots,
  v_j,e_{j+1},v_{j+1},\ldots e_n, t$ where the $e_k$ are arcs, such that
  $e_{i+1} = e_{j+1}$. Since these arcs are directed, their sources are equal
  and their targets are equal: $v_i = v_j$, $v_{i+1} = v_{j+1}$. The following
  is therefore a legal walk: $s,e_1, \ldots, v_i,e_{i+1},v_{j+1},\ldots e_n, t$.
  If the initial walk was properly colored, then so is the new one ($e_{i+1} =
  e_{j+1}$ so in particular they have the same color). By iterating this process
  until we reach a minimal subwalk, we obtain a properly colored trail.
\end{proof}

This means that finding a PC directed trail is a particularly simple algorithmic
problem:
\begin{theorem}
  \label{thm:lolbfs}
  There is a linear time algorithm for finding directed properly colored trails.
\end{theorem}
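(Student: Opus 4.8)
The plan is to reduce, via the proposition just proved, the search for a properly colored directed trail from $s$ to $t$ to a pure \emph{reachability} question and then settle that question by a breadth-first search. By that proposition, a properly colored directed trail from $s$ to $t$ exists if and only if a properly colored directed \emph{walk} does, and a trail can be read off from such a walk by the loop-removal argument already given; so it suffices to decide existence of a properly colored directed walk from $s$ to $t$ and to produce one, both in linear time.

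For this I would work in an auxiliary \enquote{configuration digraph} $\widehat{G}$ whose vertices are the pairs $(v,c)$ with $v$ a vertex of $G$ and $c$ a color carried by some arc \emph{entering} $v$, together with one extra source vertex $\sigma$, and which has an arc from $(v,c)$ to $(w,c')$ whenever $(v,w)$ is an arc of $G$ of color $c'\neq c$, plus an arc from $\sigma$ to $(w,c')$ for every arc $(s,w)$ of $G$ of color $c'$. Properly colored directed walks out of $s$ in $G$ then correspond exactly to directed walks out of $\sigma$ in $\widehat{G}$, such a walk reaching $t$ iff the $\widehat{G}$-walk reaches some configuration $(t,\cdot)$. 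Moreover a \emph{shortest} $\widehat{G}$-walk is a simple path, hence repeats no configuration $(v,c)$; since an arc of $G$ is determined by its target together with its color, the corresponding walk in $G$ then repeats no arc, i.e.\ is already a trail. So a BFS in $\widehat{G}$ from $\sigma$ with the usual parent pointers both decides existence and outputs a trail.

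The one real obstacle is efficiency: $\widehat{G}$ can carry $\Theta(d^-(v)\,d^+(v))$ arcs at a single vertex $v$, hence quadratically many arcs overall, so it must not be built explicitly. The fix is to scan each vertex's out-arcs lazily: the set of $\widehat{G}$-successors of $(v,c)$ depends on $c$ only through the exclusion of one color, so for each vertex $v$ I keep its first arrival color $c_1(v)$ and a flag recording whether $v$ has later been reached through a second, different color. When a configuration $(v,c)$ is dequeued: if $v$ was never reached before, set $c_1(v):=c$ and relax $(w,\mathrm{col}(v,w))$ for every out-arc $(v,w)$ of color $\neq c$ (one pass over the out-arcs of $v$); if $v$ was reached only through $c_1(v)\neq c$, relax $(w,c_1(v))$ for every out-arc of $v$ of color exactly $c_1(v)$ (using out-adjacency lists pre-grouped by color) and set the flag; otherwise do nothing. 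A short case check shows that every arc of $\widehat{G}$ out of a reachable configuration is relaxed this way, while each vertex is scanned at most twice and each out-arc of $G$ is relaxed at most once, so the whole BFS together with the linear-time preprocessing runs in $O(n+m)$ time, $n$ and $m$ being the numbers of vertices and arcs of $G$. (Equivalently one could keep a plain BFS and replace the quadratic bundle of arcs at each $v$ by an $O(d^-(v)+d^+(v))$-size gadget realizing the \enquote{complete bipartite minus a perfect matching} pattern between in-colors and out-colors; I find the lazy version cleaner.) Recovering the trail from the parent pointers is then immediate, and the same search, with a modified initialization, handles the closed-trail variant. I expect the bookkeeping of this lazy BFS to be the only delicate point.
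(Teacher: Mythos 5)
Your proof is correct and follows essentially the same route as the paper's: a breadth-first search over the history-free $(\text{vertex},\text{last color})$ state space, with the preceding proposition guaranteeing that a minimal properly colored walk is already a trail; you additionally spell out the lazy two-pass relaxation needed to avoid materializing the quadratically large configuration digraph, a point the paper leaves implicit. (One phrase is off --- an arc of $G$ is \emph{not} in general determined by its target and color --- but the implication you actually need, namely that a repeated arc forces a repeated configuration, is immediate, so the argument stands.)
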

\begin{proof}
  This can be done using a breadth-first search. Indeed, a BFS can be used to
  find PC walks of minimum length since the notion of PC walk, unlike that of PC
  path or trail, is \enquote{history-free} (paths and trails involve a global
  non-repetition constraint, while the constraints on PC walks are purely
  local). By the previous proposition, since this PC walk is minimal, it is a
  trail.
\end{proof}

\begin{remark}
  \label{rem:reload}
  Let us compare this with the polynomial time algorithm given in
  \cite[Theorem~1]{gourves_complexity_2013} (whose statement does not give a
  precise exponent). That algorithm ultimately proceeds by the following
  sequence of reductions:
  \[ \text{PC directed trail} \longrightarrow \text{minimum reload+weight
      directed trail} \longrightarrow \text{shortest weighted path/trail} \]
  The second reduction is treated in \cite[Proposition~1]{amaldi_minimum_2011}
  and involves a potentially quadratic blowup of the instance size. So the
  bounds that we can infer for the algorithm of
  \cite[Theorem~1]{gourves_complexity_2013} are quadratic -- worse than our
  linear time result.

  Note by the way that in the end, this algorithm relies on solving the shortest
  path problem via e.g.\ the classical Dijkstra algorithm, which is much simpler
  than the refinement of Edmonds's blossom algorithm used for most of our other
  results. This supports our idea that the problem is simple enough to be
  tackled by a mere breadth-first search.
\end{remark}

\subsection{Alternating circuits and directed paths}

Concerning properly colored directed paths, a NP-completeness result is known
even with only 2 colors and an acyclicity condition:
\begin{theorem}[{\cite[Theorem~5]{gourves_complexity_2013}}]
  Deciding whether a 2-arc-colored digraph contains a properly colored path
  between two given vertices is $\mathsf{NP}$-complete, even when the input is
  restricted to digraphs with no properly colored circuit.
\end{theorem}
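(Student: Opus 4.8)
This NP-completeness is the one established in \cite{gourves_complexity_2013}; the way I would prove it is as follows. Membership in $\mathsf{NP}$ is immediate: a certificate is the sequence of arcs of the path, and checking that it has no repeated vertex and that consecutive arcs differ in color is a linear-time test. For hardness I would reduce from $3$-\textsc{Sat}. Given a formula $\varphi$ with variables $x_1,\dots,x_n$ and clauses $C_1,\dots,C_m$, I would build a $2$-arc-colored digraph $D_\varphi$ with two distinguished vertices $s,t$ such that $D_\varphi$ admits a properly colored directed path from $s$ to $t$ if and only if $\varphi$ is satisfiable, and --- crucially --- such that $D_\varphi$ contains no properly colored circuit.

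The skeleton of $D_\varphi$ would be a series of ``layers'': $s$, then one variable gadget $\mathrm{VAR}_i$ per variable, then one clause gadget $\mathrm{CLS}_j$ per clause, then $t$, wired so that every $s$--$t$ path runs through $\mathrm{VAR}_1,\dots,\mathrm{VAR}_n$ and then $\mathrm{CLS}_1,\dots,\mathrm{CLS}_m$ in order. Each $\mathrm{VAR}_i$ would offer exactly two internally disjoint alternating sub-paths from its entry to its exit, a ``true'' branch $\pi_i^{+}$ and a ``false'' branch $\pi_i^{-}$; picking one commits the path to a truth value for $x_i$. Each $\mathrm{CLS}_j$ would offer one sub-path per literal of $C_j$. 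Satisfaction and consistency would be coupled through shared ``interlock'' vertices $z_{i,j}$: the branch of $\mathrm{VAR}_i$ that \emph{falsifies} the literal of $x_i$ occurring in $C_j$ would be routed through $z_{i,j}$, and so would the sub-path of $\mathrm{CLS}_j$ labelled by that literal; since a path cannot revisit $z_{i,j}$, the $\mathrm{CLS}_j$ sub-path for a literal would be usable precisely when that literal has been made true, so that all clause gadgets can be traversed in turn exactly when $\varphi$ is satisfiable. Finally, every sub-path would be padded with extra internal vertices so that it has a prescribed length parity and prescribed first/last colors; this is what makes the alternation of the two colors line up across all gadget boundaries, and is the reason two colors suffice.

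The delicate point --- and the one I expect to be the main obstacle --- is ensuring that $D_\varphi$ has no properly colored circuit. Away from the interlock vertices the construction is acyclic (all other arcs go strictly forward in the layering), but each $z_{i,j}$ creates directed cycles, since an arc of the $\mathrm{CLS}_j$ sub-path leaving $z_{i,j}$ re-enters $\mathrm{VAR}_i$, which can then reach $\mathrm{CLS}_j$ again along the forward layering. The plan would be to fix, by a single uniform local rule, the colors of the two arcs of the clause branch incident to $z_{i,j}$ relative to those of the variable branch incident to $z_{i,j}$, in such a way that every such stray directed cycle is forced to repeat a color at one of its vertices while all of the intended PC sub-paths stay properly colored. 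Checking that one rule can do both jobs at once is the heart of the argument; the rest (parity bookkeeping, verifying the iff, and that the reduction is polynomial) is routine. One could alternatively try to reduce from the undirected ``properly colored path through a prescribed vertex'' problem recalled above (\cite{chou_paths_1994}), but passing to digraphs and re-establishing the no-PC-circuit condition seems to require essentially the same care, so a direct $3$-\textsc{Sat} reduction looks cleanest.
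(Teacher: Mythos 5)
First, note that the paper does not prove this statement at all: it is imported verbatim as \cite[Theorem~5]{gourves_complexity_2013}, so there is no internal proof to compare yours against. Judged on its own terms, your proposal is a proof \emph{plan}, not a proof, and the gap sits exactly where you yourself locate it. The entire content of the theorem --- and the only reason the paper needs this particular formulation rather than plain NP-hardness of PC directed paths --- is the clause ``even when the input has no properly colored circuit''; it is what makes \cref{cor:pc-circuit} and hence \cref{thm:alt-circuit-np} go through. Your sketch defers precisely the verification of that clause to ``a single uniform local rule'' at the interlock vertices whose existence you do not establish. This is not routine: a stray directed circuit in your layered construction can thread through several interlock vertices $z_{i,j}$, mixing forward arcs from different variable and clause gadgets with the backward re-entries you describe, so a purely local color constraint at each $z_{i,j}$ must be backed by a global invariant (a parity or potential argument along the layering) showing that \emph{every} closed properly colored walk is forced to repeat a color somewhere. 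With only two colors this interacts delicately with the parity padding you invoke to make alternation line up at gadget boundaries; it is entirely possible to choose paddings that make the intended paths work and simultaneously create an even-length properly colored circuit. Until the gadgets and their colorings are written down explicitly and all circuit types are ruled out, the reduction is not established.

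A secondary, smaller issue: you assert that each clause sub-path is ``usable precisely when that literal has been made true'' because the path cannot revisit $z_{i,j}$, but the converse direction of the iff (a satisfying assignment yields a properly colored $s$--$t$ path) also requires checking that the chosen variable branches and clause branches are vertex-disjoint \emph{and} color-compatible at every junction; this again depends on the unspecified parity bookkeeping. In short, the architecture is plausible and standard, but the proof is not there yet. If you want a complete argument you must either carry out the gadget construction in full (as Gourv\`es et al.\ do) or reduce from a problem where the acyclicity condition comes for free.
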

We can deduce NP-hardness for PC circuits (directed cycles without vertex
repetitions).
\begin{corollary}
  \label{cor:pc-circuit}
  Finding a properly colored circuit in a 2-arc-colored digraph is
  $\mathsf{NP}$-complete.
\end{corollary}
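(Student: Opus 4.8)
The plan is to reduce from the NP-complete problem of Theorem~\ref{gourves_complexity_2013} (existence of a properly colored directed path between two prescribed vertices $s$ and $t$ in a 2-arc-colored digraph $D$ with no properly colored circuit) to the problem of finding a properly colored circuit in a 2-arc-colored digraph. The standard trick is to close up the path into a cycle by adding a fresh arc from $t$ back to $s$ (or a short fresh directed path through a new vertex), colored so that it is compatible with whatever arcs could be incident to $s$ and $t$ in a putative PC $s$--$t$ path.

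\medskip

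First I would take the input digraph $D$ with coloring $c : A(D) \to \{1,2\}$ and build $D'$ as follows: add a new vertex $x$, an arc $(t,x)$ and an arc $(x,s)$. I then need to assign colors to these two new arcs so that (i)~the transition $t \to x \to s$ is proper, i.e.\ the two new arcs get different colors, and (ii)~no matter which color the last arc of a PC $s$--$t$ path in $D$ has, the arc $(t,x)$ differs from it, and symmetrically the arc $(x,s)$ differs from the first arc of the path. With only two colors this forces both the first and last arcs of the path, which I cannot control in general. The fix is to use \emph{two} new intermediate vertices rather than one: add $x_1, x_2$ and arcs $(t,x_1)$, $(x_1,x_2)$, $(x_2,s)$, colored $1,2,1$ say. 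Then the transition out of $t$ into $x_1$ needs $c$ of the last arc of the path to be $2$; so instead I should add \emph{both} possible ``exit gadgets'' — but a cleaner route is to first note that we may assume WLOG, by subdividing, that $D$ has a designated first arc at $s$ and last arc at $t$ of a known color, or simply to attach at $s$ a fresh pendant in-vertex and at $t$ a fresh pendant out-vertex whose unique arcs have fixed colors, so that a PC path is forced to begin and end with prescribed colors; this modification does not change the answer since the new vertices have degree one in the relevant direction and no PC circuit is created.

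\medskip

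The key steps in order: (1)~pre-process $D$ by adding a source-stub and target-stub so that any PC $s$--$t$ walk is forced to start with an arc of color, say, $1$ and end with an arc of color $1$, while checking this introduces no PC circuit and preserves the yes/no answer; (2)~add the closing directed path $s' \to \dots \to t'$ of fresh vertices back from the target-stub end to the source-stub end, with a coloring chosen (using the now-known boundary colors) so that the two new transitions at the junctions are proper; (3)~argue correctness: a PC $s$--$t$ path in $D$ yields a PC circuit in $D'$ by closing it up through the new arcs, and conversely any PC circuit in $D'$ must use all the new arcs (they form the only arcs incident to the fresh vertices and the fresh vertices have in/out-degree one), hence induces a PC $s$--$t$ path in $D$; in particular $D'$ has a PC circuit iff $D$ has a PC $s$--$t$ path; (4)~conclude NP-completeness, membership in NP being trivial (guess the circuit). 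The last-mentioned parenthetical in the excerpt (the corollary for directed paths) then follows by the same closing-up construction applied to the path problem directly, or is immediate.

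\medskip

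The main obstacle is step~(2), the bookkeeping of colors at the two junction vertices: with only two colors available one has essentially no freedom, so the construction only works once the boundary colors at $s$ and $t$ have been pinned down, which is exactly what step~(1) arranges. I expect step~(1) to require a small but careful argument that forcing the boundary colors does not accidentally create a PC circuit (it cannot, since the added stub vertices are sources/sinks in $D'$ before the closing arcs are added, and after adding the closing arcs any circuit through them is the closed-up path, handled in step~(3)). Everything else is routine.
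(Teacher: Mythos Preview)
Your overall plan matches the paper's: reduce from the properly colored directed $s$--$t$ path problem (on inputs with no PC circuit) by closing the path into a circuit via a fresh $t$-to-$s$ gadget, and use the acyclicity hypothesis on $D$ to argue that any PC circuit in the new digraph must run through the gadget and hence restrict to a PC $s$--$t$ path in $D$. The paper handles the junction colors differently, though: instead of first pinning down the boundary colors, it glues an acyclic gadget between $t$ and $s$ that supplies a properly colored $t$-to-$s$ path for \emph{every} combination of starting and ending color, so that any PC $s$--$t$ path in $D$ can be closed up regardless of its first and last arc colors.

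Your Step~(1), as stated, has a real gap. Attaching a pendant in-vertex $s'$ with a single arc $(s',s)$ of a fixed color $c$ and taking $s'$ as the new source does \emph{not} preserve the yes/no answer: any PC $s'$--$t'$ path is forced to leave $s$ on an arc of color $3-c$, but the original instance may admit only PC $s$--$t$ paths whose first arc has color $c$. The justification you offer (``degree one in the relevant direction and no PC circuit is created'') establishes only that no spurious circuit appears; it says nothing about preservation of path existence. The same objection applies to the ``WLOG by subdividing'' alternative, which does not by itself determine the color of the first arc out of $s$. The defect is repairable --- for instance with a small branching gadget at $s'$ that can reach $s$ ending in either color, or simply by adopting the paper's flexible closing gadget in place of Steps~(1)--(2) --- and once this is fixed your Steps~(3)--(4) go through as written.
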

\begin{proof}[Proof sketch]
  We prove NP-hardness by reduction from the previous problem (while membership
  in NP is trivial). To any 2-arc-colored digraph with designated source $s$ and
  designated target $t$, glue an acyclic gadget to $s$ and $t$ to add a properly
  arc-colored path from $t$ to $s$ with any starting color and any ending color.
  If the original digraph had no properly colored circuit, then the new one
  admits a properly colored circuit if and only if there was a properly colored
  path from $s$ to $t$ in the original, whose concatenation with a new path from
  $t$ to $s$ results in a circuit.
\end{proof}

Our goal here will be to show that a special case of this problem -- the
alternating circuit problem -- is already NP-complete. To define this restricted
problem, we must first explain what our notion of perfect matching is in the
setting of digraphs.

\begin{definition}
  A \emph{perfect matching} $M$ of a directed graph is a subset of arcs such
  that:
  \begin{itemize}
  \item any vertex $u \in V$ has exactly one outgoing arc in $M$ and exactly one
    incoming arc in $M$ (i.e.\ there is exactly one pair $(v,w) \in V^2$ such
    that $(u,v) \in M$ and $(w,u) \in M$);
  \item for all $u,v \in V$, $(u,v) \in M \iff (v,u) \in M$ -- morally, $M$
    consists of undirected edges.
  \end{itemize}
  \emph{Alternating paths} and \emph{alternating circuits} are defined as
  expected.
\end{definition}

\begin{theorem}
  \label{thm:alt-circuit-np}
  Detecting alternating circuits for perfect matchings in digraphs is
  NP-complete.
\end{theorem}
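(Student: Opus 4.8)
The plan is to prove NP-hardness by a polynomial-time --- in fact linear-time --- reduction from the problem of \cref{cor:pc-circuit}, namely deciding whether a $2$-arc-colored digraph contains a properly colored circuit; membership in NP is immediate, since an alternating circuit is a polynomial-size certificate checkable in polynomial time. Given a $2$-arc-colored digraph $G$ with colors $\{1,2\}$, I would construct a digraph $G'$ equipped with a perfect matching $M$ (in the above sense of a perfect matching of a digraph) as follows: for each vertex $v$ of $G$, put two vertices $v_1,v_2$ into $G'$ and put both arcs $(v_1,v_2)$ and $(v_2,v_1)$ into $M$ --- so $M$ is symmetric and each vertex of $G'$ has exactly one incoming and one outgoing $M$-arc, hence $M$ is a legitimate perfect matching; then, for each arc $a=(u,v)$ of $G$ of color $c$ (discarding self-loops, which cannot lie on a properly colored circuit), add the non-matching arc $(u_c,v_c)$ to $G'$. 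Clearly $|G'|=O(|G|)$.

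The core of the argument is that $G'$ admits an alternating circuit for $M$ if and only if $G$ admits a properly colored circuit. For the forward direction, one traverses an alternating circuit of $G'$: each non-matching step $(u_c,v_c)$ records the arc $a=(u,v)$ of color $c$ of $G$, while each matching step $(w_i,w_{3-i})$ merely toggles the \enquote{active color} at the vertex $w$ of $G$ from $i$ to $3-i$. Because the circuit alternates matching and non-matching arcs and only two colors exist, consecutive recorded arcs of $G$ --- and likewise the wrap-around pair --- have distinct colors, so we obtain a properly colored closed walk $C$ in $G$; and since every pass of $C$ through a vertex $w$ enters one of $w_1,w_2$ along a non-matching arc, crosses the matching arc, and leaves the other along a non-matching arc, a single pass consumes \emph{both} copies of $w$, so the fact that the circuit of $G'$ repeats no vertex forces $C$ to repeat no vertex of $G$: $C$ is a properly colored circuit. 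Conversely, a properly colored circuit $v^{(0)}\xrightarrow{a_1}v^{(1)}\xrightarrow{a_2}\cdots\xrightarrow{a_k}v^{(0)}$ of $G$ with $a_i$ of color $c_i$ (indices mod $k$) lifts to the closed walk of $G'$ that alternates the lifted arc $(v^{(i-1)}_{c_i},v^{(i)}_{c_i})$ of $a_i$ with the matching arc $(v^{(i)}_{c_i},v^{(i)}_{c_{i+1}})$ --- which lies in $M$ precisely because $c_i\neq c_{i+1}$ and there are only two colors --- and this is readily checked to be an alternating circuit, its $2k$ vertices $v^{(i)}_1,v^{(i)}_2$ being pairwise distinct.

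The step I expect to demand the most care is exactly the \enquote{$C$ repeats no vertex of $G$} claim in the forward direction, because it is what upgrades the conclusion from \enquote{properly colored closed \emph{trail}} to \enquote{properly colored \emph{circuit}}: properly colored directed closed trails are in fact easy to find (detect a non-trivial closed walk in the associated color-expanded digraph, then shortcut to a trail, in the spirit of the proof of \cref{thm:lolbfs}), so without this upgrade the reduction would be vacuous. The design choice that makes it go through is having two vertices \emph{per vertex of $G$} --- rather than per arc, as a line-graph-style construction would suggest --- so that traversing a single $M$-edge at $w$ uses up both copies of $w$ at once. The remaining checks --- that self-loops, parallel arcs, and vertices of $G$ incident to arcs of only one color do not introduce spurious alternating circuits in $G'$ --- are routine, and I would handle them briefly.
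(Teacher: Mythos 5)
Your proposal is correct and is essentially the paper's own proof: the paper reduces from \cref{cor:pc-circuit} by ``adapting in the obvious way'' Edmonds's reduction from 2-edge-colored graphs to graphs with perfect matchings (cf.\ \cref{sec:locally} and \cref{fig:directed-reduction}), and your construction --- two copies $v_1,v_2$ per vertex joined by the symmetric pair of matching arcs, with each color-$c$ arc $(u,v)$ lifted to $(u_c,v_c)$ --- is exactly that adaptation. You merely spell out the correctness argument that the paper leaves implicit, and your identification of the ``no repeated vertex'' upgrade as the point needing care is apt.
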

\begin{proof}
  It suffices to adapt in the obvious way Edmonds's reduction from
  2-edge-colored graphs to undirected graphs equipped with perfect matchings --
  recall that in \cref{sec:locally}, we presented and generalized this
  reduction. See \cref{fig:directed-reduction} for an example.

  As previously mentioned, we propose in~\cite{nguyen_proof_2020} an alternative
  proof which, instead of going through \cref{cor:pc-circuit}, proceeds by
  direct reduction from CNF-SAT.
\end{proof}

\begin{figure}
  \centering
    \begin{tikzpicture}
      \node[vertex] (w) at (0,2) {};
      \node[vertex] (x) at (4,2) {};
      \node[vertex] (y) at (0,0) {};
      \node[vertex] (z) at (4,0) {};
      \node[vertex] (o) at (2,1) {};
      \draw[lavenderindigo,thick,->] (x) -- (z);
      \draw[amber,thick,->] (x) -- (o);
      \draw[amber,thick,<-] (z) -- (o);
      \draw[lavenderindigo,thick,->] (w) -- (o);
      \draw[lavenderindigo,thick,<-] (y) -- (o);
    \end{tikzpicture}\quad\quad\quad\quad\begin{tikzpicture}
      \node[vertex,lavenderindigo] (w) at (1,2.5) {};
      \node[vertex,lavenderindigo] (x) at (5,2.5) {};
      \node[vertex,lavenderindigo] (y) at (1,1.5) {};
      \node[vertex,lavenderindigo] (z) at (5,1.5) {};
      \node[vertex,lavenderindigo] (o) at (3,2) {};

      \node[vertex,amber] (w') at (0,1) {};
      \node[vertex,amber] (x') at (4,1) {};
      \node[vertex,amber] (y') at (0,0) {};
      \node[vertex,amber] (z') at (4,0) {};
      \node[vertex,amber] (o') at (2,0.5) {};

      \draw[matching edge] (x) -- (x');
      \draw[matching edge] (y) -- (y');
      \draw[matching edge] (z) -- (z');
      \draw[matching edge] (w) -- (w');
      \draw[matching edge] (o) -- (o');

      \draw[non matching edge, thick, ->] (x) -- (z);
      \draw[non matching edge, thick, ->] (x') -- (o');
      \draw[non matching edge, thick, <-] (z') -- (o');
      \draw[non matching edge, thick, ->] (w) -- (o);
      \draw[non matching edge, thick, <-] (y) -- (o);
    \end{tikzpicture}
    \caption{A 2-arc-colored digraph and its translation into a digraph with a
      perfect matching.}
    \label{fig:directed-reduction}
\end{figure}
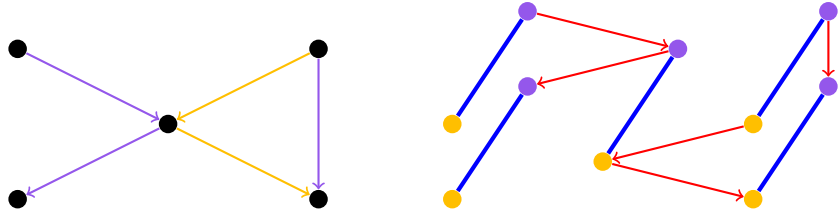

As we said in the introduction, while this theorem is not difficult and may
seem anecdotical, it has a potentially significant application in
logic~\cite{nguyen_proof_2020}. For this application, it is the hardness result
for circuits that is important. That said, for the sake of completeness, one can
also establish NP-completeness for alternating directed paths.

\begin{corollary}
  Finding an alternating directed path between two given vertices is
  NP-complete.
\end{corollary}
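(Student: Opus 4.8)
The plan is to prove NP-hardness by a polynomial-time many-one reduction from the NP-completeness of properly colored directed paths in $2$-arc-colored digraphs \cite[Theorem~5]{gourves_complexity_2013} (membership in NP is trivial, as always here). The reduction mirrors Edmonds's reduction from $2$-colored graphs to matchings, adapted to digraphs exactly as in the proof of \cref{thm:alt-circuit-np}, but with extra endpoint gadgets. Throughout, for a color $i \in \{1,2\}$ I write $\overline{\imath}$ for the other color.

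Given a $2$-arc-colored digraph $G$ with colors $\{1,2\}$ and two distinguished vertices $s \neq t$ (the case $s = t$ being trivial), I would build a digraph with perfect matching $(G',M)$ as follows. Split each vertex $w$ of $G$ into two vertices $w^1, w^2$ joined by a matching edge -- so both arcs $(w^1, w^2)$ and $(w^2, w^1)$ are put into $M$ -- and replace each color-$i$ arc $(w,w')$ of $G$ by a non-matching arc $(w^i, (w')^i)$; this is precisely the encoding behind \cref{thm:alt-circuit-np}. Then add four fresh vertices $S, S', T, T'$, the matching edges $SS'$ and $TT'$ (so that $M$ remains a perfect matching of $G'$), and the non-matching arcs $(S, s^1), (S, s^2), (t^1, T)$ and $(t^2, T)$. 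The output is $(G',M)$ with source $S$ and target $T$; it is computed in linear time.

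The heart of the argument is the equivalence: $G$ has a properly colored directed path from $s$ to $t$ if and only if $G'$ has an alternating directed path from $S$ to $T$. The idea to formalize is that a matching step out of $w^i$ can only land on $w^{\overline{\imath}}$, so it acts as a \emph{forced change of color}; hence alternating between matching and non-matching arcs in $G'$ exactly mimics the consecutive-distinct-colors condition in $G$. The arcs leaving $S$ (resp.\ entering $T$) let the simulated path in $G$ start (resp.\ end) with an arc of either color, while the dead-end vertices $S', T'$, incident only to matching arcs, force any alternating $S$--$T$ path to begin and end with a non-matching arc, so that it genuinely enters $G$ at $s$ and leaves at $t$. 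I would verify the equivalence by translating a candidate path arc by arc, in both directions.

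The step I expect to need the most care is transferring the \emph{vertex}-non-repetition constraint -- i.e.\ checking that alternating \emph{paths} of $G'$ project onto directed \emph{paths} of $G$, not merely onto trails or walks. The key point is that on an alternating path of $G'$, since $w^i$ is never an endpoint it must be left by a matching arc, forcing $w^{\overline{\imath}}$ to be visited immediately after $w^i$; and the pair $\{w^1, w^2\}$ cannot be visited twice, so each vertex of $G$ is used at most once by the projected walk, which is therefore a path. Conversely, a properly colored path of $G$ lifts to an alternating path of $G'$ that visits, consecutively, both copies $w^1, w^2$ of each of its internal vertices, and all these copies are distinct. Everything else -- the alternation bookkeeping at the four gadget vertices, and the fact that $G'$ is a legitimate digraph with a perfect matching -- is routine; I would also note in passing that starting from a $G$ with no properly colored circuit yields a $G'$ with no alternating circuit, so the reduction even gives hardness of alternating directed paths restricted to digraphs without alternating circuits.
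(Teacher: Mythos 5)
Your reduction is correct, and it is worth pointing out that the paper does not actually contain a proof of this corollary: its ``proof'' stops at the observation that an oracle for alternating directed paths obviously yields alternating circuits, and leaves the required many-one reduction as an explicit TODO. So you are not just taking a different route from the paper --- you are supplying an argument where the paper has none. The route is genuinely different, too: the paper intends to reduce from alternating \emph{circuits} (\cref{thm:alt-circuit-np}), which is awkward as a many-one reduction because one does not know in advance which matching edge a hypothetical circuit crosses (splitting each of the $n/2$ matching edges in turn gives a disjunction of path instances, i.e.\ a Turing reduction, not a single instance). You instead reduce directly from properly colored directed paths in $2$-arc-colored digraphs \cite[Theorem~5]{gourves_complexity_2013}, reusing the same Edmonds-style vertex-splitting encoding as in \cref{thm:alt-circuit-np} and adding matched source/target gadgets $S,S'$ and $T,T'$ whose dead-end partners force the alternating path to enter and leave the encoded part through $s$ and $t$ with a non-matching arc. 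All the delicate points check out: the matching step $w^i \to w^{\overline{\imath}}$ enforces the color change (here it matters that there are only two colors, so ``different color'' coincides with ``the other color,'' which is exactly what the gadget simulates); the pairing argument shows that the two copies of a vertex are always visited consecutively, so vertex-disjointness transfers in both directions; and no alternating circuit can pass through $S,S',T,T'$, so the no-PC-circuit restriction in the source problem carries over to a no-alternating-circuit restriction in the target, a strengthening the paper's intended reduction from circuits could not give.
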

\begin{proof}
  Intuitively speaking, it is obvious that an oracle for alternating directed
  paths can be used to find alternating circuits. But to satisfy the definition
  of NP-hardness, we must exhibit a proper many-one reduction on instances.
  TODO.
\end{proof}

\bibliographystyle{alpha}
\bibliography{/home/tito/export.bib}

\newcommand{\etalchar}[1]{$^{#1}$}
\begin{thebibliography}{CMM{\etalchar{+}}94}

\bibitem[ABS91]{alon_multicolored_1991}
Noga Alon, Richard~A Brualdi, and Bryan~L Shader.
\newblock Multicolored forests in bipartite decompositions of graphs.
\newblock {\em Journal of Combinatorial Theory, Series B}, 53(1):143--148,
  September 1991.

\bibitem[ADF{\etalchar{+}}08]{abouelaoualim_paths_2008}
A.~Abouelaoualim, K.~Ch. Das, L.~Faria, Y.~Manoussakis, C.~Martinhon, and
  R.~Saad.
\newblock Paths and trails in edge-colored graphs.
\newblock {\em Theoretical Computer Science}, 409(3):497--510, December 2008.

\bibitem[AGM11]{amaldi_minimum_2011}
Edoardo Amaldi, Giulia Galbiati, and Francesco Maffioli.
\newblock On minimum reload cost paths, tours, and flows.
\newblock {\em Networks}, 57(3):254--260, May 2011.

\bibitem[BCHP08]{bretscher_simple_2008}
A.~Bretscher, D.~Corneil, M.~Habib, and C.~Paul.
\newblock A {Simple} {Linear} {Time} {LexBFS} {Cograph} {Recognition}
  {Algorithm}.
\newblock {\em SIAM Journal on Discrete Mathematics}, 22(4):1277--1296, January
  2008.

\bibitem[Bel97]{bellin_subnets_1997}
Gianluigi Bellin.
\newblock Subnets of proof-nets in multiplicative linear logic with {MIX}.
\newblock {\em Mathematical Structures in Computer Science}, 7(6):663--669,
  December 1997.

\bibitem[Ber57]{berge_two_1957}
Claude Berge.
\newblock Two {Theorems} in {Graph} {Theory}.
\newblock {\em Proceedings of the National Academy of Sciences},
  43(9):842--844, September 1957.

\bibitem[BJG09]{bang-jensen_digraphs._2009}
Jørgen Bang-Jensen and Gregory Gutin.
\newblock {\em Digraphs. {Theory}, algorithms and applications. 2nd ed.}
\newblock London: Springer, 2nd ed. edition, 2009.

\bibitem[CFMY11]{chakraborty_hardness_2011}
Sourav Chakraborty, Eldar Fischer, Arie Matsliah, and Raphael Yuster.
\newblock Hardness and algorithms for rainbow connection.
\newblock {\em Journal of Combinatorial Optimization}, 21(3):330--347, April
  2011.

\bibitem[CJMZ08]{chartrand_rainbow_2008}
Gary Chartrand, Garry~L. Johns, Kathleen~A. McKeon, and Ping Zhang.
\newblock Rainbow connection in graphs.
\newblock {\em Mathematica Bohemica}, 133(1):85--98, 2008.

\bibitem[CLB81]{corneil_complement_1981}
D.~G. Corneil, H.~Lerchs, and L.~Stewart Burlingham.
\newblock Complement reducible graphs.
\newblock {\em Discrete Applied Mathematics}, 3(3):163--174, July 1981.

\bibitem[CMM{\etalchar{+}}94]{chou_paths_1994}
W.~S. Chou, Y.~Manoussakis, O.~Megalakaki, M.~Spyratos, and Zs. Tuza.
\newblock Paths through fixed vertices in edge-colored graphs.
\newblock {\em Mathématiques et Sciences Humaines}, 127:49--58, 1994.

\bibitem[CPS85]{corneil_linear_1985}
D.~Corneil, Y.~Perl, and L.~Stewart.
\newblock A {Linear} {Recognition} {Algorithm} for {Cographs}.
\newblock {\em SIAM Journal on Computing}, 14(4):926--934, November 1985.

\bibitem[Edm65]{edmonds_paths_1965}
Jack Edmonds.
\newblock Paths, trees, and flowers.
\newblock {\em Canadian Journal of Mathematics}, 17(0):449--467, January 1965.

\bibitem[Fle90]{fleischner_eulerian_1990}
Herbert Fleischner, editor.
\newblock {\em Eulerian {Graphs} and {Related} {Topics} ({Part} 1, {Volume}
  1)}, volume~45 of {\em Annals of {Discrete} {Mathematics}}.
\newblock Elsevier, 1990.

\bibitem[GH83]{grossman_alternating_1983}
Jerrold~W Grossman and Roland Häggkvist.
\newblock Alternating cycles in edge-partitioned graphs.
\newblock {\em Journal of Combinatorial Theory, Series B}, 34(1):77--81,
  February 1983.

\bibitem[Gir87]{girard_linear_1987}
Jean-Yves Girard.
\newblock Linear logic.
\newblock {\em Theoretical Computer Science}, 50(1):1--101, January 1987.

\bibitem[GK09]{gutin_properly_2009}
Gregory Gutin and Eun~Jung Kim.
\newblock Properly {Coloured} {Cycles} and {Paths}: {Results} and {Open}
  {Problems}.
\newblock In {\em Graph {Theory}, {Computational} {Intelligence} and {Thought},
  {Essays} {Dedicated} to {Martin} {Charles} {Golumbic} on the {Occasion} of
  {His} 60th {Birthday}}, pages 200--208, 2009.

\bibitem[GKT01]{gabow_unique_2001}
Harold~N. Gabow, Haim Kaplan, and Robert~E. Tarjan.
\newblock Unique maximum matching algorithms.
\newblock {\em Journal of Algorithms}, 40(2):159--183, August 2001.

\bibitem[GLMM13]{gourves_complexity_2013}
Laurent Gourvès, Adria Lyra, Carlos~A. Martinhon, and Jérôme Monnot.
\newblock Complexity of trails, paths and circuits in arc-colored digraphs.
\newblock {\em Discrete Applied Mathematics}, 161(6):819--828, April 2013.

\bibitem[GP71]{graham_addressing_1971}
R.~L. Graham and H.~O. Pollak.
\newblock On the {Addressing} {Problem} for {Loop} {Switching}.
\newblock {\em Bell System Technical Journal}, 50(8):2495--2519, October 1971.

\bibitem[GT85]{gabow_linear-time_1985}
Harold~N. Gabow and Robert~Endre Tarjan.
\newblock A linear-time algorithm for a special case of disjoint set union.
\newblock {\em Journal of Computer and System Sciences}, 30(2):209--221, April
  1985.

\bibitem[Gug07]{guglielmi_system_2007}
Alessio Guglielmi.
\newblock A system of interaction and structure.
\newblock {\em ACM Transactions on Computational Logic}, 8(1):1--es, January
  2007.

\bibitem[HRT18]{holm_dynamic_2018}
Jacob Holm, Eva Rotenberg, and Mikkel Thorup.
\newblock Dynamic bridge-finding in $\widetilde{O}(\log^2 n)$ amortized time.
\newblock In {\em Proceedings of the {Twenty}-{Ninth} {Annual} {ACM}-{SIAM}
  {Symposium} on {Discrete} {Algorithms}}, Proceedings, pages 35--52. Society
  for Industrial and Applied Mathematics, January 2018.

\bibitem[KL16]{kowalik_finding_2016}
Łukasz Kowalik and Juho Lauri.
\newblock On finding rainbow and colorful paths.
\newblock {\em Theoretical Computer Science}, 628:110--114, May 2016.

\bibitem[Kot59]{kotzig_z_1959}
Anton Kotzig.
\newblock Z teórie konečných grafov s lineárnym faktorom. {II}.
\newblock {\em Matematicko-fyzikálny časopis}, 09(3):136--159, 1959.

\bibitem[Lau16]{lauri_chasing_2016}
Juho Lauri.
\newblock {\em Chasing the {Rainbow} {Connection}: {Hardness}, {Algorithms},
  and {Bounds}}.
\newblock PhD thesis, Tampere University of Technology, November 2016.

\bibitem[Man95]{manoussakis_alternating_1995}
Yannis Manoussakis.
\newblock Alternating paths in edge-colored complete graphs.
\newblock {\em Discrete Applied Mathematics}, 56(2):297--309, January 1995.

\bibitem[Ngu18]{nguyen_unique_2018}
Lê Thành~Dũng Nguyên.
\newblock Unique perfect matchings and proof nets.
\newblock In Hélène Kirchner, editor, {\em 3rd {International} {Conference}
  on {Formal} {Structures} for {Computation} and {Deduction} ({FSCD} 2018)},
  volume 108 of {\em Leibniz {International} {Proceedings} in {Informatics}
  ({LIPIcs})}, pages 25:1--25:20, Dagstuhl, Germany, 2018. Schloss
  Dagstuhl–Leibniz-Zentrum fuer Informatik.

\bibitem[Ngu20]{nguyen_proof_2020}
Lê Thành~Dũng Nguyên.
\newblock Proof nets through the lens of graph theory: a compilation of
  remarks.
\newblock {\em CoRR}, abs/1912.10606, 2020.

\bibitem[Ret93]{retore_reseaux_1993}
Christian Retoré.
\newblock {\em Réseaux et séquents ordonnés}.
\newblock PhD thesis, Université Paris-Diderot - Paris VII, February 1993.

\bibitem[Ret03]{retore_handsome_2003}
Christian Retoré.
\newblock Handsome proof-nets: perfect matchings and cographs.
\newblock {\em Theoretical Computer Science}, 294(3):473--488, February 2003.

\bibitem[Sze03]{szeider_finding_2003}
Stefan Szeider.
\newblock Finding paths in graphs avoiding forbidden transitions.
\newblock {\em Discrete Applied Mathematics}, 126(2-3):261--273, 2003.

\bibitem[Sze04]{szeider_theorems_2004}
Stefan Szeider.
\newblock On theorems equivalent with {Kotzig}'s result on graphs with unique
  1-factors.
\newblock {\em Ars Combinatoria}, 73:53--64, 2004.

\bibitem[Tar83]{tarjan_data_1983}
Robert~Endre Tarjan.
\newblock {\em Data {Structures} and {Network} {Algorithms}}.
\newblock Society for Industrial and Applied Mathematics, Philadelphia, PA,
  USA, 1983.

\bibitem[Yeo97]{yeo_note_1997}
Anders Yeo.
\newblock A note on alternating cycles in edge-coloured graphs.
\newblock {\em Journal of Combinatorial Theory, Series B}, 69(2):222--225,
  March 1997.

\end{thebibliography}

\end{document}